\newtheorem{lemma}{\textit{Lemma}}
\newtheorem{cor}{\textit{Corollary}}
\newtheorem{theorem}{\textit{Theorem}}
\newtheorem{proposition}{\textit{Proposition}}
\newenvironment{remark}[1][Remark]{\begin{trivlist}
\item[\hskip \labelsep {\bfseries #1}]}{\end{trivlist}}
\begin{document}
\title {A Unified Framework for Problems on Guessing, Source Coding and Task Partitioning}

\author{\IEEEauthorblockN{M. Ashok Kumar\IEEEauthorrefmark{1}, Albert Sunny\IEEEauthorrefmark{1}, Ashish Thakre\IEEEauthorrefmark{2}, Ashisha Kumar\IEEEauthorrefmark{2}}
		
\thanks{\IEEEauthorrefmark{1}Indian Institute of Technology Palakkad, Kerala, India
Email: \{ashokm, albert\}@iitpkd.ac.in}
\thanks{\IEEEauthorrefmark{2}Indian Institute of Technology Indore, Madhya Pradesh, India
	Email: ashishblthakre10@gmail.com, akumar@iiti.ac.in}
}

\maketitle

\begin{abstract}
We study four problems namely, Campbell's source coding problem, Arikan's guessing problem, Huieihel et al.'s memoryless guessing problem, and Bunte and Lapidoth's task partitioning problem. We observe a close relationship among these problems. In all these problems, the objective is to minimize moments of some functions of random variables, and R\'enyi entropy and Sundaresan's divergence arise as optimal solutions. This motivates us to establish a connection among these four problems. In this paper, we study a more general problem and show that R\'{e}nyi and Shannon entropies arise as its solution. We show that the problems on source coding, guessing and task partitioning are particular instances of this general optimization problem, and derive the lower bounds using this framework. We also refine some known results and present new results for mismatched version of these problems using a unified approach. We strongly feel that this generalization would, in addition to help in understanding the similarities and distinctiveness of these problems, also help to solve any new problem that falls in this framework.
\end{abstract}

\begin{keywords}
Guessing, source coding, task partitioning, R\'enyi entropy, relative $\alpha$-R\'enyi entropy, Sundaresan's divergence
\end{keywords}	

\section{Introduction}
\label{sec:introduction}
Information Theory is the mathematical theory of communication
and was founded by Claude E. Shannon in his seminal paper \cite{Shannon}. The concept of entropy is most central to information theory. For a probability distribution $P = \{P(x) , x \in \mathcal{X}\}$ with $P(x) > 0$, the {\em entropy} of $P$ is defined as
\begin{equation}
\label{eqn:shannon_entropy}
H(P) := \sum \nolimits_{x \in \mathcal{X}} P(x) \log \frac{1}{P(x)}.
\end{equation}

Unless specifically stated, all the logarithms mentioned in the paper are to the base 2.

If $X$ is a random variable that follows $P$, the above quantity also refers to the entropy of $X$, and is denoted by $H(X)$. Entropy arises in various problems of information theory. In particular, Shannon showed that, in source coding, it is the expected code length (per letter) required to compress a source with alphabet set $\mathcal{X} = \{a_1,\ldots,a_M\}$ and probability distribution $P$. If the compressor does not know the true distribution $P$, but assumes a distribution $Q$, then the optimal expected code length (per letter) is $H(P) + I(P,Q)$, where
\begin{equation}
\label{eqn:Kullback}
I(P,Q) := \sum \nolimits_{x \in \mathcal {X}} P(x) \log\frac{P(x)}{Q(x)},
\end{equation}
is known as the \emph{entropy of $P$ relative to $Q$} or the \emph{Kullback-Leibler divergence}. $I(P,Q)$ can be interpreted as the penalty for not knowing the true distribution. 

In his seminal paper, Shannon also argued that entropy can be regarded as a measure of uncertainty. According to him, {\em ``if there is such a measure, say $H(p_1, p_2,\ldots, p_n)$, it is reasonable to require of it the following properties.
	\begin{enumerate}
		\item (Continuity) $H$ is continuous in $p_i$'s.
		
		\item (Monotonicity) If all the $p_i$ are equal, $p_i = 1/n$, then $H$ should be a monotonic increasing function of $n$. With equally likely events there is more choice, or uncertainty, when there are more possible events.
		
		\item (Strong additivity) If a choice be broken down into many successive choices, the original H should be the weighted sum of the individual values of H, that is, 
		\begin{align}
		&H(p_1 p_{11},\ldots,p_1 p_{1n}, \ldots, p_m p_{m1},\ldots,p_m p_{mn})  
		= H(p_1,\ldots,p_m) + \sum \nolimits_{i=1}^{m}p_i H(p_{i1}, p_{i2},\ldots, p_{in})." \label{eqn:strong_additive}
		\end{align}
		
	\end{enumerate}
} 

In terms of random variables, property 3 translates to, if $X$ and $Y$ are random variables, then $H(X,Y) = H(X) + H(Y|X)$, where $H(Y|X)$ is the conditional entropy of $Y$ given $X$. Indeed, Shannon proved that the only $H$ that satisfies the above three properties is the quantity given by Eqn.~\eqref{eqn:shannon_entropy} \cite[Th. 2]{Shannon}.

In 1961, R\'enyi \cite{Renyi} proved that, in the above, if one replaces property 3 by a weaker variant, namely
\begin{enumerate}
	\item[3'.] (Additivity)
	\begin{eqnarray}
	\label{eqn:additive}
	H(p_1 q_1,p_1 q_2,\ldots,p_1 q_n, \ldots, p_m q_1,p_m q_2,\ldots,p_m q_n) = H(p_1,\ldots,p_m) + H(q_1, \ldots, q_n),
	\end{eqnarray}
that is, $H(X,Y) = H(X) + H(Y)$ if $X$ and $Y$ are independent,
\end{enumerate}
then the quantities, now known as {\em R\'enyi entropy} of order $\alpha$, 
\begin{equation}
\label{eqn:renyi_entropy}
H_{\alpha}(P):=\frac{1}{1-\alpha}\log \sum \nolimits_{x \in \mathcal{X}} P(x)^{\alpha},
\end{equation}
where $\alpha >0$ and $\alpha \neq 1$, also satisfy properties 1, 2, and 3'. Refer Aczel and Daroczy \cite{Aczel} and the references therein for an extensive study of characterizations of information measures. By a simple application of the L'H\^ospital rule, one can show that $\lim_{\alpha\to1}H_{\alpha}(P) = H(P)$. Hence, R\'enyi entropy can be regarded as a generalization of the Shannon entropy.

In 1965, Campbell \cite{Campbell} gave an operational meaning to R\'enyi entropy. He showed that, instead of the expected code lengths, if one minimizes the cumulants of code lengths, then the optimal cumulant is the R\'enyi entropy. He also showed that the optimal cumulant can be achieved by encoding sufficiently long sequences of symbols. In 1988, Blumer and McElice \cite{BlumerMcElice} (c.f. Sundaresan \cite[Th. 8]{Sundaresan}) obtained the redundancy in cumulants in the mismatched case. Indeed, the optimal cumulant, assuming that the true underlying distribution is $Q$, when it is actually $P$, is $H_{\alpha}(P) + I_{\alpha}(P,Q)$, where 

{\small
\begin{align}
I_\alpha (P,Q) := 
\frac{\alpha}{1-\alpha}{\rm log}\Big ( \sum_ {x \in \mathcal{X}} P(x)
\Big [ & \sum_ {x' \in \mathcal{X}}
\Big ( \frac{Q(x')}{Q(x)}\Big )^\alpha \Big ]^\frac{1- \alpha}{\alpha}\Big ) - H_\alpha(P), \label{eqn:sundaresan_divergence}
\end{align}
}

\noindent
called {\em $\alpha$-entropy of $P$ relative to $Q$} or the {\em Sundaresan's} divergence \cite{Sundaresan02ISIT}. Thus, $I_{\alpha}(P,Q)$ is the penalty for not knowing the true distribution in the Campbell's coding problem. $I_{\alpha}(\cdot, \cdot)$ was also identified by Johnson and Vignat in the context of maximum R\'enyi entropy \cite{JohnsonV07J}. They called it {\em relative $\alpha$-R\'enyi entropy}. $I_{\alpha}(\cdot,\cdot)$ for $\alpha >1$ also arises in robust inference problems (see \cite{KumarS15J2} and the references therein).

In 1994, Massey \cite{Massey} studied a problem on guessing where one is interested in the expected number of guesses required to guess a random variable $X$ that assumes values from an infinite set, and found a lower bound in terms of Shannon entropy. Arikan \cite{Arikan} studied it for finite alphabet set and showed that R\'enyi entropy arises as the optimal solution for the moments of number of guesses. Later, Sundaresan \cite{Sundaresan} studied the redundancy in mismatched guessing. Indeed, he showed that the penalty in guessing according to a distribution $Q$ when the true distribution is $P$ is measured by $I_\alpha(P,Q)$.

Bunte and Lapidoth \cite{Bunte} studied the problem of partitioning of tasks and showed that the R\'enyi entropy and Sundaresan's divergence play a similar role in the optimal solution of the moments of the number of tasks performed.

Quite recently, Huieihel et al. \cite{Huieihel} studied the memoryless guessing problem, which is a variant of Arikan's guessing problem in which the guesses are i.i.d. (independent and identically distributed). They showed that R\'enyi entropy arises as the exact solution of the optimal factorial moments of the number of guesses. 


On studying the aforementioned four problems, we could see a close relationship among them. In all these problems, the objective is to minimize moments or factorial moments of random variables, and the R\'enyi entropy and Sundaresan's divergence arise as optimal solutions. This motivates us to find a connection between these four problems. Indeed, we solve all these problems by an unified approach. 


\subsection{Our Contribution}
The main contributions of this paper are as follows
\begin{itemize}
    \item Studying a more general problem and showing that R\'{e}nyi and Shannon entropies as its solution.
    \item Casting problems on source coding, guessing and task partitioning as particular instances of this general problem, and deriving the lower bounds using this framework.
    \item A unified approach to derive bounds for the mismatched version of these problems in terms of Kullback-Leibler and Sunderesan divergences.
    \item Refinement of some known results for the mismatched version of these problems. 
    \item New results on mismatched version of memoryless guessing and task partitioning.
\end{itemize}

The remainder of the paper is organized as follows. In Section~\ref{sec:probelm_statements}, we describe all the problems and state the results (both known as well as new). In Section~\ref{sec:unified_approach}, we formulate a unified approach which allows us to derive information theoretic bounds for these problems. Proofs of the stated results are derived using the unified approach in Section~\ref{sec:proofs_unified}. Finally, we summarize and conclude the paper in Section~\ref{sec:summary}.

\section{Problem Statements and Known Results}
\label{sec:probelm_statements}
In this section we discuss all the four problems in detail and present some known and new results. 
\subsection{Source Coding Problem}
A source consists of an alphabet set $\mathcal{X} = \{a_1,\ldots,a_M\}$ together with a probability distribution $P$. A {\em binary code} $C$ is a mapping from $\mathcal{X}$ to the set of finite length binary strings. Let $X$ be a random variable that assumes values from the set $\mathcal{X}$ and follows a probability distribution $P$. Let $C(X)$ be the codeword assigned to $X$. Let $L(C(X))$ be the length of codeword $C(X)$. An obvious condition for the codewords is that they are uniquely decodable. Among uniquely decodable codes, the objective is to find the one that minimizes the expected code-length. That is, 
$$\textrm{Minimize}\quad \mathbb{E}[L(C(X))]$$ 
over all uniquely decodable codes $C$. However, Kraft-McMillan proved the following.

\begin{proposition} \label{prop:KraftMcmilan}
	\cite{Cover} If $C$ a uniquely decodable code, then
	\begin{equation}
	\label{eqn:KraftMcmilan}
	\sum \nolimits_{x \in \mathcal{X}} 2^{-L(C(x))}\leq 1.
	\end{equation}
	Conversely, given a length sequence that satisfies the above inequality, there exists a uniquely decodable code $C$ with the given length sequence.
\end{proposition}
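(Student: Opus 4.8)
The plan is to prove the two implications separately. For the forward direction, suppose $C$ is uniquely decodable, put $\ell_{\max} := \max_{x \in \mathcal{X}} L(C(x))$, and fix an arbitrary positive integer $n$. I would first expand
$$\Big(\sum_{x \in \mathcal{X}} 2^{-L(C(x))}\Big)^n = \sum_{(x_1,\ldots,x_n) \in \mathcal{X}^n} 2^{-\big(L(C(x_1)) + \cdots + L(C(x_n))\big)},$$
and note that $L(C(x_1)) + \cdots + L(C(x_n))$ is precisely the length of the concatenated string $C(x_1)C(x_2)\cdots C(x_n)$, which takes values in $\{1,\ldots,n\ell_{\max}\}$. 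Regrouping the sum according to this total length $k$ gives $\sum_{k} A_k\, 2^{-k}$, where $A_k$ is the number of $n$-tuples whose concatenation has length $k$. This is where unique decodability enters: the concatenation map $(x_1,\ldots,x_n) \mapsto C(x_1)\cdots C(x_n)$ is injective, so $A_k$ cannot exceed the number $2^k$ of binary strings of length $k$. Hence the $n$-th power is at most $\sum_{k=1}^{n\ell_{\max}} 2^{k}\, 2^{-k} \le n\ell_{\max}$, and taking $n$-th roots yields $\sum_{x \in \mathcal{X}} 2^{-L(C(x))} \le (n\ell_{\max})^{1/n}$. Since $n$ was arbitrary and $(n\ell_{\max})^{1/n} \to 1$ as $n \to \infty$, inequality~\eqref{eqn:KraftMcmilan} follows.

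For the converse, given lengths $\ell_1,\ldots,\ell_M$ with $\sum_i 2^{-\ell_i} \le 1$, I would construct a prefix-free code with these lengths; a prefix-free code is uniquely decodable \cite{Cover}, so this suffices. Relabel the symbols so that $\ell_1 \le \ell_2 \le \cdots \le \ell_M$, and set $w_j := \sum_{i=1}^{j-1} 2^{\ell_j - \ell_i}$ for $j = 1,\ldots,M$ (so $w_1 = 0$). Each $w_j$ is a nonnegative integer, and $w_j = 2^{\ell_j}\sum_{i<j} 2^{-\ell_i} \le 2^{\ell_j}\big(1 - 2^{-\ell_j}\big) = 2^{\ell_j} - 1 < 2^{\ell_j}$, so $w_j$ has a binary representation using at most $\ell_j$ bits; let the $j$-th codeword be that representation left-padded with zeros to length exactly $\ell_j$. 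It then remains to verify the prefix-free property: for $i < j$, from $w_j - w_i\, 2^{\ell_j - \ell_i} = \sum_{k=i}^{j-1} 2^{\ell_j - \ell_k} \ge 2^{\ell_j - \ell_i}$ we get that the length-$\ell_i$ prefix of the $j$-th codeword, read as a binary integer, equals $\lfloor w_j / 2^{\ell_j - \ell_i} \rfloor \ge w_i + 1 > w_i$; hence the $i$-th codeword is not a prefix of the $j$-th, and since $\ell_i \le \ell_j$ the $j$-th cannot be a prefix of the $i$-th either.

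I expect the main obstacle to be the forward direction, and specifically the two non-obvious moves: recognizing that raising the Kraft sum to the $n$-th power and bookkeeping by concatenated length converts unique decodability into the clean count $A_k \le 2^k$, and then using the asymptotic $n$-th root to upgrade the polynomial slack $n\ell_{\max}$ into the sharp bound $1$. In the converse direction the construction is routine once the lengths are sorted; the only delicate point is the integer inequality $\lfloor w_j / 2^{\ell_j - \ell_i} \rfloor > w_i$ that certifies the prefix condition.
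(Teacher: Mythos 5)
Your proof is correct, and it is essentially the standard argument found in the reference the paper cites for this proposition: the paper itself offers no proof of Proposition~\ref{prop:KraftMcmilan}, simply quoting it from \cite{Cover}, and what you have written is McMillan's $n$-th power/counting argument for the forward direction together with the explicit cumulative-sum prefix-code construction for the converse. Both halves check out, including the key step $A_k \le 2^k$ from injectivity of concatenation and the integer inequality $\lfloor w_j/2^{\ell_j-\ell_i}\rfloor > w_i$ certifying the prefix-free property.
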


Thus, one can confine the search space of the expected code-length minimization to codes satisfying the Kraft-McMillan inequality.
\begin{theorem}
 \label{thm:source_coding_thm1}
{\rm \cite[Th.~5]{Shannon}} Let $X$ be a random variable on the source $(\mathcal{X}, P)$. Let $C$ be a uniquely decodable code. Then
\begin{equation*}
	\mathbb{E}[L(C(X))] \ge H(P).
\end{equation*}
\end{theorem}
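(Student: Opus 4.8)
The plan is to reduce the claim to a standard application of Gibbs' inequality (nonnegativity of the Kullback--Leibler divergence). First I would use Proposition~\ref{prop:KraftMcmilan}: since $C$ is uniquely decodable, its length function $\ell(x) := L(C(x))$ satisfies $\sum_{x \in \mathcal{X}} 2^{-\ell(x)} \le 1$. The natural idea is to interpret the normalized quantity $Q(x) := 2^{-\ell(x)} / c$, where $c := \sum_{x \in \mathcal{X}} 2^{-\ell(x)} \le 1$, as a probability distribution on $\mathcal{X}$. Then $\ell(x) = \log \frac{1}{Q(x)} - \log c = \log \frac{1}{Q(x)} + \log \frac{1}{c}$.

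Next I would compute the expected length directly:
\begin{align*}
\mathbb{E}[L(C(X))] = \sum_{x \in \mathcal{X}} P(x)\, \ell(x) = \sum_{x \in \mathcal{X}} P(x) \log \frac{1}{Q(x)} + \log \frac{1}{c}.
\end{align*}
Now I would add and subtract $H(P)$ to split the first sum:
\begin{align*}
\mathbb{E}[L(C(X))] - H(P) = \sum_{x \in \mathcal{X}} P(x) \log \frac{P(x)}{Q(x)} + \log \frac{1}{c} = I(P,Q) + \log \frac{1}{c}.
\end{align*}
Both terms on the right are nonnegative: $I(P,Q) \ge 0$ by Gibbs' inequality (which itself follows from the concavity of the logarithm via Jensen's inequality applied to $\sum_x P(x) \log \frac{Q(x)}{P(x)} \le \log \sum_x Q(x) \le 0$), and $\log \frac{1}{c} \ge 0$ because $c \le 1$. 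Hence $\mathbb{E}[L(C(X))] \ge H(P)$, which is the desired inequality.

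The only mild subtlety — and the one point I would be careful about — is justifying Gibbs' inequality cleanly, i.e. that $\sum_x P(x) \log \frac{P(x)}{Q(x)} \ge 0$ for any two distributions $P, Q$ on $\mathcal{X}$; this is where the actual mathematical content sits, and it is handled by Jensen's inequality as indicated above (with the convention that terms with $Q(x) = 0$ while $P(x) > 0$ contribute $+\infty$, which only strengthens the bound). Everything else is bookkeeping. An alternative, slightly slicker route that avoids introducing $Q$ explicitly is to apply the log-sum inequality directly to $\sum_x P(x) \log \frac{P(x)}{2^{-\ell(x)}} \ge \big(\sum_x P(x)\big) \log \frac{\sum_x P(x)}{\sum_x 2^{-\ell(x)}} \ge 0$, which rearranges immediately to $\mathbb{E}[L(C(X))] \ge H(P)$; I would likely present this version since it is the most economical.
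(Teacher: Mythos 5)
Your proof is correct, but it takes a genuinely different route from the paper's. You give the classical argument: normalize $2^{-\ell(x)}$ into a probability distribution $Q$, decompose $\mathbb{E}[L(C(X))] - H(P)$ as $I(P,Q) + \log\frac{1}{c}$, and conclude by Gibbs' inequality (or, in your slicker variant, apply the log-sum inequality directly). The paper instead derives the bound as a corollary of its unified framework: it sets $\psi(x) = 2^{-L(C(x))}$, notes $\sum_x \psi(x) \le 1$ by Proposition~\ref{prop:KraftMcmilan}, and invokes Theorem~\ref{thm:lower_bound_logfunc}, which is itself established by solving the convex program $\min_\phi -\mathbb{E}[\log\phi(X)]$ subject to $\sum_x \phi(x) \le b$ via Lagrangian/KKT analysis. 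The two approaches buy different things: yours is more elementary and self-contained, resting only on concavity of the logarithm, and it makes the redundancy term $I(P,Q) + \log\frac{1}{c}$ explicit (which the paper only exploits later, in its mismatch results); the paper's route is less economical for this one theorem but deliberately exhibits source coding as an instance of the same optimization problem that governs guessing, memoryless guessing, and task partitioning, which is the whole point of the unification. Your proof is complete as stated; the only cosmetic caveat is the degenerate case $c = 0$ (all-empty code), which cannot occur for a genuine code, and the $Q(x)=0$ convention you already address.
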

\medskip

Shannon also proved that the optimal code-length can  be achieved by encoding long sequences of symbols. Indeed, he proved the following. Let $\mathcal{X}^n$ denote the cartesian product of $\mathcal{X}$ with itself $n$ times;  representing all possible $n$-length sequences of symbols from $\mathcal{X}$. Let $X^n := X_1,\ldots,X_n$ denote an i.i.d. sequence from $\mathcal{X}^n$, and $P_n$ denote the product distribution of $X^n$, that is, $P_n(X^n) = \prod_{i=1}^n P(X_i)$.

\begin{theorem}	\label{thm:source_coding_thm2}
{\rm \cite[Th.~5]{Shannon}} Let $X^n$ be an i.i.d. sequence from $\mathcal{X}^n$ following the distribution $P_n$. Let $C_n$ be a code such that $L(C_n(X^n)) = \lceil -\log P_n(X^n) \rceil$. Then, $C_n$ satisfies the Kraft-McMillan inequality and 
	\begin{equation*}
	\lim_{n\to\infty}\frac{L(C_n(X^n))}{n} = H(P).
	\end{equation*}
\end{theorem}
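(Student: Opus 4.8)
The plan is to prove the two assertions separately: first that the prescribed length assignment satisfies the Kraft--McMillan inequality (so that, by the converse half of Proposition~\ref{prop:KraftMcmilan}, a uniquely decodable code $C_n$ with these lengths exists), and then that the normalized codeword length tends to $H(P)$.

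For the Kraft--McMillan part, I would start from the elementary inequality $L(C_n(x^n)) = \lceil -\log P_n(x^n) \rceil \ge -\log P_n(x^n)$, which holds for every $x^n \in \mathcal{X}^n$ with $P_n(x^n) > 0$. Exponentiating base $2$ gives $2^{-L(C_n(x^n))} \le 2^{\log P_n(x^n)} = P_n(x^n)$, and summing over $x^n \in \mathcal{X}^n$ yields
$$
\sum\nolimits_{x^n \in \mathcal{X}^n} 2^{-L(C_n(x^n))} \le \sum\nolimits_{x^n \in \mathcal{X}^n} P_n(x^n) = 1 .
$$
This is precisely the inequality of Proposition~\ref{prop:KraftMcmilan}, so its converse furnishes a uniquely decodable $C_n$ with this length function.

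For the limit, the key step is the two-sided bound coming from the ceiling, $-\log P_n(X^n) \le L(C_n(X^n)) < -\log P_n(X^n) + 1$, which after dividing by $n$ becomes
$$
\frac{1}{n}\log\frac{1}{P_n(X^n)} \le \frac{L(C_n(X^n))}{n} < \frac{1}{n}\log\frac{1}{P_n(X^n)} + \frac{1}{n}.
$$
Because $X^n = X_1,\ldots,X_n$ is i.i.d.\ with product law $P_n$, we have $\log\frac{1}{P_n(X^n)} = \sum_{i=1}^n \log\frac{1}{P(X_i)}$, a sum of i.i.d.\ terms with finite mean $\mathbb{E}\big[\log\frac{1}{P(X_1)}\big] = H(P)$ (finiteness is automatic since $\mathcal{X}$ is finite and all $P(x)>0$). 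The law of large numbers then gives $\frac{1}{n}\log\frac{1}{P_n(X^n)} \to H(P)$, the extra $\frac{1}{n}$ term vanishes, and the squeeze theorem yields $\frac{L(C_n(X^n))}{n} \to H(P)$.

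The only genuinely delicate point is to make explicit in what sense the limit is to be read: the weak law of large numbers gives convergence in probability, while the strong law (available here, the summands being i.i.d.\ with finite mean) gives almost sure convergence, and the displayed statement should be interpreted accordingly — this is the asymptotic equipartition property. Beyond fixing this convergence mode, I expect no real obstacle: both the Kraft bound and the sandwich argument are routine.
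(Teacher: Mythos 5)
Your proof is correct, but it follows the classical Shannon/AEP route rather than the paper's unified framework, and the two arguments actually establish the limit in different senses. The paper proves the lower bound by setting $\psi_n(x^n)=2^{-L(C_n(x^n))}$ and invoking Corollary~\ref{cor:sequence_bound_lower} with $b_n=1$, and the upper bound by noting $\psi_n(x^n)^{-1}=2^{\lceil -\log P_n(x^n)\rceil}\le 2/P_n(x^n)$ and invoking Corollary~\ref{cor:sequence_upper_bound_unified_mismatch} with $Q=P$ (the constant $2$ is harmless since $\frac{1}{n}\log c_n\to 0$); consequently what the paper actually controls is the normalized \emph{expected} length, $\lim_n \mathbb{E}[L(C_n(X^n))]/n=H(P)$. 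Your sandwich $-\log P_n(X^n)\le L(C_n(X^n))< -\log P_n(X^n)+1$ combined with the strong law gives almost-sure convergence of the random variable $L(C_n(X^n))/n$ itself, which is the natural literal reading of the displayed limit and is the stronger statement; taking expectations of the same sandwich also recovers the paper's version immediately, since $H(P_n)=nH(P)$ makes the expectation of $-\log P_n(X^n)$ exactly $nH(P)$. You also explicitly verify the Kraft--McMillan inequality for the length assignment, which the paper's proof omits. What your argument does not do is exercise the unified optimization machinery that is the point of the paper, so as a standalone proof it is complete and arguably sharper, but as a proof \emph{within this paper's program} it bypasses the intended route. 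Your closing remark about the convergence mode is a genuine and worthwhile observation: the theorem as stated is ambiguous on this point, and the paper's own proof resolves it in favor of expectations while yours resolves it almost surely.
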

\medskip

\noindent
\textbf{Mismatch Case} 
\medskip

If the compressor does not know the true distribution $P$, but encodes according to the optimal length corresponding to some distribution $Q$, then the minimum expected code-length is bounded above by
\begin{align*}
 \sum \nolimits_{x \in \mathcal{X}} P(x) \left\lceil\log\left(\frac{1}{Q(x)}\right)\right\rceil
&\leq \sum \nolimits_{x \in \mathcal{X}} P(x)\log\left(\frac{1}{Q(x)}\right ) + 1 = \sum \nolimits_{x \in \mathcal{X}} P(x)\log\left(\frac{P(x)}{Q(x)P(x)}\right) + 1\\
&= \sum \nolimits_{x \in \mathcal{X}} P(x) \log\left(\frac{1}{P(x)}\right) + \sum \nolimits_{x \in \mathcal{X}} P(x) \log\left(\frac{P(x)}{Q(x)}\right) + 1\\ 
&= H(P) + I(P,Q) + 1.
\end{align*}

Thus, $I(P,Q)$ is the penalty for not knowing the true distribution. $I(P,Q)$ is a {\em divergence function} in the sense that $I(P,Q) \ge 0$ and $I(P,Q) = 0$ if and only if $P = Q$. A result similar to Theorem~\ref{thm:source_coding_thm2} can also be established for the mismatched case.

\subsection{Campbell Coding Problem}

In Campbell's coding problem, one minimizes the normalized cumulant of code lengths, that is,
\begin{equation*}
\text{Minimize} \quad\frac{1}{\rho}\log\mathbb{E}[2^{\rho L(C(X))}],
\end{equation*}
over all uniquely decodable codes $C$, and $\rho >0$.

Minimization of cumulants arise in the probability of buffer overflow in coding with a buffer at the encoder \cite{Humblet}. Moments of random variable also help us understand various characteristics of the underlying probability distribution. With the help of moments, one can find the mean, variance, skewness, kurtosis, etc of a distribution. The underlying distribution can also be determined by its moment sequence. Campbell \cite{Campbell} gave a lower bound for the normalized cumulants in terms of R\'enyi entropy.

\begin{theorem} \label{thm:campbell_coding_thm1}
{\rm \cite[Lemma]{Campbell}} Let $L(C(\cdot))$ be the length function of a uniquely decodable code $C$.
	Then
	\begin{equation}
	\label{eqn:Campbell_lower_bound}
	\frac{1}{\rho}\log\mathbb{E}[2^{\rho L(C(X))}]\geq H_{\alpha(\rho)}(P),
	\end{equation}
	where $\alpha(\rho) = {1}/{(1+\rho)}$ and $H_{\alpha(\rho)}(P)$ is as in \eqref{eqn:renyi_entropy}.
\end{theorem}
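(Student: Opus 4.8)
The plan is to obtain \eqref{eqn:Campbell_lower_bound} from a single application of H\"older's inequality together with the Kraft--McMillan inequality (Proposition~\ref{prop:KraftMcmilan}). Write $\ell_x := L(C(x))$ and $\alpha := \alpha(\rho) = 1/(1+\rho)$, so that $1-\alpha = \rho/(1+\rho)$ and hence $\tfrac{1}{1-\alpha} = \tfrac{1+\rho}{\rho}$. After exponentiating, the claim \eqref{eqn:Campbell_lower_bound} is equivalent to
\begin{equation*}
\mathbb{E}\big[2^{\rho L(C(X))}\big] = \sum_{x \in \mathcal{X}} P(x)\, 2^{\rho \ell_x} \;\ge\; \Big(\sum_{x \in \mathcal{X}} P(x)^{1/(1+\rho)}\Big)^{1+\rho},
\end{equation*}
so it suffices to prove this inequality.

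The key algebraic step is the factorization
\begin{equation*}
P(x)^{1/(1+\rho)} = \big(P(x)\, 2^{\rho \ell_x}\big)^{1/(1+\rho)} \cdot \big(2^{-\ell_x}\big)^{\rho/(1+\rho)},
\end{equation*}
which holds simply because the powers of $2^{\ell_x}$ cancel. First I would sum this over $x \in \mathcal{X}$ and apply H\"older's inequality with the conjugate exponents $p = 1+\rho$ and $q = (1+\rho)/\rho$ (which satisfy $1/p + 1/q = 1$ for every $\rho > 0$), obtaining
\begin{equation*}
\sum_{x \in \mathcal{X}} P(x)^{1/(1+\rho)} \le \Big(\sum_{x \in \mathcal{X}} P(x)\, 2^{\rho \ell_x}\Big)^{1/(1+\rho)} \Big(\sum_{x \in \mathcal{X}} 2^{-\ell_x}\Big)^{\rho/(1+\rho)}.
\end{equation*}
Then I would invoke Proposition~\ref{prop:KraftMcmilan}: since $C$ is uniquely decodable, $\sum_x 2^{-\ell_x} \le 1$, so the last factor is at most $1$ and may be dropped. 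Raising the remaining inequality to the power $1+\rho$ gives exactly the displayed lower bound on $\mathbb{E}[2^{\rho L(C(X))}]$; applying $\tfrac{1}{\rho}\log(\cdot)$ to both sides and using $\tfrac{1+\rho}{\rho} = \tfrac{1}{1-\alpha}$ recovers \eqref{eqn:Campbell_lower_bound} with $H_{\alpha(\rho)}(P)$ as in \eqref{eqn:renyi_entropy}.

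I do not expect a genuine obstacle, but the subtle point is the \emph{direction} of the estimates: H\"older is used in its ordinary upper-bound form, yet because it is applied to $\sum_x P(x)^{1/(1+\rho)}$ --- which lies on the small side of the target --- it yields the desired \emph{lower} bound on the cumulant. The two things to get right are the choice of the conjugate pair $(1+\rho,\,(1+\rho)/\rho)$, so that the $2^{\ell_x}$ factors separate cleanly, and the orientation in which Kraft--McMillan enters; the rest is bookkeeping. It is also worth noting that equality holds throughout precisely when $2^{-\ell_x} \propto P(x)^{1/(1+\rho)}$ together with $\sum_x 2^{-\ell_x} = 1$, which explains why the bound is tight up to integer-length rounding and foreshadows the matching achievability statement.
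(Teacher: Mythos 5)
Your proof is correct, but it takes a genuinely different route from the one in the paper. The paper derives \eqref{eqn:Campbell_lower_bound} as a one-line corollary of its unified Theorem~\ref{thm:lower_bound_unified}: set $\psi(x)=2^{-L(C(x))}$, note that Kraft--McMillan gives $\sum_x \psi(x)\le 1$, and invoke the general lower bound $\frac{1}{\rho}\log\mathbb{E}[\psi(X)^{-\rho}]\ge H_\alpha(P)-\log b$ with $b=1$. That general bound is itself proved not by H\"older but by solving the convex program $P_1$ (minimize $\mathrm{sgn}(\rho)\,\mathbb{E}[\phi(X)^{-\rho}]$ subject to $\sum_x\phi(x)\le b$) via Lagrangian duality and the KKT conditions, which produces the optimizer $\phi^*(x)=b\,P(x)^\alpha/Z_{P,\alpha}$ explicitly. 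Your argument --- the factorization $P(x)^{1/(1+\rho)}=(P(x)2^{\rho\ell_x})^{1/(1+\rho)}(2^{-\ell_x})^{\rho/(1+\rho)}$ followed by H\"older with conjugate exponents $1+\rho$ and $(1+\rho)/\rho$ and then Kraft--McMillan --- is essentially Campbell's original proof: it is shorter, entirely elementary, and yields the equality condition $2^{-\ell_x}\propto P(x)^{1/(1+\rho)}$ directly. What the paper's detour through the optimization problem buys is reusability: the identical lemma also delivers the lower bounds for Shannon source coding, Arikan's guessing, memoryless guessing, and task partitioning, and it covers $\rho\in(-1,0)$ uniformly via the $\mathrm{sgn}(\rho)$ device, whereas your choice of conjugate exponents requires $\rho>0$ (which is the regime in which Campbell's problem is posed, so this is not a defect, only a difference in scope).
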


For the sake of convenience we suppress the dependence of $\rho$ on $\alpha$ and simply write $\alpha(\rho)$ as $\alpha$ in the remainder of this paper. Campbell also showed that, if we ignore the integer constraint of the length function, then
\begin{equation}
	\label{eqn:generalized_code_lengths}
	L(C(x))= \log  \frac{Z_{P,\alpha}}{P(x)^\alpha},
	\end{equation}
	where $Z_{P, \alpha} := \sum_{x\in \mathcal{X}} P(x)^{\alpha}$, satisfies (\ref{eqn:KraftMcmilan}) and achieves the lower bound in (\ref{eqn:Campbell_lower_bound}). Indeed, he showed that the lower bound in (\ref{eqn:Campbell_lower_bound}) can be achieved by encoding long sequences of symbols with code-lengths close to (\ref{eqn:generalized_code_lengths}).

\begin{theorem} \label{thm:campbell_coding_thm2}{\rm \cite[Theorem]{Campbell}}
If $C_n$ is a uniquely decodable code such that
\begin{equation}
\label{eqn:optimal_code_length_campbell}
	L(C(x^n))= \Big\lceil\log  \frac{Z_{P,\alpha,n}}{P(x^n)^\alpha}\Big\rceil,
	\end{equation}
	where $Z_{P, \alpha, n} := \sum_{x^n \in \mathcal{X}^n} P(x^n)^{\alpha}$, then
\begin{equation}
\label{eqn:limit_result_campbell}
\lim_{n\to\infty}\frac{1}{n\rho}\log \mathbb{E}[2^{\rho L(C_n(X^n))}] = H_{\alpha}(P).
\end{equation}
\end{theorem}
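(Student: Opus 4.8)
The plan is to proceed in three steps: (i) verify that a uniquely decodable code with the prescribed length function exists, (ii) sandwich the cumulant $\mathbb{E}[2^{\rho L(C_n(X^n))}]$ between two closed-form expressions differing only by a multiplicative constant, and (iii) take logarithms, normalize by $n\rho$, and let $n\to\infty$.

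First I would settle existence. Write $\ell(x^n) := \log\bigl(Z_{P,\alpha,n}/P(x^n)^\alpha\bigr)$, so that $L(C_n(x^n)) = \lceil \ell(x^n)\rceil \ge \ell(x^n)$ and hence $2^{-L(C_n(x^n))} \le P(x^n)^\alpha/Z_{P,\alpha,n}$. Summing over $x^n \in \mathcal{X}^n$, the right-hand side collapses to $1$ by the very definition of $Z_{P,\alpha,n}$, so the Kraft--McMillan inequality \eqref{eqn:KraftMcmilan} holds; by the converse part of Proposition~\ref{prop:KraftMcmilan}, a uniquely decodable $C_n$ with this length function exists.

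Next comes the core estimate. Using $\ell(x^n) \le L(C_n(x^n)) < \ell(x^n) + 1$, recalling $P_n(x^n) = P(x^n)$, and using the identity $\alpha\rho = 1-\alpha$ valid for $\alpha = 1/(1+\rho)$, a direct expansion gives
\begin{equation*}
\sum_{x^n} P_n(x^n)\, 2^{\rho\,\ell(x^n)} \;=\; Z_{P,\alpha,n}^{\rho}\sum_{x^n} P(x^n)^{1-\alpha\rho} \;=\; Z_{P,\alpha,n}^{\rho}\sum_{x^n}P(x^n)^{\alpha} \;=\; Z_{P,\alpha,n}^{1+\rho} \;=\; Z_{P,\alpha,n}^{1/\alpha},
\end{equation*}
and therefore $Z_{P,\alpha,n}^{1/\alpha} \le \mathbb{E}\bigl[2^{\rho L(C_n(X^n))}\bigr] < 2^{\rho}\, Z_{P,\alpha,n}^{1/\alpha}$. (Alternatively, the lower bound is simply Theorem~\ref{thm:campbell_coding_thm1} applied to the uniquely decodable code $C_n$ together with the additivity $H_\alpha(P_n) = n\,H_\alpha(P)$.) Finally I would invoke the product structure $Z_{P,\alpha,n} = \bigl(\sum_x P(x)^\alpha\bigr)^n = Z_{P,\alpha}^{\,n}$, apply $\tfrac{1}{n\rho}\log(\cdot)$ throughout, and read off
\begin{equation*}
\frac{1}{\alpha\rho}\log Z_{P,\alpha} \;\le\; \frac{1}{n\rho}\log \mathbb{E}\bigl[2^{\rho L(C_n(X^n))}\bigr] \;<\; \frac{1}{n} + \frac{1}{\alpha\rho}\log Z_{P,\alpha}.
\end{equation*}
Since $\alpha\rho = 1-\alpha$, both outer terms tend to $\tfrac{1}{1-\alpha}\log Z_{P,\alpha} = H_\alpha(P)$ as $n\to\infty$, establishing \eqref{eqn:limit_result_campbell}.

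I do not expect a genuine obstacle; the argument is essentially bookkeeping. The two places that require care are the algebraic identity $\alpha\rho = 1-\alpha$ (equivalently $1+\rho = 1/\alpha$), which is precisely what collapses the seemingly mismatched sum $\sum_{x^n} P(x^n)^{1-\alpha\rho}$ back into $Z_{P,\alpha,n}$, and the observation that the ceiling operation inflates $L(C_n)$ by at most $1$, hence the exponent by at most $\rho$ and the normalized quantity by at most $1/n$ — which is exactly why the rounding error is immaterial in the limit. One should also record that $\alpha \in (0,1)$ because $\rho>0$, so all the monotonicity steps point the correct way.
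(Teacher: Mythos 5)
Your proof is correct and follows essentially the same route as the paper: both arguments rest on the sandwich $Z_{P,\alpha,n}/P_n(x^n)^\alpha \le 2^{L(C_n(x^n))} < 2\,Z_{P,\alpha,n}/P_n(x^n)^\alpha$ produced by the ceiling, together with the identity $\alpha\rho = 1-\alpha$ and the product structure $Z_{P,\alpha,n}=Z_{P,\alpha}^{\,n}$. The only difference is presentational: the paper feeds this sandwich into Corollaries~\ref{cor:sequence_upper_bound_unified_mismatch} and \ref{cor:sequence_lower_bound_unified_mismatch} with $Q_n=P_n$ (so that $I_\alpha(P,P)=0$), whereas you evaluate the resulting sum $Z_{P,\alpha,n}^{1/\alpha}$ by hand; your explicit verification of the Kraft--McMillan inequality for the prescribed lengths is a welcome addition that the paper leaves implicit.
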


\bigskip
\noindent
\textbf{Mismatch Case}
\medskip

Blumer and McEliece \cite{BlumerMcElice} and Sundaresan \cite{Sundaresan} studied the redundancy in the mismatched case of the Campbell's problem. Sundaresan showed that the difference in the normalized cumulant from the minimum when encoding according to an arbitrary uniquely decodable code is measured by the $I_{\alpha}$-divergence upto a factor of 1. That can be generalized to the following.

\begin{theorem} \label{thm:campbell_coding_mismatch}
Let $\rho\in (-1,0)\cup (0,\infty)$. Let $L: \mathcal{X} \to \mathbb{Z}_{+}$ be an arbitrary length function that satisfies (\ref{eqn:KraftMcmilan}). Let
\begin{equation}
\label{eqn:difference_cumlant}
    R_c(P,L,\rho) := \frac{1}{\rho}\log E[2^{\rho L(X)}] -\min_K \frac{1}{\rho}\log E[2^{\rho K(X)}],
\end{equation}
where the expectation is with respect to $P$ and the minimum is over all length functions $K$ satisfying (\ref{eqn:KraftMcmilan}). Then, there exists a probability distribution $Q_L$ such that
\begin{equation}
    \label{eqn:bounds_difference_cumulant}
    I_{\alpha}(P,Q_L) - \log \eta - 1 \le R_c(P,L,\rho) \le I_{\alpha}(P,Q_L) - \log \eta,
\end{equation}
where $\eta  = \sum_x 2^{-L(x)}$. 
\end{theorem}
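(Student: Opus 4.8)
The plan is to realise the given integer length function $L$ as a constant shift of the (real-valued) generalised Campbell code of a cleverly chosen reference distribution $Q_L$, and then to measure $R_c$ against a near-optimal \emph{integer} code. Throughout write $\alpha=1/(1+\rho)$, so that $\alpha\rho=1-\alpha$, $\rho+1=1/\alpha$, and $(1-\alpha)/\alpha=\rho$. Since the optimal lengths (\ref{eqn:generalized_code_lengths}) for a distribution $Q$ are proportional to $Q(x)^{-\alpha}$, the right choice is
$$Q_L(x) := \frac{2^{-L(x)/\alpha}}{\sum_{x'\in\mathcal{X}} 2^{-L(x')/\alpha}}, \qquad S := \sum_{x'\in\mathcal{X}} 2^{-L(x')/\alpha}.$$
A direct computation gives $Z_{Q_L,\alpha}=\sum_x Q_L(x)^\alpha=\eta/S^\alpha$ and $Q_L(x)^{\alpha-1}=S^{1-\alpha}\,2^{\rho L(x)}$; equivalently, the generalised Campbell code of $Q_L$ is exactly $L(x)+\log\eta$. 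Feeding these into the definition (\ref{eqn:sundaresan_divergence}) of $I_\alpha(P,Q_L)$ --- after simplifying the inner sum $\sum_{x'}(Q_L(x')/Q_L(x))^\alpha=Z_{Q_L,\alpha}/Q_L(x)^\alpha$ and using $\alpha/(1-\alpha)=1/\rho$ --- all the $S$-dependent terms cancel and one is left with the identity
$$\frac{1}{\rho}\log \mathbb{E}\big[2^{\rho L(X)}\big] \;=\; H_\alpha(P) + I_\alpha(P,Q_L) - \log\eta,$$
which is just the matched-case cumulant $H_\alpha(P)+I_\alpha(P,Q_L)$ of the Campbell code of $Q_L$, displaced by the shift $-\log\eta$.

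Next I would pin down the benchmark $m := \min_K \tfrac{1}{\rho}\log\mathbb{E}[2^{\rho K(X)}]$, the minimum over integer length functions satisfying (\ref{eqn:KraftMcmilan}). The bound $m\ge H_\alpha(P)$ is Theorem~\ref{thm:campbell_coding_thm1} for $\rho>0$, and its $\alpha>1$ counterpart (the same argument with the reverse H\"older inequality, also available from the unified bound of Section~\ref{sec:unified_approach}) for $\rho\in(-1,0)$. For $m\le H_\alpha(P)+1$, take $K^{*}(x)=\lceil\log(Z_{P,\alpha}/P(x)^\alpha)\rceil$: it obeys (\ref{eqn:KraftMcmilan}) since $\sum_x 2^{-K^{*}(x)}\le \sum_x P(x)^\alpha/Z_{P,\alpha}=1$, and because $K^{*}(x)<\log(Z_{P,\alpha}/P(x)^\alpha)+1$ the elementary identity $\sum_x P(x)(Z_{P,\alpha}/P(x)^\alpha)^\rho=Z_{P,\alpha}^{\rho+1}=Z_{P,\alpha}^{1/\alpha}$ gives $\tfrac{1}{\rho}\log\mathbb{E}[2^{\rho K^{*}(X)}]<H_\alpha(P)+1$. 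Hence $H_\alpha(P)-m\in[-1,0]$, and subtracting $m$ from the identity of the previous step,
$$R_c(P,L,\rho) = I_\alpha(P,Q_L) - \log\eta + \big(H_\alpha(P) - m\big),$$
which is precisely (\ref{eqn:bounds_difference_cumulant}).

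The main obstacle is bookkeeping rather than depth: in the first step one must juggle three competing normalisations ($\eta$, $S$, $Z_{Q_L,\alpha}$) and repeatedly invoke $\alpha\rho=1-\alpha$ and $\alpha(1+\rho)=1$ for the cancellations to go through. The one genuinely delicate point is sign-handling for $\rho\in(-1,0)$: dividing logarithmic inequalities by the negative number $\rho$ reverses them, so, e.g., the estimate feeding $m\le H_\alpha(P)+1$ becomes $2^{\rho K^{*}(x)}\ge 2^{\rho}\,2^{\rho\log(Z_{P,\alpha}/P(x)^\alpha)}$, and one must verify that every direction still emerges as claimed, and likewise that the Kraft-sum factor $(\sum_x 2^{-K(x)})^{1-\alpha}$ appearing after the reverse H\"older step is $\ge 1$ because $1-\alpha<0$. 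Beyond that, the only real idea is the definition of $Q_L$ --- in particular the exponent $1/\alpha$ (taking exponent $1$ instead would yield only the coarser Kullback--Leibler-type bound of the ordinary source-coding problem).
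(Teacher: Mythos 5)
Your proposal is correct and follows essentially the same route as the paper: the reference distribution $Q_L(x)\propto 2^{-L(x)/\alpha}$ is identical to the paper's $Q_L(x)\propto 2^{-(1+\rho)L(x)}$ (since $1/\alpha=1+\rho$), the exact identity $\frac{1}{\rho}\log\mathbb{E}[2^{\rho L(X)}]=H_\alpha(P)+I_\alpha(P,Q_L)-\log\eta$ is the paper's Eqn.~(\ref{eqn:upper_lower_bound_mismatch_campbell}) (you derive it directly from the definition of $I_\alpha$, the paper routes it through Corollaries~\ref{cor:upper_bound_unified_mismatch} and \ref{cor:lower_bound_unified_mismatch}), and the sandwich $H_\alpha(P)\le\min_K\frac{1}{\rho}\log\mathbb{E}[2^{\rho K(X)}]\le H_\alpha(P)+1$ via the ceiling of the Campbell code is exactly the paper's Eqn.~(\ref{eqn:upper_lower_bound_difference_cumulant}).
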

\medskip

It should be noted that the lower bound in \eqref{eqn:bounds_difference_cumulant} is arbitrarily large for sufficiently small $\eta$. For example, for a source with two symbols, say $x$ and $y$, with code lengths $L(x) = L(y) = 100$, the left side of \eqref{eqn:bounds_difference_cumulant} becomes $I_{\alpha}(P,Q_L) + 98$. However, if $\nicefrac{1}{2} \leq \eta \leq 1$, then \eqref{eqn:bounds_difference_cumulant} becomes
$$
|R_c(P,L,\rho) - I_{\alpha}(P,Q_L)| \le 1,
$$
which is {\rm \cite[Th.~8]{Sundaresan}}. $I_{\alpha}(P,Q_L)$ is, in a sense, the penalty when $Q_L$ is not matched with the true distribution $P$. $I_\alpha(\cdot, \cdot)$ is also a {\em divergence function} in the sense that
\begin{enumerate}
	\item $I_\alpha(P,Q)\geq0.$
	\item $I_\alpha(P,Q)=0$ if and only if $P=Q$.
\end{enumerate}
Moreover, as $\alpha\to 1$, $I_\alpha(P,Q)\to I(P,Q)$ \cite{Sundaresan}. In view of this, a result analogous to Theorem \ref{thm:campbell_coding_mismatch} also holds for the Shannon source coding problem.

\subsection{Arikan's Guessing Problem}

Let $\mathcal{X}$ be a set of $M$ objects. Bob thinks of an object $X$ (a random variable) from $\mathcal{X}$ according to a probability distribution $P$. Alice guesses it by asking questions of the form ``is $X=x$?". The objective is to minimize the average number of guesses required for Alice to correctly guess $X$. Suppose Alice uses a guessing strategy $G$ for asking questions to Bob. By a guessing strategy, we mean a one-one map $G:\mathcal{X}\to\{1,\ldots,M\}$, where $G(x)$ is to be interpreted as the number of questions required by Alice to guess $x$ correctly.
Arikan studied the $\rho^{\textrm{th}}$ moment of number of guesses and found upper and lower bounds in terms of R\'enyi entropy.

\begin{theorem} \label{thm:massey_guessing_thm1}	
 Let $\rho \in (-1,0) \cup (0,\infty)$. Let $X$ be a random guess according to the distribution $P$, and let $G $ be any guessing function. Then 
$$\frac{1}{\rho}\log\mathbb{E}\big[G(X)^\rho\big]\geq H_\alpha(P)-\log h_M \footnote{Arikan upper bounded $h_M$ by $1+\log M$.},$$
where $h_M = 1 + 1/2 + \cdots + 1/M$.
\end{theorem}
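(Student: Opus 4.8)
The statement to prove is Arikan's lower bound: for a guessing function $G:\mathcal{X}\to\{1,\dots,M\}$ and $\rho\in(-1,0)\cup(0,\infty)$, we have $\tfrac{1}{\rho}\log\mathbb{E}[G(X)^\rho]\ge H_\alpha(P)-\log h_M$ with $\alpha=1/(1+\rho)$. The natural approach is a Hölder-type argument. First I would write $\mathbb{E}[G(X)^\rho]=\sum_x P(x)G(x)^\rho$ and rewrite each summand as $\bigl(P(x)^{\alpha}\bigr)\cdot\bigl(P(x)^{1-\alpha}G(x)^{\rho}\bigr)$, noticing that $1-\alpha=\rho/(1+\rho)=\rho\alpha$, so the second factor is $\bigl(P(x)^{\alpha}G(x)\bigr)^{\rho}$... but the cleanest route is the reverse Hölder / Jensen dichotomy on $\rho>0$ versus $-1<\rho<0$, applied to the exponents $1/\alpha$ and its conjugate. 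Concretely, for $\rho>0$, write $\sum_x P(x)^{\alpha}=\sum_x\bigl(P(x)G(x)^{\rho}\bigr)^{\alpha}\bigl(G(x)^{-\rho}\bigr)^{\alpha}$ and apply Hölder with exponents $1/\alpha>1$ and $1/(1-\alpha)$ to get $\sum_x P(x)^{\alpha}\le\bigl(\sum_x P(x)G(x)^{\rho}\bigr)^{\alpha}\bigl(\sum_x G(x)^{-\rho\alpha/(1-\alpha)}\bigr)^{1-\alpha}=\mathbb{E}[G(X)^\rho]^{\alpha}\bigl(\sum_x G(x)^{-1}\bigr)^{1-\alpha}$, using $\rho\alpha/(1-\alpha)=1$.

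The second key step is to bound the stray factor $\sum_{x\in\mathcal{X}}G(x)^{-1}$. Since $G$ is a bijection onto $\{1,\dots,M\}$, this sum is exactly $\sum_{k=1}^{M}1/k=h_M$, independent of $G$ and of $P$. (It is worth remarking that this is where the bijectivity of the guessing strategy is essential; an arbitrary map would not give a clean bound.) Substituting, $\sum_x P(x)^{\alpha}\le\mathbb{E}[G(X)^\rho]^{\alpha}\,h_M^{1-\alpha}$. Taking $\log$, dividing by $\alpha>0$, and rearranging while recalling $H_\alpha(P)=\tfrac{1}{1-\alpha}\log\sum_x P(x)^{\alpha}$ and $\tfrac{1-\alpha}{\alpha}=\rho$, one arrives at $\tfrac{1}{\rho}\log\mathbb{E}[G(X)^\rho]\ge H_\alpha(P)-\log h_M$, as claimed.

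For the range $-1<\rho<0$ (so $\alpha>1$ and $1/\alpha<1$), the same algebraic identities hold but Hölder's inequality reverses direction; one instead uses the reverse-Hölder inequality (valid because the exponents $1/\alpha$ and $1/(1-\alpha)$ are now a conjugate pair with one of them negative), and the direction of the final inequality is preserved after dividing by $\rho<0$ flips it back. I would handle this case by citing the reverse Hölder inequality explicitly, or equivalently by a change of variables $\rho\mapsto-\rho$ together with the substitution reducing it to a direct Hölder application on the reciprocal quantities.

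**The main obstacle.** The genuinely delicate point is not the Hölder manipulation — that is routine once the exponents are matched — but getting the sign bookkeeping right uniformly across $\rho>0$ and $-1<\rho<0$, since both the direction of Hölder's inequality and the sign of the factor $1/\rho$ (and of $\alpha-1$) change between the two regimes, and one must verify they conspire to yield the same final inequality. A secondary subtlety is confirming that $\alpha=1/(1+\rho)$ stays in the admissible range $(0,\infty)\setminus\{1\}$ for all such $\rho$ and that $1-\alpha$ and $\rho$ share a sign, which is what makes the conjugate-exponent pairing legitimate. The rest — in particular the clean evaluation $\sum_x G(x)^{-1}=h_M$ — is immediate from the bijectivity of $G$.
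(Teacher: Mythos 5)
Your proof is correct, but it follows a genuinely different route from the paper's. You prove the bound directly via H\"older's inequality applied to $\sum_x P(x)^{\alpha} = \sum_x \bigl(P(x)G(x)^{\rho}\bigr)^{\alpha}\bigl(G(x)^{-\rho}\bigr)^{\alpha}$ with conjugate exponents $1/\alpha$ and $1/(1-\alpha)$ (reverse H\"older when $\alpha>1$, i.e.\ $\rho<0$) --- this is essentially Arikan's original argument, extended to negative $\rho$. The paper instead derives the theorem in one line from its unified optimization result (Theorem~\ref{thm:lower_bound_unified}): setting $\psi(x)=1/G(x)$, the bijectivity of $G$ gives $\sum_x\psi(x)=h_M$, and the general lower bound $\frac{1}{\rho}\log\mathbb{E}[\psi(X)^{-\rho}]\ge H_{\alpha}(P)-\log b$ with $b=h_M$ finishes the proof; that general bound is itself established not by H\"older but by solving the convex program $\min\,\mathrm{sgn}(\rho)\,\mathbb{E}[\phi(X)^{-\rho}]$ subject to $\sum_x\phi(x)\le b$ via KKT conditions. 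The two arguments hinge on the same observation (that $\sum_x 1/G(x)=h_M$ regardless of $G$) and in effect identify the same extremal object $\phi^{*}(x)\propto P(x)^{\alpha}$ --- in your version this is the equality case of H\"older. What your approach buys is a short, self-contained, elementary proof of this one theorem; what the paper's approach buys is reuse, since the identical Theorem~\ref{thm:lower_bound_unified} simultaneously yields the Campbell, memoryless-guessing and task-partitioning lower bounds, which is the whole point of the unified framework. Your sign bookkeeping for the two regimes is consistent: for $\rho<0$ the reverse H\"older inequality and the subsequent division by $1-\alpha=\rho/(1+\rho)<0$ do combine to restore the stated direction, and $1-\alpha$ always shares the sign of $\rho$ on $(-1,0)\cup(0,\infty)$, so the conjugate-exponent pairing is legitimate throughout.
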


Arikan also showed that, if one follows the optimal guessing function $G^*$, that is, guessing according to the decreasing order of $P$-probabilities, it does not require more than the R\'enyi entropy.

\begin{theorem} \label{thm:massey_guessing_thm2}
	Let $\rho \in (-1,0) \cup (0,\infty)$. If $G^*$ is an optimal guessing function, then
	\begin{equation*}
	 \frac{1}{\rho}\log\mathbb{E}\big[G^*(X)^\rho\big]\leq H_\alpha (P).
	\end{equation*}
\end{theorem}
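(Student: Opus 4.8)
\noindent\textit{Proof idea.} The plan is to reduce the claim to an elementary summation estimate together with the algebraic identity $1-\alpha\rho=\alpha$ for $\alpha=1/(1+\rho)$, which is precisely the relation that produces the R\'enyi entropy. Throughout, relabel the alphabet so that $P(a_1)\ge P(a_2)\ge\cdots\ge P(a_M)$; then an optimal guessing function is $G^*(a_i)=i$, and the task is to bound $\mathbb{E}[G^*(X)^{\rho}]=\sum_{i=1}^{M}i^{\rho}P(a_i)$.

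The first step is the pointwise bound on the number of guesses. Since $P(a_j)\ge P(a_i)$ for every $j\le i$ and $\alpha>0$, we have $i\,P(a_i)^{\alpha}\le\sum_{j=1}^{i}P(a_j)^{\alpha}\le Z_{P,\alpha}$, where $Z_{P,\alpha}:=\sum_{x\in\mathcal{X}}P(x)^{\alpha}$. Hence
\[
G^*(a_i)=i\le\frac{Z_{P,\alpha}}{P(a_i)^{\alpha}},\qquad i=1,\ldots,M ,
\]
i.e. the number of guesses is dominated pointwise by $2^{L(a_i)}$ for the (fractional) Campbell-optimal length $L(x)=\log\bigl(Z_{P,\alpha}/P(x)^{\alpha}\bigr)$ of \eqref{eqn:generalized_code_lengths}, which incidentally meets \eqref{eqn:KraftMcmilan} with equality.

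Next I would raise this bound to the power $\rho$, multiply by $P(a_i)$, and sum. For $\rho>0$ this yields $\mathbb{E}[G^*(X)^{\rho}]\le Z_{P,\alpha}^{\rho}\sum_{i}P(a_i)^{1-\alpha\rho}$; for $\rho\in(-1,0)$ the map $t\mapsto t^{\rho}$ is decreasing, so the identical manipulation gives the reversed inequality $\mathbb{E}[G^*(X)^{\rho}]\ge Z_{P,\alpha}^{\rho}\sum_{i}P(a_i)^{1-\alpha\rho}$. In both cases one now uses $1-\alpha\rho=1-\tfrac{\rho}{1+\rho}=\tfrac{1}{1+\rho}=\alpha$, so that $\sum_{i}P(a_i)^{1-\alpha\rho}=Z_{P,\alpha}$ and the estimate collapses to $\mathbb{E}[G^*(X)^{\rho}]\le Z_{P,\alpha}^{1+\rho}$ when $\rho>0$ and $\mathbb{E}[G^*(X)^{\rho}]\ge Z_{P,\alpha}^{1+\rho}$ when $\rho\in(-1,0)$.

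Finally I would apply $\tfrac{1}{\rho}\log(\cdot)$, which is increasing for $\rho>0$ and decreasing for $\rho<0$; in either case the two sign changes cancel and we get $\tfrac{1}{\rho}\log\mathbb{E}[G^*(X)^{\rho}]\le\tfrac{1+\rho}{\rho}\log Z_{P,\alpha}$. Since $\tfrac{1+\rho}{\rho}=\tfrac{1}{1-\alpha}$, the right-hand side is exactly $H_{\alpha}(P)$ as in \eqref{eqn:renyi_entropy}, which is the claim. There is no deep obstacle here; the only point that needs care is the bookkeeping of inequality directions for $\rho\in(-1,0)$ (where moreover $\alpha>1$), but since the pointwise bound $i\le Z_{P,\alpha}/P(a_i)^{\alpha}$ holds for every $\rho>-1$ no separate argument is needed. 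Alternatively, one can avoid the case split altogether by computing directly that the fractional lengths $L$ above satisfy $\tfrac{1}{\rho}\log\mathbb{E}[2^{\rho L(X)}]=H_{\alpha}(P)$ and then invoking monotonicity of the normalized cumulant $\tfrac{1}{\rho}\log\mathbb{E}[2^{\rho(\cdot)}]$ in its exponent together with the pointwise domination $\log G^*(a_i)\le L(a_i)$.
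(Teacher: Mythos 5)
Your proof is correct and follows essentially the same route as the paper's: the crux in both is the pointwise bound $G^*(x)\le Z_{P,\alpha}/P(x)^{\alpha}$ obtained from the non-increasing ordering of the probabilities. The paper finishes by invoking Corollary~\ref{cor:upper_bound_unified_mismatch} with $Q=P$, $\psi=1/G^*$ and $c=1$, which is precisely the computation you carry out by hand, including the double sign reversal for $\rho\in(-1,0)$.
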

\medskip

Arikan also proved that R\'enyi entropy can be achieved by guessing long sequences of letters in an i.i.d. fashion.
\begin{theorem} \label{thm:massey_guessing_thm3}
Let $\rho \in (-1,0) \cup (0,\infty)$. Let ${X_1,X_2,\ldots,X_n}$ be a sequence of i.i.d. guesses. Let $G_n^*(X_1,\ldots,X_n)$ be an optimal guessing function. Then, 
$$  \lim_{n \to \infty}\frac{1}{n\rho} \log \mathbb{E}[G_n^*(X_1,X_2,\ldots,X_n)^\rho] = H_\alpha(X_1).
$$	
\end{theorem}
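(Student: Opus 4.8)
The plan is to sandwich the quantity $\frac{1}{n\rho}\log\mathbb{E}[G_n^*(X^n)^\rho]$ between two expressions that both converge to $H_\alpha(P)$, and then invoke the squeeze theorem. The key structural observation is that guessing an i.i.d.\ block $X^n := X_1,\ldots,X_n$ is itself an instance of Arikan's guessing problem, now for the source $(\mathcal{X}^n, P_n)$ with $|\mathcal{X}^n| = M^n$; hence Theorems~\ref{thm:massey_guessing_thm1} and \ref{thm:massey_guessing_thm2} apply verbatim with $M$ replaced by $M^n$ and $P$ replaced by $P_n$. The second ingredient is the additivity of R\'enyi entropy over independent coordinates: from \eqref{eqn:renyi_entropy}, since $P_n(x^n) = \prod_{i=1}^n P(x_i)$,
$$
H_\alpha(P_n) = \frac{1}{1-\alpha}\log\sum_{x^n\in\mathcal{X}^n}P_n(x^n)^\alpha = \frac{1}{1-\alpha}\log\Big(\sum_{x\in\mathcal{X}}P(x)^\alpha\Big)^n = n\,H_\alpha(P).
$$

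First I would derive the upper bound. Applying Theorem~\ref{thm:massey_guessing_thm2} to the source $(\mathcal{X}^n,P_n)$ with the optimal guessing function $G_n^*$ gives $\frac{1}{\rho}\log\mathbb{E}[G_n^*(X^n)^\rho] \le H_\alpha(P_n) = n\,H_\alpha(P)$, and dividing by $n>0$ yields $\frac{1}{n\rho}\log\mathbb{E}[G_n^*(X^n)^\rho] \le H_\alpha(P)$. Next, the lower bound: applying Theorem~\ref{thm:massey_guessing_thm1} to $(\mathcal{X}^n,P_n)$ (note $G_n^*$ is in particular a guessing function) gives $\frac{1}{\rho}\log\mathbb{E}[G_n^*(X^n)^\rho] \ge H_\alpha(P_n) - \log h_{M^n} = n\,H_\alpha(P) - \log h_{M^n}$, where $h_{M^n} = 1 + \tfrac12 + \cdots + \tfrac1{M^n}$. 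Dividing by $n$,
$$
\frac{1}{n\rho}\log\mathbb{E}[G_n^*(X^n)^\rho] \ge H_\alpha(P) - \frac{\log h_{M^n}}{n}.
$$
Since $h_k \le 1 + \ln k$ for every $k\ge1$, we have $\log h_{M^n} \le \log(1 + n\ln M) = O(\log n)$, so $\frac{1}{n}\log h_{M^n}\to 0$ as $n\to\infty$. Combining the two displays and letting $n\to\infty$, the squeeze theorem gives $\lim_{n\to\infty}\frac{1}{n\rho}\log\mathbb{E}[G_n^*(X^n)^\rho] = H_\alpha(P) = H_\alpha(X_1)$, which is the claim.

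As for difficulty: there is essentially no hard step once Theorems~\ref{thm:massey_guessing_thm1}, \ref{thm:massey_guessing_thm2} and the additivity $H_\alpha(P_n) = n\,H_\alpha(P)$ are available. The only two points that need a little care are (i) the sign of $\rho$ — this is already handled inside the two non-asymptotic theorems, which are stated for the full range $\rho\in(-1,0)\cup(0,\infty)$, and in the argument above we only ever divide the \emph{final} inequalities by the positive integer $n$, never again by $\rho$; and (ii) the crude harmonic-number estimate $\log h_{M^n} = O(\log n)$, which is exactly what makes the per-symbol penalty $\tfrac{1}{n}\log h_{M^n}$ vanish in the limit. Finally, a remark analogous to the one following Theorem~\ref{thm:campbell_coding_mismatch} could note that the identical squeeze argument, using the mismatched versions of Theorems~\ref{thm:massey_guessing_thm1}--\ref{thm:massey_guessing_thm2}, yields the limiting value $H_\alpha(P) + I_\alpha(P,Q)$ when one guesses according to a mismatched distribution $Q$.
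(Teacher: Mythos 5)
Your proof is correct and is essentially the same argument as the paper's: the paper obtains the same two bounds by applying its sequence-level corollaries (Corollary~\ref{cor:sequence_bound_lower} with $b_n=h_{M^n}$ and Corollary~\ref{cor:sequence_upper_bound_unified_mismatch} with $Q_n=P_n$, $c_n=1$) to $\psi_n=1/G_n^*$, which is exactly your step of applying Theorems~\ref{thm:massey_guessing_thm1} and \ref{thm:massey_guessing_thm2} to the product source $(\mathcal{X}^n,P_n)$ together with $H_\alpha(P_n)=nH_\alpha(P)$ and $\tfrac{1}{n}\log h_{M^n}\to 0$.
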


\begin{remark}
For the case when $\rho >0$, Theorems~\ref{thm:massey_guessing_thm1}, \ref{thm:massey_guessing_thm2} and \ref{thm:massey_guessing_thm3} were stated and proved in \cite{Arikan}.
\end{remark}

\bigskip
\noindent
\textbf{Mismatch Case}
\medskip

Suppose Alice does not know the true underlying probability distribution $P$, and she guesses according to some guessing function $G$. Sundaresan proved that the penalty in this mismatched guessing is again measured by the $I_{\alpha}$-divergence \cite[Th.~6]{Sundaresan}. This can be refined to the following theorem.

\begin{theorem} \label{thm:guessing_mismatch_thm1}
 Let $\rho \in (-1,0) \cup (0,\infty)$. Let $X$ be a random variable that assumes values from $\mathcal{X}$ and follows the distribution $P$. Let $G$ be an arbitrary guessing function. Then, there exists a probability distribution $Q_G$ on $\mathcal{X}$ such that
$$\frac{1}{\rho}\log\mathbb{E}[G(X)^{\rho}] = H_\alpha (P) + I_\alpha (P, Q_G)- \log h_M,$$
where the expectation is with respect to $P$.
\end{theorem}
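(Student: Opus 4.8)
Since the claim is an identity with $Q_G$ existentially quantified, the plan is to exhibit the ``right'' $Q_G$ explicitly in terms of $G$ and then verify the identity by direct substitution into the definition \eqref{eqn:sundaresan_divergence}. Because $\alpha = 1/(1+\rho)$ we have $\alpha/(1-\alpha) = 1/\rho$ and $(1-\alpha)/\alpha = \rho$, so the definition of Sundaresan's divergence gives, for \emph{every} $Q$, the cancellation
$$
H_\alpha(P) + I_\alpha(P,Q) \;=\; \frac{1}{\rho}\log\Bigg(\sum_{x\in\mathcal X}P(x)\Big[\sum_{x'\in\mathcal X}\Big(\tfrac{Q(x')}{Q(x)}\Big)^{\alpha}\Big]^{\rho}\Bigg).
$$
Hence the whole problem collapses to finding $Q = Q_G$ for which the inner bracket equals $h_M\,G(x)$ for every $x$: then the right-hand side is $\tfrac{1}{\rho}\log\big(h_M^{\rho}\,\mathbb{E}[G(X)^{\rho}]\big) = \log h_M + \tfrac{1}{\rho}\log\mathbb{E}[G(X)^{\rho}]$, and a rearrangement is exactly the asserted identity.

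The candidate --- suggested by the source-coding relation $Q_L\propto 2^{-L}$ and by Arikan's optimal guessing distribution --- is
$$
Q_G(x) \;:=\; \frac{G(x)^{-1/\alpha}}{\sum_{i=1}^{M} i^{-1/\alpha}} \;=\; \frac{G(x)^{-(1+\rho)}}{\sum_{i=1}^{M} i^{-(1+\rho)}}.
$$
Because $G$ maps $\mathcal X$ bijectively onto $\{1,\dots,M\}$, any sum of a function of $G(x)$ over $x\in\mathcal X$ equals the corresponding sum over $\{1,\dots,M\}$; in particular $\sum_x Q_G(x) = 1$ (so $Q_G$ is a genuine probability distribution) and, writing $c$ for its normalizing constant, $\sum_{x'}Q_G(x')^{\alpha} = c^{\alpha}\sum_{i=1}^{M} i^{-1} = c^{\alpha}h_M$. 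Since also $Q_G(x)^{\alpha} = c^{\alpha}G(x)^{-1}$, the inner bracket equals $c^{\alpha}h_M/(c^{\alpha}G(x)^{-1}) = h_M\,G(x)$, independently of $c$ --- precisely what was needed. Substituting this back into the displayed identity for $H_\alpha(P)+I_\alpha(P,Q_G)$ and simplifying yields $\tfrac{1}{\rho}\log\mathbb{E}[G(X)^{\rho}] = H_\alpha(P) + I_\alpha(P,Q_G) - \log h_M$. No step uses the sign of $\rho$ beyond $\rho\in(-1,0)\cup(0,\infty)$, which guarantees $1+\rho>0$ (so $Q_G$ is well defined) and that every logarithm has a positive argument; both regimes are therefore covered in one stroke.

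There is essentially no analytic obstacle here: once $Q_G$ is in hand the identity is pure bookkeeping. The only real content is \emph{discovering} $Q_G\propto G^{-(1+\rho)}$, which one reverse-engineers by demanding that $\sum_{x'}(Q(x')/Q(x))^{\alpha}$ be a constant multiple of $G(x)$ and solving for $Q$, then noticing the mild surprise that the resulting $Q$ automatically has total mass $1$ (this is exactly where $\sum_{i=1}^{M} i^{-1} = h_M$ enters). I would also point out that, since $I_\alpha(P,Q_G)\ge 0$ with equality iff $P=Q_G$, the identity immediately recovers and sharpens the lower bound of Theorem~\ref{thm:massey_guessing_thm1}.
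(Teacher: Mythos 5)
Your proof is correct and takes essentially the same route as the paper: the paper defines the identical $Q_G(x)\propto G(x)^{-(1+\rho)}$, derives the same key identity $Z_{Q_G,\alpha}/Q_G(x)^{\alpha}=h_M\,G(x)$, and then concludes by applying Corollaries~\ref{cor:upper_bound_unified_mismatch} and~\ref{cor:lower_bound_unified_mismatch} with $c=a=1/h_M$, which amounts to exactly the substitution into \eqref{eqn:sundaresan_divergence} that you carry out by hand via \eqref{eqn:Renyi_plus_Sundar}. The only difference is presentational: you inline the framework's corollaries rather than citing them.
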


\begin{theorem} \label{thm:guessing_mismatch_thm2}
{\rm \cite[Prop.~1]{Sundaresan}} Let $\rho \in (-1,0) \cup (0,\infty)$. Let $X$ be a random variable that assumes values from $\mathcal{X}$ and follows the distribution $P$. Let $Q$ be another distribution. Let $G_Q$ be the guessing strategy that guesses according to the decreasing order of $Q$-probabilities. Then
	\begin{equation*}
	\label{gum}
	\frac{1}{\rho}\log \mathbb{E}\big[G_Q(X)^\rho \big]	\leq H_\alpha(P) + I_\alpha(P,Q),
	\end{equation*}
	where expectation is with respect to $P$.
\end{theorem}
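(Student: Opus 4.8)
The plan is to bound the mismatched guessing function $G_Q$ pointwise by a function of $Q$ alone, using the classical counting trick of Arikan, and then to observe that the particular exponent $s=\alpha=1/(1+\rho)$ makes this bound collapse \emph{exactly} onto $H_\alpha(P)+I_\alpha(P,Q)$, so that the upper bound follows with no slack in the last step. (The same conclusion should also drop out of the unified optimization set up in Section~\ref{sec:unified_approach}; the direct argument below is the version I would write if proving it in isolation.)

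First I would note that, because $G_Q$ guesses objects in non-increasing order of $Q$-probability, for every $x\in\mathcal{X}$ we have $G_Q(x)\le |\{x'\in\mathcal{X}:Q(x')\ge Q(x)\}|$, irrespective of how ties are broken. For any $s\ge 0$, each index $x'$ counted on the right satisfies $(Q(x')/Q(x))^{s}\ge 1$, while the remaining indices contribute nonnegative terms, so
$$ G_Q(x)\ \le\ \sum_{x'\in\mathcal{X}}\Big(\frac{Q(x')}{Q(x)}\Big)^{s}\ =\ Q(x)^{-s}\,Z_{Q,s},\qquad Z_{Q,s}:=\sum_{x'\in\mathcal{X}}Q(x')^{s}. $$
I would then specialise to $s=\alpha$, which is nonnegative for every $\rho\in(-1,0)\cup(0,\infty)$, and abbreviate $B(x):=Q(x)^{-\alpha}Z_{Q,\alpha}$. (If $Q(x)=0$ for some $x$ in the support of $P$ then $I_\alpha(P,Q)=\infty$ and the inequality is trivial, so we may assume $Q>0$.)

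The only point requiring care is the sign of $\rho$. For $\rho>0$ the maps $t\mapsto t^{\rho}$ and $t\mapsto\frac1\rho\log t$ are increasing, so $G_Q\le B$ pointwise immediately yields $\frac1\rho\log\mathbb{E}[G_Q(X)^{\rho}]\le\frac1\rho\log\mathbb{E}[B(X)^{\rho}]$. For $\rho\in(-1,0)$, $t\mapsto t^{\rho}$ is decreasing, so $G_Q(X)^{\rho}\ge B(X)^{\rho}$ and hence $\log\mathbb{E}[G_Q(X)^{\rho}]\ge\log\mathbb{E}[B(X)^{\rho}]$; dividing by the negative number $\rho$ flips the inequality and gives the same conclusion $\frac1\rho\log\mathbb{E}[G_Q(X)^{\rho}]\le\frac1\rho\log\mathbb{E}[B(X)^{\rho}]$. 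In both cases the expectation on the right is finite since $Q\le 1$.

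It then remains to evaluate the right-hand side. Using $\mathbb{E}[B(X)^{\rho}]=Z_{Q,\alpha}^{\rho}\sum_x P(x)Q(x)^{-\alpha\rho}$ together with the identities $\alpha\rho=1-\alpha$ and $\frac1\rho=\frac{\alpha}{1-\alpha}$, a direct substitution of $\big[\sum_{x'}(Q(x')/Q(x))^{\alpha}\big]^{(1-\alpha)/\alpha}=Q(x)^{-(1-\alpha)}Z_{Q,\alpha}^{(1-\alpha)/\alpha}$ into the definition \eqref{eqn:sundaresan_divergence} gives
$$ H_\alpha(P)+I_\alpha(P,Q)\ =\ \log Z_{Q,\alpha}\ +\ \frac{\alpha}{1-\alpha}\log\Big(\sum_{x}P(x)Q(x)^{-(1-\alpha)}\Big)\ =\ \frac1\rho\log\mathbb{E}[B(X)^{\rho}], $$
so the chain closes with equality on this last step and the theorem follows. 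I do not expect a genuine obstacle here: the whole argument is a short pointwise estimate plus bookkeeping across the sign change of $\rho$, and the one idea worth isolating is that choosing $s=\alpha$ — rather than optimising over $s>0$ — is precisely what turns the generic Arikan-type counting bound into the sharp $I_\alpha$ expression.
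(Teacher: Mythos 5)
Your proof is correct and follows essentially the same route as the paper: the pointwise bound $G_Q(x)\le Z_{Q,\alpha}/Q(x)^{\alpha}$ (the paper derives it exactly as in its proof of Theorem~\ref{thm:massey_guessing_thm2}, you via the counting set with $s=\alpha$), the sign-of-$\rho$ case split, and the identity $Z_{Q,\alpha}^{\rho}\sum_x P(x)Q(x)^{\alpha-1}=2^{\rho(H_\alpha(P)+I_\alpha(P,Q))}$ are precisely what the paper packages into Corollary~\ref{cor:upper_bound_unified_mismatch} with $c=1$. You have simply unpacked that corollary into a self-contained argument (and added the harmless observation about $Q(x)=0$), so there is nothing to fix.
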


The above two theorems can be combined to state the following, as in the Campbell's coding problem.

\begin{theorem} \label{thm:guessing_mismatch_thm3}
 Let $\rho \in (-1,0) \cup (0,\infty)$. Let $X$ be a random variable that assumes values from $\mathcal{X}$ and follows the distribution $P$. Let $G$ be an arbitrary guessing function and $G_P$ be the optimal guessing function for $P$. Let
\begin{equation*}
\label{eqn:difference_guessing_moment}
R_{g}(P,G,\rho) := \frac{1}{\rho}\log \mathbb{E}\big[G(X)^\rho \big] - \frac{1}{\rho}\log \mathbb{E}\big[G_P(X)^\rho \big],
\end{equation*}
where the expectation is with respect to $P$. Then, there exists a probability distribution $Q_G$ such that
$$I_\alpha (P, Q_G) - \log h_M \leq R_{g}(P,G,\rho) \leq I_\alpha (P, Q_G)$$
\end{theorem}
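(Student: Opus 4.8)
The plan is to derive Theorem~\ref{thm:guessing_mismatch_thm3} by combining the exact characterization of the mismatched guessing moment (Theorem~\ref{thm:guessing_mismatch_thm1}) with two-sided bounds on the matched moment $\frac1\rho\log\mathbb{E}[G_P(X)^\rho]$, exactly in the way the remark following Theorem~\ref{thm:campbell_coding_mismatch} indicates for the Campbell problem.

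Concretely, I would proceed as follows. By Theorem~\ref{thm:guessing_mismatch_thm1} there is a probability distribution $Q_G$ on $\mathcal{X}$ with
\[
\frac1\rho\log\mathbb{E}[G(X)^\rho] = H_\alpha(P) + I_\alpha(P,Q_G) - \log h_M ;
\]
this $Q_G$ is the very distribution that appears in the conclusion, so no new object must be constructed. For the subtracted term I would sandwich $\frac1\rho\log\mathbb{E}[G_P(X)^\rho]$: Theorem~\ref{thm:massey_guessing_thm1} applied to the guessing function $G_P$ gives $\frac1\rho\log\mathbb{E}[G_P(X)^\rho] \ge H_\alpha(P) - \log h_M$, while Theorem~\ref{thm:massey_guessing_thm2} (equivalently Theorem~\ref{thm:guessing_mismatch_thm2} with $Q=P$, since $I_\alpha(P,P)=0$) gives $\frac1\rho\log\mathbb{E}[G_P(X)^\rho] \le H_\alpha(P)$. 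Substituting these into $R_g(P,G,\rho) = H_\alpha(P) + I_\alpha(P,Q_G) - \log h_M - \frac1\rho\log\mathbb{E}[G_P(X)^\rho]$ yields
\[
I_\alpha(P,Q_G) - \log h_M \le R_g(P,G,\rho) \le I_\alpha(P,Q_G),
\]
which is the assertion. Note that every step is valid verbatim for $\rho>0$ and for $\rho\in(-1,0)$, the sign of $\rho$ being already absorbed into the cited statements, so no case split is needed here.

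The genuine content therefore lives in Theorem~\ref{thm:guessing_mismatch_thm1}, which I regard as the main obstacle: one must exhibit, for an \emph{arbitrary} guessing function $G$, a distribution $Q_G$ making the mismatched moment \emph{exactly} $H_\alpha(P)+I_\alpha(P,Q_G)-\log h_M$. I expect this to fall out of the unified framework of Section~\ref{sec:unified_approach} by taking $Q_G$ proportional to $G(\cdot)^{-1/\alpha}$ --- the guessing counterpart of the tilted/escort distribution $Q(x)\propto 2^{-L(x)/\alpha}$ used in Campbell's problem --- and then algebraically recognizing the bracketed sum of \eqref{eqn:sundaresan_divergence}; the additive term $\log h_M$ appears precisely because $\sum_{x\in\mathcal{X}} G(x)^{-1} = \sum_{j=1}^{M} 1/j = h_M$ for any bijection $G:\mathcal{X}\to\{1,\dots,M\}$. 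Once Theorem~\ref{thm:guessing_mismatch_thm1} is in hand, the combination above is a one-line computation.
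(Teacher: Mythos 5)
Your proof is correct and follows essentially the same route as the paper: the authors likewise combine the exact identity of Theorem~\ref{thm:guessing_mismatch_thm1} (with the same $Q_G \propto G(\cdot)^{-(1+\rho)} = G(\cdot)^{-1/\alpha}$) with the sandwich $H_\alpha(P)-\log h_M \le \frac{1}{\rho}\log\mathbb{E}[G_P(X)^\rho] \le H_\alpha(P)$ from Theorems~\ref{thm:massey_guessing_thm1} and~\ref{thm:massey_guessing_thm2}. Your anticipated construction of $Q_G$ and the role of $h_M$ also agree with the paper's proof of Theorem~\ref{thm:guessing_mismatch_thm1}.
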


\begin{remark}
For the case when $\rho >0$, the above theorem was stated and proved in {\rm \cite[Th.~6]{Sundaresan}}.
\end{remark}

\subsection{Memoryless Guessing}
\label{subsec:memoryless}
In Arikan's guessing problem, the guesser Alice does not ask a question that was asked before. We now present a memoryless guessing where Alice does not keep track of her previous guesses; every-time she comes up with a guess independent of her previous guesses. Let $\hat{X}_1, \hat{X}_2, \ldots$ be a sequence of independent guesses according to a distribution $\hat{P}$. Let $P$ be the underlying distribution with respect to which Bob sets the guesses. Huieihel et al. \cite{Huieihel} showed that, surprisingly, the optimal one for Alice, even if she knows the true underlying distribution, is not $P$, but $P^{(\alpha)}$, the $\alpha$-scaled measure of $P$, defined by $P^{(\alpha)}(x) = {P(x)^\alpha}/Z_{P,\alpha}$. The guessing function in this problem is defined as,
\begin{equation*}
G(X,\hat{X}_{1}^{\infty}) := \inf\{K \geq 1 : \hat{X}_{K} = X\},
\end{equation*}
that is, the number of guesses until a success.

Unlike in Massey's guessing problem, Huieihel et al. \cite{Huieihel} minimized what are called {\em factorial moments}, defined for $\rho \in \mathbb{Z}_{+}$ as,
\begin{equation*}
 V_\rho (X, \hat{X}_{1}^{ \infty })=\frac{1}{\rho !} \prod \nolimits_{l=0}^{\rho -1}
\big (G(X,\hat{X}_{1}^{\infty}) + l\big ).
\end{equation*}
Huieihel et al. \cite{Huieihel} studied the following problem.
\begin{equation*}
\text{Minimize}\quad\mathbb{E}\big[V_\rho (X, \hat{X}_{1}^{ \infty })\big],
\end{equation*}
over all $\hat{P}\in \mathcal{P}$, where $\mathcal{P}$ is the probability simplex, that is, $\mathcal{P} = \{(P(x))_{x \in \mathcal{X}} : P(x)\geq 0,\sum_x P(x)=1\}$. The optimal value, if attained, is denoted by $\mathbb{E}\big[V^*_\rho (X, \hat{X}_{1}^{ \infty })\big]$. Huieihel et al. \cite{Huieihel} proved the following.
\medskip
\noindent

\begin{theorem} \label{thm:memoryless_guessing_thm1}
{\rm \cite[Th.~1]{Huieihel}} For any integer $\rho>0$, we have
$$ \frac{1}{\rho}{\rm log}\mathbb{E}\big[V^*_\rho (X, \hat{X}_{1}^{ \infty })\big]
 = H_{\alpha}(P)$$
and is attained by
$$\hat{P}^*(x)=\frac{P(x)^\alpha}{Z_{P,\alpha}}.$$
\end{theorem}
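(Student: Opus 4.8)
The plan is to collapse the optimization over the whole i.i.d.\ sequence $\hat{X}_1^{\infty}$ into a finite-dimensional convex program over the simplex $\mathcal{P}$, and then read off the answer as a R\'enyi entropy exactly as in Campbell's and Arikan's problems. First I would condition on the hidden variable: given $X=x$, the guesses $\hat{X}_1,\hat{X}_2,\dots$ are i.i.d.\ $\sim\hat{P}$ and independent of $X$, so $G(X,\hat{X}_1^{\infty})$ is geometric on $\{1,2,\dots\}$ with success probability $\hat{P}(x)$, i.e.\ $\Pr\{G=k\mid X=x\}=(1-\hat{P}(x))^{k-1}\hat{P}(x)$. Since $\rho$ is a positive integer, $V_\rho=\frac{1}{\rho!}\prod_{l=0}^{\rho-1}(G+l)=\binom{G+\rho-1}{\rho}$, and the negative-binomial identity $\sum_{j\ge0}\binom{j+\rho}{\rho}t^{j}=(1-t)^{-(\rho+1)}$ evaluated at $t=1-\hat{P}(x)$ gives $\mathbb{E}[V_\rho\mid X=x]=\hat{P}(x)^{-\rho}$ (interchanging expectation with the sum is legitimate by monotone convergence, all terms being nonnegative). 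Averaging over $X\sim P$,
\[
\mathbb{E}\big[V_\rho(X,\hat{X}_1^{\infty})\big]=\sum\nolimits_{x\in\mathcal{X}}\frac{P(x)}{\hat{P}(x)^{\rho}},
\]
which is $+\infty$ whenever $\hat{P}(x)=0$ for some $x$ with $P(x)>0$, so the minimization may be restricted to fully supported $\hat{P}$.

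Next I would minimize $F(\hat{P}):=\sum_{x}P(x)\hat{P}(x)^{-\rho}$ over $\mathcal{P}$. With $\alpha=1/(1+\rho)$ one has $\rho\alpha=1-\alpha$, hence $P(x)^{\alpha}=\big(P(x)\hat{P}(x)^{-\rho}\big)^{\alpha}\hat{P}(x)^{\rho\alpha}$, and H\"older's inequality with conjugate exponents $1/\alpha$ and $1/(1-\alpha)$ gives
\[
Z_{P,\alpha}=\sum\nolimits_{x}P(x)^{\alpha}\le\Big(\sum\nolimits_{x}P(x)\hat{P}(x)^{-\rho}\Big)^{\alpha}\Big(\sum\nolimits_{x}\hat{P}(x)\Big)^{1-\alpha}=F(\hat{P})^{\alpha}.
\]
Thus $F(\hat{P})\ge Z_{P,\alpha}^{1/\alpha}$ for every $\hat{P}$, with equality iff $P(x)\hat{P}(x)^{-\rho}\propto\hat{P}(x)$, i.e.\ iff $\hat{P}(x)\propto P(x)^{1/(1+\rho)}=P(x)^{\alpha}$; normalizing identifies the minimizer as $\hat{P}^{*}(x)=P(x)^{\alpha}/Z_{P,\alpha}=P^{(\alpha)}(x)$, and substituting back confirms $\mathbb{E}[V^{*}_\rho]=Z_{P,\alpha}^{1/\alpha}$. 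Applying $\tfrac{1}{\rho}\log(\cdot)$ and using $\tfrac{1}{\rho\alpha}=\tfrac{1+\rho}{\rho}=\tfrac{1}{1-\alpha}$ yields $\tfrac{1}{\rho}\log\mathbb{E}[V^{*}_\rho]=\tfrac{1}{1-\alpha}\log Z_{P,\alpha}=H_{\alpha}(P)$, as claimed.

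An equivalent route for the minimization is Lagrange multipliers: for $\rho>0$ the map $\hat{P}\mapsto\sum_{x}P(x)\hat{P}(x)^{-\rho}$ is strictly convex on the relative interior of $\mathcal{P}$, so the stationarity condition $-\rho P(x)\hat{P}(x)^{-\rho-1}=\lambda$ has a unique solution, necessarily the global minimum; this is the formulation I expect to subsume under the general optimization of Section~\ref{sec:unified_approach}.

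I expect the one substantive step to be the reduction itself, namely establishing cleanly that the $\rho$-th factorial moment of a geometric$(p)$ random variable equals $p^{-\rho}$ and hence that the memoryless-guessing objective is the familiar $\sum_x P(x)\hat{P}(x)^{-\rho}$. Once that is in place, the remainder is the same H\"older/Jensen computation that produces R\'enyi entropy and the $\alpha$-scaled measure $P^{(\alpha)}$ in the matched Campbell and Arikan problems, so no genuinely new difficulty arises.
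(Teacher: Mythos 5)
Your proof is correct, and its first half coincides with the paper's: the paper performs the same conditioning on $X=x$ to write $\mathbb{E}[V_\rho(X,\hat{X}_1^{\infty})]=\sum_{m\ge 1}\sum_{x}P(x)\hat{P}(x)(1-\hat{P}(x))^{m-1}\tfrac{1}{\rho!}\prod_{l=0}^{\rho-1}(m+l)$ and then cites Huieihel et al.\ for the evaluation $=\sum_x P(x)\hat{P}(x)^{-\rho}$, which you instead derive explicitly from the negative-binomial series --- a welcome filling-in of the one step the paper outsources. You diverge in the second half. The paper finishes by invoking its unified Theorem~\ref{thm:lower_bound_unified} with $\psi=\hat{P}$ and $b=1$, i.e.\ the KKT analysis of problem $P_1$, which minimizes $\mathbb{E}[\psi(X)^{-\rho}]$ over all nonnegative $\psi$ with $\sum_x\psi(x)\le 1$ and identifies the minimizer as the $\alpha$-scaled measure $P^{(\alpha)}$; since that minimizer happens to lie in the simplex, the restriction to probability vectors costs nothing. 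You instead minimize over the simplex directly via H\"older with conjugate exponents $1/\alpha$ and $1/(1-\alpha)$, which is a clean, self-contained argument that also hands you the equality condition $\hat{P}\propto P^{\alpha}$ in one line. What the paper's route buys is precisely its thesis --- memoryless guessing drops out of the same optimization problem that covers Campbell coding, Arikan guessing and task partitioning --- and your closing observation that the Lagrange-multiplier formulation can be subsumed under Section~\ref{sec:unified_approach} is exactly how the paper proceeds. No gaps; the only cosmetic caveat is that your H\"older step presupposes $\alpha\in(0,1)$, which is automatic here since $\rho$ is a positive integer.
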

\medskip
\noindent

For sequence of random guesses $\hat{X}^n$, the above theorem can be stated in the following way. Let $\hat{X}^n = (\hat{X}_1,\ldots,\hat{X}_n$) be a sequence of i.i.d. guesses from $\mathcal{X}^n$ with distribution $\hat{P}_n$ --- the $n$-fold product distribution of $P$ on $\mathcal{X}^n$. Assume that $P_n$ is the true underlying distribution. Then
$$\lim_{n\to\infty}\frac{1}{n\rho}{\rm log}\mathbb{E}\big[V^*_\rho (X^n, \hat{(X^n)}_1^{\infty})\big] = H_{\alpha}(P),$$
where the expectation is with respect to $P_n$.

It is now easy to obtain the result in the mismatched case.

\begin{theorem}
\label{thm:memoryless_mismatch}
If the true underlying probability distribution is $P$ but Alice thinks that it is $Q$ and guesses according to its optimal one, namely $\widehat{Q}^{*}(x) = Q(x)^{\alpha} / Z_{Q,\alpha}$, then
$$\frac{1}{\rho} \log \mathbb{E}[V^{*}_{\rho}(X,\widehat{X}^{\infty}_1)] = H_{\alpha}(P) + I_{\alpha}(P,Q),$$
where the expectation is with respect to $P$.
\end{theorem}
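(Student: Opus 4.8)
The plan is to reduce the statement to an explicit evaluation of $\mathbb{E}\big[V_\rho(X,\widehat{X}_1^{\infty})\big]$ for the fixed guessing law $\widehat{Q}^{*}$, and then to recognize the resulting quantity as $H_\alpha(P)+I_\alpha(P,Q)$ using the same algebraic identities that already appear in the matched case, Theorem~\ref{thm:memoryless_guessing_thm1}. First I would observe that, conditioned on $X=x$, the index $G(X,\widehat{X}_1^{\infty})$ of the first successful guess is geometric with success probability $\widehat{Q}^{*}(x)$, so that $\Pr\{G=k\mid X=x\}=(1-\widehat{Q}^{*}(x))^{k-1}\,\widehat{Q}^{*}(x)$ for $k\ge 1$. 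Writing $V_\rho=\binom{G+\rho-1}{\rho}$ and summing the negative-binomial series $\sum_{j\ge 0}\binom{j+\rho}{\rho}t^{j}=(1-t)^{-(\rho+1)}$ (which converges since $\widehat{Q}^{*}(x)>0$), one gets
\[
\mathbb{E}\big[V_\rho\mid X=x\big]=\widehat{Q}^{*}(x)\sum_{j\ge 0}\binom{j+\rho}{\rho}\big(1-\widehat{Q}^{*}(x)\big)^{j}=\frac{1}{\widehat{Q}^{*}(x)^{\rho}}.
\]
This is precisely the computation carried out in the proof of Theorem~\ref{thm:memoryless_guessing_thm1}, only now for an arbitrary guessing distribution rather than the optimal one; if one prefers, it can simply be invoked from there.

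Next I would average over $X\sim P$ and substitute $\widehat{Q}^{*}(x)=Q(x)^{\alpha}/Z_{Q,\alpha}$, which gives
\[
\mathbb{E}\big[V_\rho^{*}(X,\widehat{X}_1^{\infty})\big]=\sum_{x}\frac{P(x)}{\widehat{Q}^{*}(x)^{\rho}}=Z_{Q,\alpha}^{\rho}\sum_{x}P(x)\,Q(x)^{-\alpha\rho}.
\]
Since $\alpha=1/(1+\rho)$ one has $\alpha\rho=1-\alpha$ and $1/\rho=\alpha/(1-\alpha)$, so $Q(x)^{-\alpha\rho}=Q(x)^{\alpha-1}$ and hence
\[
\frac{1}{\rho}\log\mathbb{E}\big[V_\rho^{*}\big]=\log Z_{Q,\alpha}+\frac{\alpha}{1-\alpha}\log\Big(\sum_{x}P(x)\,Q(x)^{\alpha-1}\Big).
\]

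The remaining step is to match this with $H_\alpha(P)+I_\alpha(P,Q)$. To that end I would simplify the inner bracket in the definition \eqref{eqn:sundaresan_divergence}: $\sum_{x'}(Q(x')/Q(x))^{\alpha}=Z_{Q,\alpha}/Q(x)^{\alpha}$, and raising to the power $(1-\alpha)/\alpha$ produces $Z_{Q,\alpha}^{(1-\alpha)/\alpha}\,Q(x)^{\alpha-1}$; pulling that constant out of the sum over $x$ and cancelling the prefactor $\alpha/(1-\alpha)$ against the exponent $(1-\alpha)/\alpha$ shows that $I_\alpha(P,Q)=\log Z_{Q,\alpha}+\tfrac{\alpha}{1-\alpha}\log\big(\sum_{x}P(x)Q(x)^{\alpha-1}\big)-H_\alpha(P)$, so that $H_\alpha(P)+I_\alpha(P,Q)$ is exactly the right-hand side of the last display, and the theorem follows. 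The whole argument is routine; the one place that calls for care is this final bookkeeping, namely rewriting the rather opaque expression \eqref{eqn:sundaresan_divergence} for Sundaresan's divergence in the clean form $\log Z_{Q,\alpha}+\tfrac{1}{\rho}\log\sum_{x}P(x)Q(x)^{\alpha-1}$, after which the equality is immediate. It is also worth noting that, unlike in Theorem~\ref{thm:campbell_coding_mismatch} and Theorem~\ref{thm:guessing_mismatch_thm1}, the identity here is exact with no $\pm 1$ slack, precisely because the geometric/negative-binomial sum is evaluated in closed form rather than merely bounded.
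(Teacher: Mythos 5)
Your proof is correct and is in substance the same as the paper's: the paper's one-line proof applies Corollaries~\ref{cor:upper_bound_unified_mismatch} and \ref{cor:lower_bound_unified_mismatch} with $\psi=\widehat{Q}^{*}$ and $c=a=1$, which internally amounts to exactly your two steps --- reducing $\mathbb{E}[V_\rho]$ to $\mathbb{E}[\widehat{Q}^{*}(X)^{-\rho}]$ via the negative-binomial sum (Eqn.~\eqref{eqn:factorial_moment}) and then using the identity $Z_{Q,\alpha}^{\rho}\sum_x P(x)Q(x)^{\alpha-1}=2^{\rho(H_\alpha(P)+I_\alpha(P,Q))}$ (Eqn.~\eqref{eqn:Renyi_plus_Sundar}). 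You simply inline these computations rather than citing the framework corollaries, and your closing observation about the exactness of the equality is accurate.
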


\subsection{Tasks Partitioning Problem}
Let $\mathcal{X}$ be a finite set of tasks. A task $X$ is randomly drawn from $\mathcal{X}$ according to a probability distribution $P$ and is performed. The distribution $P$ may correspond to the frequency of occurrences of the tasks. Suppose there are only $N$ keys to which these tasks are to be associated. Typically, $N < |\mathcal{X}|$. Due to the limited availability of keys, more than one task may be associated with a single key. When a task needs to be performed, the key associated with the task is pressed and all the tasks associated with the key will be performed. The objective in this problem is to minimize the number of redundant tasks performed. Usual coding techniques suggest us to assign tasks with higher probability to individual keys and leave the lower probability tasks unassigned to any keys. But for an individual all the tasks can be equally important. Just that some tasks may have more frequency of occurrences while others have lesser. If $N \geq |\mathcal{X}|$, then one can perform tasks without any redundancy. However, Bunte and Lapidoth \cite{Bunte} show that, even when $N<|\mathcal{X}|$, one can accomplish the tasks with much less redundancy on average, provided the underlying probability distribution is different from the uniform distribution. When $N <  |\mathcal{X}|$ the only way is, to partition the set of tasks and assign keys in an intelligent way keeping the probability distribution in mind, so as to minimize the average number of redundant tasks performed. 

Let $\mathcal{A} = \{\mathcal{A}_1,\mathcal{A}_2,\ldots,\mathcal{A}_N\}$ be a partition of $\mathcal{X}$ that corresponds to the assignment of tasks to keys. Let $A(x)$ be the cardinality of the subset containing $x$ in the partition. We shall call $A$ the {\em partition function} associated with $\mathcal{A}$. Bunte and Lapidoth \cite[Th.~I.1]{Bunte} proved the following theorem for the case $\rho >0$.

\begin{theorem} \label{thm:task_partition_thm1}
Let $\rho \in (-1,0) \cup (0, \infty)$. Let $X$ be a random task from $\mathcal{X}$ following distribution $P$. Then the following hold.
	\begin{enumerate}
	\item[(a)] For any partition of $\mathcal{X}$ of size $N$ with partition function $A$, we have
		$$\frac{1}{\rho}\log \mathbb{E}[A(X)^\rho]\geq H_\alpha(P)-\log N.$$
		
	\item[(b)]
	\begin{enumerate}
	    
		\item[(i)] If $N>\log |\mathcal{X}|+2$ and $\rho > 0$, then there exists a partition of $\mathcal{X}$ of size at most $N$ with partition function $A$ such that
		$$1 \leq \mathbb{E}[A(X)^\rho] \leq 1+2^{\rho(H_\alpha(P)-\log\tilde{N})},$$
		where $\tilde{N} = \frac{N-\log |\mathcal{X}| - 2}{4}$ .
		
	    \item[(ii)] If $\rho < 0$, then there exists a partition of $\mathcal{X}$ of size at most $N$ with partition function $A$ such that
		$$\frac{1}{2} 2^{\rho H_\alpha(P)} \leq \mathbb{E}[A(X)^\rho] \leq 1.$$
	\end{enumerate}
	\end{enumerate}
\end{theorem}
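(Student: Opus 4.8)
The two assertions call for genuinely different techniques: part~(a) is a converse that fits the template of Theorem~\ref{thm:campbell_coding_thm1}, whereas part~(b) needs an explicit construction of a good partition.

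\textbf{Lower bound (a).} The plan is to first observe that a partition function obeys a Kraft--McMillan-type inequality. If $\mathcal{A}=\{\mathcal{A}_1,\dots,\mathcal{A}_N\}$ is a partition of $\mathcal{X}$ and $A(x)=|\mathcal{A}_j|$ for the unique $j$ with $x\in\mathcal{A}_j$, then
\[
\sum_{x\in\mathcal{X}}\frac{1}{A(x)}=\sum_{j=1}^{N}\sum_{x\in\mathcal{A}_j}\frac{1}{|\mathcal{A}_j|}=\#\{j:\mathcal{A}_j\neq\emptyset\}\le N,
\]
so $\{1/(NA(x))\}_{x}$ satisfies the analogue of \eqref{eqn:KraftMcmilan} with $2^{-L(x)}$ replaced by $1/(NA(x))$. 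I would then reproduce the Campbell/Ar\i kan moment--Kraft argument: writing $\alpha=1/(1+\rho)$, so that $\alpha\rho=1-\alpha$ and $1-\alpha\rho=\alpha$, apply H\"older's inequality (in its reverse form when $\rho\in(-1,0)$, where $\alpha>1$) to the identity $P(x)^{\alpha}=\big(P(x)A(x)^{\rho}\big)^{\alpha}\big(A(x)^{-1}\big)^{\alpha\rho}$. Summing over $x$ and using $\sum_x A(x)^{-1}\le N$ gives $\big(\sum_x P(x)A(x)^{\rho}\big)^{\alpha}\ge N^{-(1-\alpha)}\sum_x P(x)^{\alpha}$; raising to the power $1/\alpha$ and applying $\tfrac1\rho\log$ yields $\tfrac1\rho\log\mathbb{E}[A(X)^{\rho}]\ge H_{\alpha}(P)-\log N$, the inequality directions being consistent because the signs of $1-\alpha$ and of $1/\rho$ flip together. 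This is exactly the unified approach of Section~\ref{sec:unified_approach}: the only partition-specific ingredient is the budget $\sum_x 1/A(x)\le N$.

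\textbf{Achievability (b).} For the direct part I would build a partition around the optimal real-valued sizes. A Lagrange computation shows that the minimizer of $\sum_x P(x)A(x)^{\rho}$ under $\sum_x 1/A(x)=\nu$ is $\widetilde{A}(x)=(Z_{P,\alpha}/\nu)\,P(x)^{-\alpha}$, and then $\sum_x P(x)\widetilde{A}(x)^{\rho}=Z_{P,\alpha}^{1/\alpha}\nu^{-\rho}=2^{\rho(H_{\alpha}(P)-\log\nu)}$. The construction is: (i) round each $\widetilde{A}(x)$ up to a power of two, but never below $1$, obtaining $A(x)=2^{\lceil(\log\widetilde{A}(x))^{+}\rceil}$ and cap at $|\mathcal{X}|$; (ii) for each attained dyadic value $2^{j}$, partition the elements carrying that value into blocks of size $2^{j}$ with one possibly smaller remainder block, so the realized partition function $\bar A$ satisfies $\bar A(x)\le A(x)$; (iii) count blocks --- there are at most $\log|\mathcal{X}|+2$ distinct dyadic values and the number of blocks is at most $\sum_x 1/A(x)+(\log|\mathcal{X}|+2)$, which I would bound by roughly $\nu+\log|\mathcal{X}|+2$. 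Choosing $\nu$ as large as allowed while keeping this at most $N$ --- which, after absorbing the factor-two loss of dyadic rounding, forces $\nu$ of the order $\tilde{N}=(N-\log|\mathcal{X}|-2)/4$ --- produces an admissible partition. For $\rho>0$ I would split $\mathbb{E}[\bar A(X)^{\rho}]\le\mathbb{E}[A(X)^{\rho}]$ into the elements with $\widetilde{A}(x)<1$, for which $A(x)=1$ and the contribution is $\sum_{\widetilde{A}(x)<1}P(x)\le1$, and the remaining elements, for which $A(x)\le2\widetilde{A}(x)$ and the contribution is $\le 2^{\rho}\sum_x P(x)\widetilde{A}(x)^{\rho}\le 2^{\rho(H_{\alpha}(P)-\log\tilde{N})}$; together with the trivial $\mathbb{E}[A(X)^{\rho}]\ge1$ (every $A(x)\ge1$) this is (b)(i). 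For $\rho<0$ the upper bound $\mathbb{E}[A(X)^{\rho}]\le1$ is immediate from $A(x)\ge1$; for the lower bound I would again use $\bar A(x)\le A(x)\le2\widetilde{A}(x)$ on the elements with $\widetilde{A}(x)\ge1$, so $\mathbb{E}[\bar A(X)^{\rho}]\ge 2^{\rho}\sum_{\widetilde A(x)\ge1}P(x)\widetilde{A}(x)^{\rho}$, and show that because $|\rho|<1$ the resulting quantity is at least $\tfrac12\,2^{\rho H_{\alpha}(P)}$.

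\textbf{Main obstacle.} I expect the real work to be in part~(b): verifying that the dyadic block construction is realizable with at most $N$ keys, which means calibrating the scale $\nu$ against the number of distinct dyadic levels (this is precisely where the $\log|\mathcal{X}|+2$ term and the factor $4$ in $\tilde{N}$ originate), and separately controlling the high-probability elements whose ideal size $\widetilde{A}(x)$ is below $1$ --- these are forced into size-one blocks, consuming budget, and must be shown neither to exhaust the $N$ keys nor to damage the moment bound by more than the additive constant $1$ (when $\rho>0$) or the factor $\tfrac12$ (when $\rho<0$). The lower bound of part~(a), by contrast, is routine once the Kraft-type constraint $\sum_x 1/A(x)\le N$ is in hand.
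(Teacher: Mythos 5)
Your overall architecture matches the paper's: part~(a) rests on the Kraft-type budget $\sum_x 1/A(x)\le N$ (Proposition~\ref{prop:kraft_mcmillan_tasks1}) fed into a generic moment bound, and part~(b) rounds the ideal sizes $\widetilde A(x)\propto P(x)^{-\alpha}$ and packs them into at most $N$ blocks. The execution differs in two places. For (a) you run the Campbell/Ar\i kan H\"older argument (reverse H\"older when $\rho<0$) directly, whereas the paper gets the same inequality from the convex program $P_1$ via Lemma~\ref{lem:Kraft_type_equality} and KKT (Theorem~\ref{thm:lower_bound_unified}); both are correct, and yours is more self-contained at the cost of the sign bookkeeping you allude to. For (b) you re-derive the dyadic packing construction, which the paper simply imports as Bunte--Lapidoth's Proposition~\ref{prop:kraft_mcmillan_tasks2} (through Corollary~\ref{cor:bunte_cor1}); your block count $\sum_x 1/A(x)+(\log|\mathcal{X}|+2)$ and the calibration of $\nu$ against $\tilde N$ are consistent with the paper's choice $\beta=2Z_{P,\alpha}/(N-\log|\mathcal{X}|-2)$, and your $\rho>0$ bound (split into $\widetilde A<1$ contributing at most $1$ and $\widetilde A\ge1$ contributing at most $2^{\rho}\sum_x P\widetilde A^{\rho}=2^{\rho(H_\alpha(P)-\log(\nu/2))}$) is sound.

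There is, however, a genuine gap in your case (b)(ii). You propose to lower-bound $\mathbb{E}[\bar A(X)^{\rho}]$ by $2^{\rho}\sum_{\widetilde A(x)\ge1}P(x)\widetilde A(x)^{\rho}$ alone and claim this restricted sum is at least $\tfrac12\,2^{\rho H_\alpha(P)}$. That claim is false in general: the elements with $\widetilde A(x)<1$ are exactly the high-probability ones, and discarding their contribution can lose almost all of the mass. Concretely, take $\alpha=2$ (so $\rho=-\tfrac12$), $P=(0.9,\,0.1)$ and $\nu=10$; then $Z_{P,\alpha}=0.82$, the target is $\tfrac12 Z_{P,\alpha}^{1/2}\approx0.453$, the element of probability $0.9$ has $\widetilde A<1$, and the restricted sum evaluates to about $0.025$. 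The bound only survives because that element is placed in a singleton block and contributes $0.9\cdot 1^{\rho}$ to the expectation. So you must keep the term $\sum_{\widetilde A(x)<1}P(x)$ and then argue, e.g.\ via $\bigl(\sum_{\widetilde A<1}P(x)^{\alpha}\bigr)^{1/\alpha}\le\sum_{\widetilde A<1}P(x)$ together with subadditivity of $t\mapsto t^{1/\alpha}$, that the two contributions jointly dominate $\tfrac12 Z_{P,\alpha}^{1/\alpha}=\tfrac12\,2^{\rho H_\alpha(P)}$. The paper sidesteps this case split entirely: it uses the single pointwise bound $A(x)^{|\rho|}\le 1+\bigl(Z_{P,\alpha}/(\tilde N P(x)^{\alpha})\bigr)^{|\rho|}$ and then absorbs the additive $1$ multiplicatively using $Z_{P,\alpha}/P(x)^{\alpha}\ge1$ (Theorem~\ref{thm:bound1_unified_mismatch}, part 2), which yields $\mathbb{E}[A(X)^{\rho}]\ge\frac{\tilde N^{-\rho}}{1+\tilde N^{-\rho}}2^{\rho H_\alpha(P)}\ge\tfrac12 2^{\rho H_\alpha(P)}$ directly. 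I recommend adopting that device rather than the $\widetilde A\ge1$ restriction.
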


Bunte and Lapidoth also found conditions under which the redundancy can be made negligible on average while performing a long sequence of tasks.

\begin{theorem} \label{thm:task_partition_thm2}
{\rm \cite[Th.~I.2]{Bunte}}
	Let $X_1,X_2,\ldots,X_n$ be a sequence of i.i.d. tasks following distribution $P$. Then the following hold.
	\begin{enumerate}
		\item[(i)] If $\log N> H_{\alpha}(P)$, there exists partition $\mathcal{A}_n$ of $\mathcal{X}^n$ of size at most $N^n$ with associated partition function $A_n$  such that
		$$\lim_{n\to \infty} \mathbb{E}[A_n(X^n)^{\rho}] = 1.$$
		
		\item[(ii)] If $\log N<H_{\alpha}(P)$, then for partitions $\mathcal{A}_n$ of $\mathcal{X}^n$ of size at most $N^n$ with associated partition function $A_n$,
		$$\lim_{n\to\infty} \mathbb{E}[A_n(X^n)^{\rho}] = \infty.$$
	\end{enumerate}
\end{theorem}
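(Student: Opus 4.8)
The plan is to obtain Theorem~\ref{thm:task_partition_thm2} as a ``single-letterization'' of Theorem~\ref{thm:task_partition_thm1}, by applying the latter to the super-source $(\mathcal{X}^n, P_n)$ with $N^n$ keys in place of $(\mathcal{X}, P)$ with $N$ keys. The one structural fact that makes this reduction go through is the additivity of the R\'enyi entropy over product distributions, $H_\alpha(P_n) = n\,H_\alpha(P)$, which follows at once from \eqref{eqn:renyi_entropy} since $\sum_{x^n} P_n(x^n)^\alpha = \big(\sum_x P(x)^\alpha\big)^n$ (together with $|\mathcal{X}^n| = |\mathcal{X}|^n$). Before doing anything else I would dispose of the trivial case $|\mathcal{X}| \le N$ (partition $\mathcal{X}^n$ into singletons, so $A_n \equiv 1$ and $\mathbb{E}[A_n(X^n)^\rho]=1$), take $\rho>0$, and record that in part~(i) the hypothesis $\log N > H_\alpha(P) \ge 0$ forces $N \ge 2$, so that $N^n$ grows geometrically in $n$ while every correction term appearing in Theorem~\ref{thm:task_partition_thm1} is at most polynomial in $n$.

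For part~(ii) the argument is immediate: Theorem~\ref{thm:task_partition_thm1}(a) applied to $(\mathcal{X}^n, P_n, N^n)$ gives, for \emph{every} partition $\mathcal{A}_n$ of $\mathcal{X}^n$ of size at most $N^n$,
$$\frac{1}{\rho}\log \mathbb{E}[A_n(X^n)^\rho] \;\ge\; H_\alpha(P_n) - \log N^n \;=\; n\big(H_\alpha(P) - \log N\big),$$
and since $\log N < H_\alpha(P)$ and $\rho>0$ the right-hand side tends to $+\infty$, whence $\mathbb{E}[A_n(X^n)^\rho] \to \infty$.

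For part~(i), for every $n$ large enough that $N^n > n\log|\mathcal{X}| + 2$, I would invoke Theorem~\ref{thm:task_partition_thm1}(b)(i) on $(\mathcal{X}^n, P_n, N^n)$ to produce a partition $\mathcal{A}_n$ of $\mathcal{X}^n$ of size at most $N^n$ with $1 \le \mathbb{E}[A_n(X^n)^\rho] \le 1 + 2^{\rho(H_\alpha(P_n) - \log \widetilde{N}_n)}$, where $\widetilde{N}_n = (N^n - n\log|\mathcal{X}| - 2)/4$. Then I would check the elementary estimate $\log \widetilde{N}_n = n\log N + o(n)$ --- concretely $\widetilde{N}_n \ge N^n/8$ eventually, so $\log\widetilde{N}_n \ge n\log N - 3$ --- which makes the exponent satisfy $\rho\big(H_\alpha(P_n) - \log\widetilde{N}_n\big) \le \rho\,n\big(H_\alpha(P) - \log N\big) + 3\rho \to -\infty$; the upper bound then collapses to $1$ and a sandwich with the trivial lower bound $\mathbb{E}[A_n(X^n)^\rho]\ge 1$ finishes the proof.

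The ``hard part'' is not in this reduction at all: it is entirely carried by Theorem~\ref{thm:task_partition_thm1}, and in particular by the explicit construction underlying part~(b)(i), which the unified framework of Section~\ref{sec:unified_approach} is meant to supply. Within the present argument the only point needing a moment's care is confirming that the ``junk'' quantities in Theorem~\ref{thm:task_partition_thm1} --- the $\log|\mathcal{X}|$, the additive constant $2$, and the factor $4$ buried in $\widetilde{N}$ --- are annihilated by the normalization; they are, precisely because they enter only additively inside $\log\!\big(N^n - \text{poly}(n)\big)$ while $N^n$ grows geometrically, so after dividing the exponent by $n$ they vanish in the limit.
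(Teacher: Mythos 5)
Your proposal is correct and follows essentially the same route as the paper: both parts are obtained by applying the single-shot bounds of Theorem~\ref{thm:task_partition_thm1} (equivalently, Theorem~\ref{thm:lower_bound_unified} for part~(ii)) to the product source $(\mathcal{X}^n, P_n)$ with $N^n$ keys, using $H_\alpha(P_n) = nH_\alpha(P)$ and the fact that $\log\tilde{N}_n = n\log N + o(n)$ so the additive correction terms vanish after normalization. Your explicit handling of the regime where $N^n > n\log|\mathcal{X}|+2$ and of the degenerate case $N \ge |\mathcal{X}|$ is slightly more careful than the paper's, but the substance is identical.
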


\noindent
\textbf{Mismatch Case}

Consider a scenario where one does not know the true  underlying  probability  distribution $P$, but arbitrarily partitions $\mathcal{X}$. Then, the penalty due to such a partition can be measured by the $I_{\alpha}$-divergence as stated in the following theorem.

\begin{theorem} \label{thm:task_partition_mismatch_lb}
Let $\rho \in (-1,0) \cup (0,\infty)$. Let $\mathcal{A}$ be a partition of $\mathcal{X}$ of size $N$ with partition function $A$. Then, there exists a probability distribution $Q_{A}$ on $\mathcal{X}$ such that 
$$\frac{1}{\rho}\log \mathbb{E}[A(X)^\rho] = H_\alpha(P) + I_{\alpha}(P,Q_A) -\log N.$$
\end{theorem}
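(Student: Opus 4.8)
The plan is to build the distribution $Q_A$ explicitly out of the partition function and then to reduce the statement to a single algebraic identity. I would take
\[
Q_A(x) \;:=\; \frac{A(x)^{-1/\alpha}}{\sum_{y\in\mathcal{X}} A(y)^{-1/\alpha}}, \qquad x\in\mathcal{X},
\]
which is a genuine probability distribution because $A(x)\ge 1>0$ for every $x$. Put $S:=\sum_{y\in\mathcal{X}}A(y)^{-1/\alpha}$, so that $A(x)=S^{-\alpha}Q_A(x)^{-\alpha}$. Recalling $\alpha=1/(1+\rho)$, hence $\alpha\rho=1-\alpha$ and $(\alpha-1)/\rho=-\alpha$, a direct substitution gives
\[
\mathbb{E}_P\!\left[A(X)^\rho\right] \;=\; S^{-\alpha\rho}\sum_{x\in\mathcal{X}} P(x)\,Q_A(x)^{-\alpha\rho} \;=\; S^{\alpha-1}\sum_{x\in\mathcal{X}} P(x)\,Q_A(x)^{\alpha-1},
\]
and therefore $\tfrac1\rho\log\mathbb{E}_P[A(X)^\rho] = -\alpha\log S + \tfrac1\rho\log\!\left(\sum_{x\in\mathcal{X}} P(x)Q_A(x)^{\alpha-1}\right)$.

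Next I would record a general identity, valid for every probability distribution $Q$ and every $\rho\in(-1,0)\cup(0,\infty)$:
\[
\frac1\rho\log\!\left(\sum_{x\in\mathcal{X}} P(x)\,Q(x)^{\alpha-1}\right) \;=\; H_\alpha(P) + I_\alpha(P,Q) - \log Z_{Q,\alpha}.
\]
This falls straight out of the definition \eqref{eqn:sundaresan_divergence}: there the inner sum is $\sum_{x'}(Q(x')/Q(x))^\alpha = Z_{Q,\alpha}/Q(x)^\alpha$, so raising to the power $(1-\alpha)/\alpha$ and averaging against $P$ turns the bracket into $Z_{Q,\alpha}^{(1-\alpha)/\alpha}\sum_x P(x)Q(x)^{\alpha-1}$; applying the prefactor $\alpha/(1-\alpha)=1/\rho$ of the outer logarithm and cancelling the trailing $-H_\alpha(P)$ of \eqref{eqn:sundaresan_divergence} against the $H_\alpha(P)$ on the left produces the identity. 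Feeding it in with $Q=Q_A$, the theorem reduces to the single claim $\log Z_{Q_A,\alpha} = -\alpha\log S + \log N$, i.e.\ $Z_{Q_A,\alpha}=S^{-\alpha}N$.

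All that remains is the one fact peculiar to task partitioning, a Kraft-type identity for partition functions: since $\mathcal{A}=\{\mathcal{A}_1,\dots,\mathcal{A}_N\}$ partitions $\mathcal{X}$ into (nonempty) blocks and $A$ equals $|\mathcal{A}_j|$ on $\mathcal{A}_j$,
\[
\sum_{x\in\mathcal{X}}\frac{1}{A(x)} \;=\; \sum_{j=1}^{N}\sum_{x\in\mathcal{A}_j}\frac{1}{|\mathcal{A}_j|} \;=\; \sum_{j=1}^{N} 1 \;=\; N .
\]
Hence $Z_{Q_A,\alpha}=\sum_{x\in\mathcal{X}}\left(A(x)^{-1/\alpha}/S\right)^{\alpha}=S^{-\alpha}\sum_{x\in\mathcal{X}}A(x)^{-1}=S^{-\alpha}N$, which closes the argument; nothing in the chain depends on the sign of $\rho$, so $\rho>0$ and $\rho\in(-1,0)$ are handled at once.

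I do not expect a real obstacle beyond keeping the exponents straight; the whole content is the two choices that $Q_A\propto A^{-1/\alpha}$ is the right candidate and that $\sum_x A(x)^{-1}=N$ is the constant playing the role of $\eta$ in Theorem~\ref{thm:campbell_coding_mismatch} and of $h_M$ in Theorem~\ref{thm:guessing_mismatch_thm1}. Indeed, I would anticipate the paper's unified framework to isolate the first two displays as a one-line lemma --- $\tfrac1\rho\log\mathbb{E}_P[f(X)^\rho]=H_\alpha(P)+I_\alpha(P,Q_f)-\log\sum_x f(x)^{-1}$ for any $f:\mathcal{X}\to(0,\infty)$ with $Q_f\propto f^{-1/\alpha}$ --- after which the present statement is simply the case $f=A$; the only delicate point is that $\alpha$ may exceed $1$ when $\rho\in(-1,0)$, so one should double-check that the manipulations of \eqref{eqn:sundaresan_divergence} remain valid there (they do, since all the $Q(x)$ are strictly positive).
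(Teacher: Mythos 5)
Your proof is correct and matches the paper's: your $Q_A\propto A^{-1/\alpha}$ is exactly the paper's $Q_A\propto A^{-(1+\rho)}$ (since $1+\rho=1/\alpha$), and you use the same two ingredients, namely the identity $\sum_x A(x)^{-1}=N$ (Proposition~\ref{prop:kraft_mcmillan_tasks1}) and the identity $Z_{Q,\alpha}^{\rho}\sum_x P(x)Q(x)^{\alpha-1}=2^{\rho(H_\alpha(P)+I_\alpha(P,Q))}$. The only cosmetic difference is that the paper packages the final algebra as an application of Corollaries~\ref{cor:upper_bound_unified_mismatch} and \ref{cor:lower_bound_unified_mismatch} with $c=a=1/N$, which is precisely the ``one-line lemma'' you anticipated.
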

\medskip

\noindent
A converse result is the following.

\begin{theorem} \label{thm:task_partition_mismatch_ub}{\rm \cite[Sec. IV]{Bunte}}
Let $\rho > 0$. Let $X$ be a random task from $\mathcal{X}$ following distribution
$P$. Let $Q$ be  another  distribution on $\mathcal{X}$. If $N > \log |\mathcal{X}| +2$, then there exists partition $\mathcal{A}_Q$ (with an associated partition function $A_Q$) of $\mathcal{X}$ of size at most $N$ such that
$$ \mathbb{E}[A_Q(X)^\rho] \leq 1+2^{\rho(H_\alpha(P) + I_\alpha(P,Q)-\log\tilde{N})}, $$
where $\tilde{N} = \frac{N-\log |\mathcal{X}| - 2}{4}$ and the expectation is with respect to $P$.
\end{theorem}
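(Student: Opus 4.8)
The plan is to produce $\mathcal{A}_Q$ by running the \emph{same} achievability construction that proves Theorem~\ref{thm:task_partition_thm1}(b)(i), but feeding it the assumed law $Q$ in place of the true law $P$. The point is that that construction --- a careful assignment of the $N$ keys to the tasks according to the dyadic magnitude classes of the tilted measure, here $Q^{(\alpha)}(x)=Q(x)^{\alpha}/Z_{Q,\alpha}$, with the high-probability tasks made singletons --- never uses that the data are actually generated by the distribution that defines the partition; it only reads off numerical probabilities. Consequently, when $N>\log|\mathcal{X}|+2$, it yields a partition $\mathcal{A}_Q$ of $\mathcal{X}$ of size at most $N$ whose partition function satisfies the pointwise bound
\[
A_Q(x)\;\le\;\max\bigl\{\,1,\ \bigl\lceil Z_{Q,\alpha}\,/\,(\tilde N\, Q(x)^{\alpha})\bigr\rceil\,\bigr\}\qquad\text{for every }x\in\mathcal{X},
\]
with $\tilde N=(N-\log|\mathcal{X}|-2)/4$.

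Granting this, the remainder is bookkeeping. For $\rho>0$ one has $\max\{1,m\}^{\rho}\le 1+m^{\rho}$ for all $m\ge 0$, so taking expectations with respect to the true distribution $P$ gives
\[
\mathbb{E}[A_Q(X)^{\rho}]\;\le\;1+\bigl(Z_{Q,\alpha}/\tilde N\bigr)^{\rho}\sum_{x\in\mathcal{X}}P(x)\,Q(x)^{-\rho\alpha}.
\]
To recognize the last sum, substitute $\alpha=1/(1+\rho)$ --- so that $\alpha-1=-\rho\alpha$ and $\alpha/(1-\alpha)=1/\rho$ --- into the closed form \eqref{eqn:sundaresan_divergence}, after simplifying $\sum_{x'}(Q(x')/Q(x))^{\alpha}=Z_{Q,\alpha}/Q(x)^{\alpha}$; this yields the identity
\[
H_{\alpha}(P)+I_{\alpha}(P,Q)\;=\;\log Z_{Q,\alpha}+\tfrac1\rho\log\sum_{x\in\mathcal{X}}P(x)\,Q(x)^{-\rho\alpha},
\]
equivalently $\bigl(Z_{Q,\alpha}/\tilde N\bigr)^{\rho}\sum_{x}P(x)Q(x)^{-\rho\alpha}=2^{\rho(H_{\alpha}(P)+I_{\alpha}(P,Q)-\log\tilde N)}$, which is exactly the claimed bound. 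Setting $Q=P$ recovers Theorem~\ref{thm:task_partition_thm1}(b)(i) since $I_{\alpha}(P,P)=0$, a useful sanity check.

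The step I expect to be the real obstacle is the displayed pointwise estimate with the precise constant $\tilde N$: one must replay the counting in \cite[Sec.~IV]{Bunte} of how many keys each dyadic magnitude class of $Q^{(\alpha)}$ consumes and check that the total stays below $N$ exactly under the hypothesis $N>\log|\mathcal{X}|+2$ (this is what forces the term $\log|\mathcal{X}|+2$ and the factor $4$ in $\tilde N$). An attractive alternative, which I would actually pursue, is to bypass \cite{Bunte} and re-derive both the partition and this pointwise bound inside the unified framework of Section~\ref{sec:unified_approach}, which already isolates $\sum_x P(x)\,(c/Q(x)^{\alpha})^{\rho}$ as the quantity to be controlled and therefore makes the mismatched case no harder than the matched one.
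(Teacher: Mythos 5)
Your proposal follows essentially the same route as the paper: the paper's proof simply reruns the construction of Theorem~\ref{thm:task_partition_thm1}(b) with $Q$ in place of $P$ (taking $\lambda(x)=\lceil \beta\, Q(x)^{-\alpha}\rceil$ with $\beta = Z_{Q,\alpha}/(2\tilde N)$, verifying $\sum_x 1/\lambda(x)\le (N-\log|\mathcal{X}|-2)/2$ so Corollary~\ref{cor:bunte_cor1} applies) and then invokes Theorem~\ref{thm:bound1_unified_mismatch} with $\psi=1/A_Q$, $b=1$, $c=1/\tilde N$, which is exactly your pointwise-bound-plus-identity computation, the identity being the paper's Eqn.~\eqref{eqn:Renyi_plus_Sundar}. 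The one bookkeeping slip is in your displayed pointwise estimate: the ceiling should be applied to $Z_{Q,\alpha}/(2\tilde N\, Q(x)^{\alpha})$ and then removed via $\lceil y\rceil^{\rho}\le 1+(2y)^{\rho}$ (this is where the factor $4$ in $\tilde N$ is spent), since as written the step from $\max\{1,\lceil m\rceil\}^{\rho}$ to $1+m^{\rho}$ with the un-rounded $m$ does not hold for $m>1$.
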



\section{A Unified Approach} \label{sec:unified_approach}


Our objective in this section is to formulate a more general problem to which the problems studied in the previous section are particular cases.

\begin{theorem}
\label{thm:lower_bound_unified}
Let $\rho\in (-1,0)\cup (0,\infty)$. Let $\psi: \mathcal{X} \to [0,\infty)$ such that $\sum_{x \in \mathcal{X}} \psi(x) \leq b$ for some $b>0$. Then 
\begin{equation}
\label{eqn:lower_bound_on_cumulant}
    \frac{1}{\rho} \log \mathbb{E}[\psi(X)^{-\rho}] \geq  H_{\alpha}(P) - \log b.
\end{equation}
The lower bound is achieved when 
\begin{equation}
    \label{eqn:equality_attain_unified}
    \psi(x) =  {b \cdot P(x)^\alpha}/{Z_{P,\alpha}} \quad \text{for } x \in \mathcal{X}.
\end{equation}
\end{theorem}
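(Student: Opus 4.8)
The plan is to reduce the inequality to a weighted version of H\"older's inequality (equivalently, to the fact that the exponent-$\alpha$ ``tilting'' of $P$ is the unique minimizer of a certain free-energy-type functional). First I would observe that the claim is scale-invariant in $\psi$ in the sense that replacing $\psi$ by $(b/\sum_x\psi(x))\,\psi$ only increases the left-hand side (for $\rho>0$, since $\psi(x)^{-\rho}$ decreases when $\psi$ is scaled up; for $\rho\in(-1,0)$ one must check the direction carefully using $-\rho$), so without loss of generality I may assume $\sum_x\psi(x)=b$, and then it suffices to prove the bound with equality condition under that normalization. Writing $\alpha=1/(1+\rho)$, note $-\rho=1-1/\alpha=(\alpha-1)/\alpha$, so the quantity to bound below is $\frac{1}{\rho}\log\sum_x P(x)\psi(x)^{(\alpha-1)/\alpha}$.

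Next I would apply H\"older's inequality (or, for $\rho\in(-1,0)$, its reverse form) with the conjugate pair of exponents determined by $\alpha$. Concretely, I would write $P(x)\psi(x)^{(\alpha-1)/\alpha} = \big(P(x)^{\alpha}\big)^{?}\cdot\big(\psi(x)\big)^{?}$ with exponents chosen so that one factor sums to $Z_{P,\alpha}$ and the other sums to $\sum_x\psi(x)=b$; the bookkeeping gives $\sum_x P(x)\psi(x)^{(\alpha-1)/\alpha} \gtrless Z_{P,\alpha}^{1/\alpha}\,b^{(\alpha-1)/\alpha}$, where the direction of the inequality flips precisely according to whether $\alpha<1$ (i.e. $\rho>0$) or $\alpha>1$ (i.e. $\rho\in(-1,0)$) — but in both cases, after multiplying by $\frac{1}{\rho}$ (whose sign also flips), the resulting bound on $\frac{1}{\rho}\log(\cdot)$ comes out as $\frac{1}{\rho\alpha}\log Z_{P,\alpha} + \frac{1}{\rho}\cdot\frac{\alpha-1}{\alpha}\log b$. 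Then I would simplify the constants: $\frac{1}{\rho\alpha} = \frac{1+\rho}{\rho}=\frac{1}{1-\alpha}$, so the first term is exactly $H_{\alpha}(P)$, and $\frac{1}{\rho}\cdot\frac{\alpha-1}{\alpha} = \frac{1}{\rho}\cdot(-\rho) = -1$, giving the $-\log b$ term, which yields \eqref{eqn:lower_bound_on_cumulant}.

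Finally, for the equality case I would invoke the standard equality condition in H\"older's inequality: equality holds iff the two factors are proportional, i.e. $P(x)^{\alpha}$ is proportional to $\psi(x)$, which combined with $\sum_x\psi(x)=b$ forces $\psi(x)=b\,P(x)^{\alpha}/Z_{P,\alpha}$; one then checks by direct substitution that this $\psi$ indeed attains the bound. The main obstacle I anticipate is purely a matter of sign-chasing: for $\rho\in(-1,0)$ we have $\alpha>1$, H\"older's inequality reverses, and $1/\rho<0$, so I must be careful that the two reversals compose to give the stated lower bound rather than an upper bound, and that the hypothesis $\sum_x\psi(x)\le b$ (an inequality, not an equality) is exploited in the correct direction during the normalization step. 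Everything else is routine algebra with the substitution $\alpha=1/(1+\rho)$.
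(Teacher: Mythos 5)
Your proposal is correct, but it takes a genuinely different route from the paper. The paper treats the statement as a convex program: it minimizes $\mathrm{sgn}(\rho)\,\mathbb{E}[\phi(X)^{-\rho}]$ over $\{\phi\ge 0:\sum_x\phi(x)\le b\}$, first proving (Lemma~\ref{lem:Kraft_type_equality}) that the constraint must be active at the optimum, then extracting the minimizer $\phi^*(x)=b\,P(x)^{\alpha}/Z_{P,\alpha}$ from the KKT conditions and evaluating the objective there. You instead prove the inequality directly by H\"older (for $\alpha>1$, i.e.\ $\rho\in(-1,0)$, applied to $\sum_x (P(x)^{\alpha})^{1/\alpha}\psi(x)^{(\alpha-1)/\alpha}$ with conjugate exponents $\alpha$ and $\alpha/(\alpha-1)$) or its reverse form (for $\alpha<1$, equivalently by applying forward H\"older to $Z_{P,\alpha}=\sum_x[P(x)\psi(x)^{-\rho}]^{\alpha}\psi(x)^{1-\alpha}$), with the equality case read off from the proportionality condition; your exponent bookkeeping $\tfrac{1}{\rho\alpha}=\tfrac{1}{1-\alpha}$ and $\tfrac{1}{\rho}\cdot\tfrac{\alpha-1}{\alpha}=-1$ is right, and this is essentially the classical Campbell--Arikan argument. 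Each approach buys something: yours is shorter and self-contained, while the paper's optimization framing yields the optimizer constructively and extends verbatim to other objectives (the $-\log\psi$ case of Theorem~\ref{thm:lower_bound_logfunc} and the general convex $f$ of the remark), which is the point of their unification. One small slip: normalizing $\psi$ to $(b/\sum_x\psi(x))\psi$ \emph{decreases} $\tfrac{1}{\rho}\log\mathbb{E}[\psi(X)^{-\rho}]$ by $\log(b/\sum_x\psi(x))\ge 0$ (the $\rho$'s cancel, so this holds for both signs of $\rho$), not increases it as you wrote --- fortunately the decrease is exactly the direction that makes the reduction legitimate, and in fact the normalization is unnecessary since in both H\"older applications the factor $(\sum_x\psi(x))$ appears with a positive exponent on the majorizing side and can be replaced by $b$ directly.
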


\noindent
To prove the above theorem, consider the following optimization problem
\begin{align*}
& \hspace{7mm}P_1 : \min_{\phi} \quad \textrm{sgn}(\rho) \cdot \mathbb{E}[\phi(X)^{-\rho}]\\
& \hspace{-10mm} \textrm{subject to:} \\
&  \sum \nolimits_{x \in \mathcal{X}} {\phi(x)} \leq b  \quad \textrm{and} \quad \phi(x) \geq 0 \quad \forall x \in \mathcal{X},
\end{align*}
where $\text{sgn}(\rho)$ is the signum function of $\rho$. It is easy to check that objective function of problem $P_1$ is convex for all $\rho \in (-1,0) \cup (0, \infty)$. Also, since the constraint set is compact, the minimum is indeed attained.

\begin{lemma}
\label{lem:Kraft_type_equality}
	Let $\phi^{*} = \{ \phi^{*}(x), x \in \mathcal{X}\}$ be an optimal solution of problem $P_1$. Then, $\sum_{x \in \mathcal{X}} {\phi^{*}(x)} = b$.
\end{lemma}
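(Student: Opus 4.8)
The plan is to argue by contradiction: if the constraint $\sum_{x}\phi(x)\le b$ is slack at an optimal $\phi^{*}$, then a suitable up-scaling of $\phi^{*}$ remains feasible and strictly improves the objective, which is impossible.

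First I would dispose of the degenerate situations, since this is where the only real care is needed. Because $P(x)>0$ for every $x\in\mathcal{X}$, if $\rho>0$ then any $\phi$ with $\phi(x)=0$ for some $x$ has $\mathbb{E}[\phi(X)^{-\rho}]=+\infty$; as $\phi\equiv b/M$ is feasible with finite objective, such a $\phi$ cannot be optimal, so for $\rho>0$ we may assume $\phi^{*}(x)>0$ for all $x$, whence $\mathbb{E}[\phi^{*}(X)^{-\rho}]\in(0,\infty)$. If $\rho\in(-1,0)$, write $-\rho=|\rho|\in(0,1)$; then $\mathbb{E}[\phi^{*}(X)^{-\rho}]=\sum_{x}P(x)\phi^{*}(x)^{|\rho|}\ge 0$, and it vanishes only if $\phi^{*}\equiv 0$. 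But $\phi^{*}\equiv 0$ gives objective value $-\mathbb{E}[\phi^{*}(X)^{|\rho|}]=0$, while any strictly positive feasible $\phi$ gives a strictly negative value $-\mathbb{E}[\phi(X)^{|\rho|}]<0$; hence $\phi^{*}\not\equiv 0$ and $\mathbb{E}[\phi^{*}(X)^{-\rho}]>0$. In either case $\textrm{sgn}(\rho)\cdot\mathbb{E}[\phi^{*}(X)^{-\rho}]$ is nonzero and has the sign of $\rho$, and $\sum_{x}\phi^{*}(x)>0$.

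Now suppose, for contradiction, that $\sum_{x}\phi^{*}(x)=b'<b$. By the previous step $b'>0$, so $c:=b/b'>1$ is well defined, and $\tilde\phi(x):=c\,\phi^{*}(x)$ satisfies $\tilde\phi(x)\ge 0$ and $\sum_{x}\tilde\phi(x)=b$, i.e.\ $\tilde\phi$ is feasible for $P_1$. By positive homogeneity of degree $-\rho$, $\mathbb{E}[\tilde\phi(X)^{-\rho}]=c^{-\rho}\,\mathbb{E}[\phi^{*}(X)^{-\rho}]$, so the objective at $\tilde\phi$ is $c^{-\rho}$ times that at $\phi^{*}$. When $\rho>0$ we have $c^{-\rho}\in(0,1)$ and the objective at $\phi^{*}$ is strictly positive, so it strictly decreases; when $\rho\in(-1,0)$ we have $c^{-\rho}>1$ and the objective at $\phi^{*}$ is strictly negative, so it again strictly decreases. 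Either way $\tilde\phi$ strictly beats $\phi^{*}$, contradicting optimality; hence $\sum_{x}\phi^{*}(x)=b$.

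I do not expect a substantive obstacle here: the argument is just the observation that $\phi\mapsto\mathbb{E}[\phi(X)^{-\rho}]$ is positively homogeneous of degree $-\rho$, combined with sign bookkeeping in $\rho$. The only point requiring a little attention — and the reason I isolate it before the rescaling step — is ruling out degenerate optima (an $\phi^{*}$ vanishing somewhere, or identically zero), so that the rescaled solution is genuinely feasible with $c>1$ and the objective at $\phi^{*}$ is nonzero with a definite sign.
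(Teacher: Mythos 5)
Your proof is correct and follows essentially the same route as the paper: assume the constraint is slack, rescale $\phi^{*}$ by the factor $b/\sum_x\phi^{*}(x)>1$, and use homogeneity of degree $-\rho$ together with the sign of $\rho$ to get a strict improvement, contradicting optimality. Your preliminary step ruling out degenerate optima (so that the objective at $\phi^{*}$ is nonzero with a definite sign and the scaling factor is well defined) is a careful addition that the paper's proof implicitly glosses over, but it does not change the argument.
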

\begin{proof}
	Suppose $\sum_{x \in \mathcal{X}} {\phi^{*}(x)} < b$. Let $\phi^{'}(x) = \eta \phi^{*}(x)$ for $x\in\mathcal{X}$, where $\eta = {b}/{\sum_{x \in \mathcal{X}} {\phi^{*}(x)}}  > 1$. It is easy to see that $\phi^{'}$ satisfies the constraint of problem $P_1$. Further, we have
	\begin{align*}
	\textrm{sgn}(\rho) \mathbb{E}[\phi^{'}(X)^{-\rho}] = \textrm{sgn}(\rho) \eta^{-\rho} \mathbb{E}[\phi^{*}(X)^{-\rho}] \overset{(a)}{<}  \textrm{sgn}(\rho) \mathbb{E}[\phi^{*}(X)^{-\rho}],
	\end{align*}
where inequality~(a) follows because $\textrm{sgn}(\rho) \eta^{-\rho} < \textrm{sgn}(\rho)$. Hence $\phi^{*}$ cannot be optimal --- a contradiction. Thus, we must have $\sum_{x \in \mathcal{X}} {\phi^{*}(x)} = b$.
\end{proof}

\medskip
\noindent
\textbf{\textit{Proof of Theorem~\ref{thm:lower_bound_unified}}}: Due to Lemma~\ref{lem:Kraft_type_equality}, the optimal value of problem $P_1$ is same as the following optimization problem
\begin{align*}
& \hspace{7mm}P_2 : \min_{\phi} \quad \textrm{sgn}(\rho) \cdot \mathbb{E}[\phi(X)^{-\rho}]\\
& \hspace{-10mm} \textrm{subject to:} \\
&  \sum \nolimits_{x \in \mathcal{X}} {\phi(x)} = b  \quad \textrm{and} \quad \phi(x) \geq 0 \quad \forall x \in \mathcal{X}
\end{align*}
Problem $P_2$ satisfies \emph{Slater's condition}. As a consequence of this, there is no duality gap, and we can solve problem $P_2$ by solving its dual problem. The Lagrangian of problem $P_2$ is given by
\begin{align*}
L(\boldsymbol{\phi}, \lambda, \mu) = \textrm{sgn}(\rho) \sum \nolimits_{x \in \mathcal{X}}  P(x) \phi(x)^{-\rho} + \lambda  \sum \nolimits_{x \in \mathcal{X}} {\phi(x)} - \sum \nolimits_{x \in \mathcal{X}} \mu(x) \phi(x),
\end{align*}
where $\boldsymbol{\mu} = \{\mu(x),x \in \mathcal{X}\}$ is a set of  non-negative scalars. The variables $\lambda$ and $\boldsymbol{\mu}$ are known as Lagrangian multipliers. These are also known as \emph{dual variables}. An application of Karush-Kuhn-Tucker conditions \cite[Prop.~3.3.1]{2003xxNLP_Ber} tell us that, if $\phi^{*}$ is an optimal solution of problem $P_2$, then there should exist dual variables $\lambda^{*}$ and $\boldsymbol{\mu}^{*}$ such that the following hold.
\begin{align}
 -|\rho|  P(x) \phi^{*}(x)^{-(\rho+1)} + \lambda^{*} -\mu^{*}(x) &= 0 \quad \forall x \in \mathcal{X}, \label{eqn:kkt1} \\
\sum \nolimits_{x \in \mathcal{X}} {\phi^{*}(x)} &= b, \label{eqn:kkt2} \\
\phi^{*}(x) &\geq 0 \quad \forall x \in \mathcal{X}, \label{eqn:kkt_primal_ineq} \\
\mu^{*}(x) &\geq 0 \quad \forall x \in \mathcal{X}, \label{eqn:kkt3} \\
\text{and}\quad \mu^{*}(x) \phi^{*}(x) &= 0 \quad \forall x \in \mathcal{X}. \label{eqn:kkt4} 
\end{align}

\noindent
Multiplying \eqref{eqn:kkt1} by $\phi^{*}(x)$ and summing over all $x \in \mathcal{X}$ we get 
$$\lambda^{*} = \frac{|\rho|}{b}  \sum \nolimits_{x \in \mathcal{X}} P(x) \phi^{*}(x)^{-\rho}.$$

\noindent
Substituting this in \eqref{eqn:kkt1}, we obtain 
\begin{align}
\frac{|\rho|}{b}  \sum_{x \in \mathcal{X}} P(x) \phi^{*}(x)^{-\rho} = |\rho|  P(x) \phi^{*}(x)^{-(\rho+1)} + \mu^{*}(x) \quad \forall x \in \mathcal{X}. \label{eqn:kkt5}
\end{align}

\noindent
Since $\mu^{*}(x) \geq 0$, we have
\begin{equation}
\label{eqn:kkt6}
    \sum \nolimits_{x \in \mathcal{X}} P(x) \phi^{*}(x)^{-\rho} \geq  b P(x) \phi^{*}(x)^{-(\rho+1)}  \quad \forall x \in \mathcal{X}.
\end{equation}
Suppose $\phi^{*}(\hat{x}) = 0$ for some $\hat{x} \in \mathcal{X}$. Then,
\begin{itemize}
    \item for $\rho \in (0,\infty)$, \eqref{eqn:kkt6} says that the optimal value is infinite. This is not possible because we can always find a feasible point for problem $P_2$ where the objective function evaluates to a finite quantity. 
    \item for $\rho \in (-1,0)$, the left side of \eqref{eqn:kkt6} is always finite, whereas the right side evaluates to $\infty$ --- a contradiction.
\end{itemize}

Thus, we should have $\phi^{*}(x) > 0 \, \forall x \in \mathcal{X}$. This, in turn, implies that $\mu^{*}(x) = 0 \, \forall x \in \mathcal{X}$ (due to Eqn.~\eqref{eqn:kkt4}). Plugging $\mu^{*}(x) = 0$ in \eqref{eqn:kkt5}, we get 
$$\phi^{*}(x) = \kappa P(x)^{\alpha}  \quad \forall x \in \mathcal{X},$$
where $\kappa$ is the constant of proportionality. Summing over all $x \in \mathcal{X}$ and using the fact that $\sum_{x \in \mathcal{X}} {\phi^{*}(x)} = b$, we get  $\kappa = b/Z_{P,\alpha}$. Thus,  the optimal solution ${\phi}^{*}$ of problem $P_2$ has the following form
$$ \phi^{*}(x) =  \frac{b \cdot P(x)^\alpha}{Z_{P,\alpha}}$$
and 
\begin{align*}
 \mathbb{E}[\phi^{*}(X)^{-\rho}] =  b^{-\rho} \sum_{x \in \mathcal{X}} P(x) \cdot \left(\frac{Z_{P,\alpha}}{P(x)^\alpha} \right)^{\rho}=  b^{-\rho} \left( \sum_{x\in\mathcal{X}} P(x)^\alpha \right)^{\frac{1-\alpha}{\alpha}} \cdot\sum_{x \in \mathcal{X}} P(x)^\alpha = b^{-\rho} \left( \sum_{x \in \mathcal{X}} P(x)^\alpha \right)^{\frac{1}{\alpha}}.
\end{align*}
Now, if $\psi: \mathcal{X} \to [0, \infty)$ such that $\sum_{x \in \mathcal{X}} \psi(x) \leq b$, then we have
\begin{align*}
\textrm{sgn}(\rho) \cdot \mathbb{E}[\psi(X)^{-\rho}] \geq \textrm{sgn}(\rho) \cdot \mathbb{E}[\phi^{*}(X)^{-\rho}] = \textrm{sgn}(\rho) \cdot b^{- \rho} \left( \sum_{x \in \mathcal{X}} P(x)^{\alpha} \right)^{1/\alpha}.
\end{align*}

Taking log on both sides of the above inequality and multiplying throughout by $1/\rho$, we get (\ref{eqn:lower_bound_on_cumulant}).
\hfill $\qed$

\begin{theorem} \label{thm:lower_bound_logfunc}
Let $\psi: \mathcal{X} \to [0,\infty)$ such that $\sum_{x \in \mathcal{X}} \psi(x) \leq b$, then 
\begin{equation}
\label{eqn:lower_bd_exp_length}
    \mathbb{E}\left[\log \frac{1}{\psi(X)} \right] \geq  H(P) - \log b,
\end{equation}
where $H(P)$ is the Shannon entropy. The lower bound is achieved when $\psi(x)=b\cdot P(x) \,\, \forall x \in \mathcal{X}$.
\end{theorem}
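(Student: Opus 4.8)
The plan is to run exactly the Lagrangian/KKT argument that proved Theorem~\ref{thm:lower_bound_unified}, now with the convex objective $\phi \mapsto \mathbb{E}\bigl[\log(1/\phi(X))\bigr]$ in place of $\mathrm{sgn}(\rho)\cdot\mathbb{E}[\phi(X)^{-\rho}]$; indeed \eqref{eqn:lower_bd_exp_length} is morally the $\rho\to 0$ specialization of \eqref{eqn:lower_bound_on_cumulant}. First I would dispose of the degenerate case: if $\psi(\hat{x})=0$ for some $\hat{x}\in\mathcal{X}$, then since $P(\hat{x})>0$ the left-hand side of \eqref{eqn:lower_bd_exp_length} is $+\infty$ and there is nothing to prove, so I may assume $\psi(x)>0$ for all $x$. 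Next, consider minimizing $\sum_{x}P(x)\log(1/\phi(x))$ subject to $\sum_{x}\phi(x)\le b$ and $\phi(x)\ge 0$; the objective is convex because $-\log$ is convex, and it is lower semicontinuous (valued in $(-\infty,+\infty]$) on the compact feasible set, so a minimizer $\phi^{*}$ exists.

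Then I would show the budget constraint is active at the optimum, mimicking Lemma~\ref{lem:Kraft_type_equality}: if $\sum_{x}\phi^{*}(x)=b'<b$, replacing $\phi^{*}$ by $\eta\phi^{*}$ with $\eta=b/b'>1$ keeps feasibility and changes the objective by $-\log\eta<0$, contradicting optimality. With the constraint tight (and all constraints affine, so a constraint qualification holds), the KKT stationarity condition reads $-P(x)/\phi^{*}(x)+\lambda^{*}-\mu^{*}(x)=0$ with $\mu^{*}(x)\ge 0$ and complementary slackness $\mu^{*}(x)\phi^{*}(x)=0$. Multiplying by $\phi^{*}(x)$ and summing over $x$ gives $-1+\lambda^{*}b=0$, i.e.\ $\lambda^{*}=1/b$. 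An interiority check --- $\phi^{*}(\hat{x})=0$ would force the objective to $+\infty$, impossible since feasible points with finite objective exist --- yields $\phi^{*}(x)>0$, hence $\mu^{*}(x)=0$, hence $\phi^{*}(x)=P(x)/\lambda^{*}=b\,P(x)$. Substituting back, the optimal value is $\sum_{x}P(x)\log\bigl(1/(bP(x))\bigr)=H(P)-\log b$; since any admissible $\psi$ is feasible for this problem, \eqref{eqn:lower_bd_exp_length} follows, with equality precisely at $\psi(x)=b\,P(x)$.

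I do not expect a genuine obstacle here; the only points needing mild care are the existence of the minimizer (handled by compactness plus lower semicontinuity, the objective blowing up to $+\infty$ on the part of the boundary where some $\phi(x)=0$) and the interiority of $\phi^{*}$. For the write-up I would also note the shorter, self-contained alternative: for admissible $\psi$, monotonicity lets one assume $\sum_{x}\psi(x)=b$, so $Q:=\psi/b$ is a probability distribution and
\[
\mathbb{E}\Bigl[\log\tfrac{1}{\psi(X)}\Bigr] = -\log b + \sum_{x}P(x)\log\tfrac{1}{Q(x)} = H(P) - \log b + I(P,Q) \ \ge\ H(P)-\log b,
\]
by non-negativity of the Kullback--Leibler divergence, with equality iff $Q=P$, i.e.\ $\psi=bP$. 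One could also derive the statement by letting $\rho\to 0$ in Theorem~\ref{thm:lower_bound_unified} and using $H_{\alpha(\rho)}(P)\to H(P)$, but justifying the interchange of the limit with the expectation makes that route less economical than either of the two arguments above.
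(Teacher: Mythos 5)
Your proposal is correct and follows essentially the same route as the paper, which likewise proves this theorem by rerunning the Lemma~\ref{lem:Kraft_type_equality} scaling argument and the KKT derivation of Theorem~\ref{thm:lower_bound_unified} with the objective $-\mathbb{E}[\log\psi(X)]$ to obtain $\psi^{*}(x)=b\,P(x)$; you simply spell out the details the paper leaves implicit. The shorter alternative you mention (normalizing $\psi$ to a distribution $Q$ and invoking $I(P,Q)\ge 0$) is a valid and arguably cleaner supplement, but it is not the paper's argument.
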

\begin{proof}
It is easy to see that Lemma \ref{lem:Kraft_type_equality} also holds if the objective function is $-\mathbb{E}[\log \psi(X)]$. Hence proceeding as in Theorem \ref{thm:lower_bound_unified}, with $\textrm{sgn}(\rho) \mathbb{E}[\psi(X)^{-\rho}]$ replaced by $-\mathbb{E}[\log \psi(X)]$, we get the optimal solution $\psi^*(x) = b\cdot P(x)$. This implies (\ref{eqn:lower_bd_exp_length}).
\end{proof}


\begin{remark}
    In fact, for any convex and continuously differentiable function $f : \mathbb{R}^{+} \to \mathbb{R}$, and $\psi : \mathcal{X} \to [0, \infty)$ such that $\sum_{x \in \mathcal{X}} \psi(x) \leq b$, we have 
    \begin{align}
    \mathbb{E}[f(\psi(X))] \geq  \mathbb{E}\left[ f \left(f^{'-1} \left( \frac{\lambda^*}{p(X)} \right) \right) \right], \label{eq:general_lower_bound}
    \end{align}
    where $f^{'}$ is the derivative of $f$, $f^{'-1}$ is the inverse of $f^{'}$, and $\lambda^*$ is the solution of the equation 
    $$\mathbb{E}\left[\frac{1}{p(X)} f^{'-1} \left( \frac{\lambda^*}{p(X)} \right)\right] = b.$$
    It is interesting to notice that, while the lower bound in Inequality~\eqref{eq:general_lower_bound} is dependant on function $f$, it is independent of function $\psi$. The lower bound in \eqref{eq:general_lower_bound} is achieved when $\psi(x) = f^{'-1} \left( \frac{\lambda^*}{p(X)} \right)$. Choosing $f(y)$ as $\textrm{sgn}(\rho) \cdot (y)^{-\rho}$ and $-\log(y)$ gives us the lower bound results on R\'enyi  and Shannon entropy, respectively. The quantity on the right side of (\ref{eq:general_lower_bound}) is closely related to the $\phi$-entropies of Teboulle and Vajda \cite{TeboulleVajdaTIT93}.
\end{remark}

We now extend Theorems~\ref{thm:lower_bound_unified} and \ref{thm:lower_bound_logfunc} to sequences of random variables. Let $\mathcal{X}^n$ be the set of all sequences of length $n$ of elements of $\mathcal{X}$, and $P_n$ be the $n$-fold product distribution of $P$ on $\mathcal{X}^n$, that is, for $x^n := x_1,\dots, x_n\in \mathcal{X}^n$, $P_n(x^n) = \prod_{i=1}^n P(x_i)$.
\begin{cor}
\label{cor:sequence_bound_lower}
	Given any $n \geq 1$, if $\psi_n: \mathcal{X}^{n} \to [0,\infty)$ is such that 
	$$\sum_{x^n \in \mathcal{X}^n} \psi_n(x^n) \leq b_n$$
	for some $b_n >0$, then
	\begin{enumerate}
	    \item For $\rho \in (-1,0) \cup (0, \infty)$, 
	    $$ \liminf_{n \to \infty} \frac{1}{n\rho} \log \mathbb{E}[\psi(X^n)^{-\rho}] \geq H_{\alpha}(P)  - \limsup_{n \to \infty}\frac{\log b_n}{n}.$$
	    \item 
	    $$ \liminf_{n \to \infty} \frac{1}{n}  \mathbb{E}\left[\log \frac{1}{\psi(X^n)} \right] \geq H(P) - \limsup_{n \to \infty}\frac{\log b_n}{n},$$
	\end{enumerate}
	where the expectations are with respect to $P_n$.
\end{cor}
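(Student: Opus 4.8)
The plan is to apply Theorems~\ref{thm:lower_bound_unified} and \ref{thm:lower_bound_logfunc} at each fixed block length $n$ and then take limits. First, fix $n \geq 1$ and regard $P_n$ as a probability distribution on the finite alphabet $\mathcal{X}^n$. Since $\psi_n : \mathcal{X}^n \to [0,\infty)$ satisfies $\sum_{x^n \in \mathcal{X}^n} \psi_n(x^n) \leq b_n$, Theorem~\ref{thm:lower_bound_unified} applied with the source $(\mathcal{X}^n, P_n)$ and bound $b_n$ gives, for $\rho \in (-1,0)\cup(0,\infty)$,
\begin{equation*}
\frac{1}{\rho} \log \mathbb{E}\big[\psi_n(X^n)^{-\rho}\big] \geq H_{\alpha}(P_n) - \log b_n,
\end{equation*}
and similarly Theorem~\ref{thm:lower_bound_logfunc} gives $\mathbb{E}\big[\log(1/\psi_n(X^n))\big] \geq H(P_n) - \log b_n$, where all expectations are with respect to $P_n$.

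The next step is to use the additivity of R\'enyi and Shannon entropies over product distributions. Because $P_n$ is the $n$-fold product of $P$, we have $H_{\alpha}(P_n) = n H_{\alpha}(P)$ (this is exactly property 3' / additivity, Eqn.~\eqref{eqn:additive}, which is how R\'enyi entropy was characterized in the introduction) and $H(P_n) = n H(P)$ (strong additivity, Eqn.~\eqref{eqn:strong_additive}, applied to independent coordinates). Dividing both bounds by $n$ yields, for every $n$,
\begin{equation*}
\frac{1}{n\rho} \log \mathbb{E}\big[\psi_n(X^n)^{-\rho}\big] \geq H_{\alpha}(P) - \frac{\log b_n}{n}, \qquad \frac{1}{n}\mathbb{E}\Big[\log \frac{1}{\psi_n(X^n)}\Big] \geq H(P) - \frac{\log b_n}{n}.
\end{equation*}

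Finally, I take $\liminf_{n\to\infty}$ on the left-hand side of each inequality. Since the right-hand side equals the constant $H_{\alpha}(P)$ (resp. $H(P)$) minus the sequence $\log b_n / n$, and $\liminf_n (c - a_n) = c - \limsup_n a_n$, the claimed bounds follow immediately: the $\liminf$ of the left side is at least the $\liminf$ of the right side, which is $H_{\alpha}(P) - \limsup_{n\to\infty} (\log b_n / n)$ in the first case and $H(P) - \limsup_{n\to\infty}(\log b_n/n)$ in the second. There is no real obstacle here; the only point requiring a word of care is the direction of the $\liminf$/$\limsup$ interchange when a term is subtracted, and the (standard) fact that R\'enyi entropy is additive on product distributions, which lets the per-symbol normalization go through cleanly.
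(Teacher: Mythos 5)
Your proof is correct and follows essentially the same route as the paper's own argument: apply Theorems~\ref{thm:lower_bound_unified} and \ref{thm:lower_bound_logfunc} blockwise to $(\mathcal{X}^n, P_n)$, use $H_{\alpha}(P_n) = n H_{\alpha}(P)$ and $H(P_n) = n H(P)$, divide by $n$, and take $\liminf$. The only difference is that you spell out the $\liminf$/$\limsup$ bookkeeping that the paper leaves implicit.
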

\begin{proof}
It is easy to see that $H_{\alpha}(P_n) = n H_{\alpha}(P)$ and $H(P_n) = n H(P)$. Hence, applying Theorems~\ref{thm:lower_bound_unified} and \ref{thm:lower_bound_logfunc}, dividing throughout by $n$ and taking $\liminf n \to \infty$, the results follow.
\end{proof}

\subsection{A General Framework for Mismatched Cases} \label{subssec:mismatch_case}
In this sub-section, we establish a unified approach for the case when there is a mismatch between the assumed and underlying distributions.

\begin{theorem}
\label{thm:bound1_unified_mismatch}
	Let $\rho \in (-1,0) \cup (0,\infty)$ and $Q$ be a probability distribution on $\mathcal{X}$. Suppose $\psi: \mathcal{X} \to [0,\infty)$ is such that 
	$$ \psi(x)^{-1} \leq \left(b + \left(\frac{c\cdot  Z_{Q,\alpha}}{Q(x)^{\alpha}} \right)^{|\rho|} \right)^{1/|\rho|}$$
	for some $c > 0$, and $b \geq 0$, then
	\begin{enumerate}
	     \item $  \mathbb{E}[\psi(X)^{-\rho}] \leq 
	      b +   2^{\rho(H_{\alpha}(P) + I_{\alpha}(P,Q) + \log c)}$\qquad if $\rho > 0$,
	      
	      \item $  \mathbb{E}[\psi(X)^{-\rho}] \geq
	      2^{\rho(H_{\alpha}(P) + I_{\alpha}(P,Q) + \log c - \rho^{-1} \log(1 + b c^{\rho}))}$\qquad if $\rho < 0$,
	\end{enumerate}
	where the expectation is with respect to $P$.
\end{theorem}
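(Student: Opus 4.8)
The plan is to reduce each part directly to the hypothesis by taking $|\rho|$-th powers, passing to expectations under $P$, and identifying the sum that appears. The one algebraic fact that carries the argument is the identity
\[
\sum_{x \in \mathcal{X}} P(x)\left(\frac{Z_{Q,\alpha}}{Q(x)^{\alpha}}\right)^{\rho} = 2^{\rho\left(H_{\alpha}(P) + I_{\alpha}(P,Q)\right)},
\]
which I would record first, obtaining it straight from the definition \eqref{eqn:sundaresan_divergence}: simplify the inner bracket as $\left[\sum_{x'}(Q(x')/Q(x))^{\alpha}\right]^{(1-\alpha)/\alpha} = (Z_{Q,\alpha}/Q(x)^{\alpha})^{(1-\alpha)/\alpha}$ and use $(1-\alpha)/\alpha = \rho$ and $\alpha/(1-\alpha) = 1/\rho$ (both equivalent to $\alpha = 1/(1+\rho)$), which hold for $\rho$ of either sign. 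Note also that the hypothesis bounds $\psi(x)^{-1}$ by a finite quantity and hence forces $\psi(x) > 0$ for every $x$, so all manipulations with reciprocals and powers below are legitimate.

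For part (i), where $\rho > 0$ and $|\rho| = \rho$: raising the hypothesis to the power $\rho$ (monotone on $[0,\infty)$) gives the pointwise bound $\psi(x)^{-\rho} \le b + (cZ_{Q,\alpha}/Q(x)^{\alpha})^{\rho}$. Taking expectations with respect to $P$, factoring out $c^{\rho}$, and applying the displayed identity yields $\mathbb{E}[\psi(X)^{-\rho}] \le b + c^{\rho}\,2^{\rho(H_{\alpha}(P)+I_{\alpha}(P,Q))} = b + 2^{\rho(H_{\alpha}(P)+I_{\alpha}(P,Q)+\log c)}$, which is the claim.

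For part (ii), where $\rho < 0$ and $|\rho| = -\rho$: the hypothesis reads $\psi(x) \ge \left(b + (cZ_{Q,\alpha}/Q(x)^{\alpha})^{-\rho}\right)^{1/\rho}$, and raising to the positive power $-\rho$ gives $\psi(x)^{-\rho} \ge \left(b + (cZ_{Q,\alpha}/Q(x)^{\alpha})^{-\rho}\right)^{-1}$. The only genuine manoeuvre is to absorb the additive $b$: writing $u(x) := (Z_{Q,\alpha}/Q(x)^{\alpha})^{-\rho}$, which satisfies $u(x) \ge 1$ since $Z_{Q,\alpha} \ge Q(x)^{\alpha}$ and $-\rho > 0$, one has $b + c^{-\rho}u(x) \le bu(x) + c^{-\rho}u(x) = c^{-\rho}(1 + bc^{\rho})\,u(x)$, whence $\psi(x)^{-\rho} \ge \frac{c^{\rho}}{1 + bc^{\rho}}\left(\frac{Z_{Q,\alpha}}{Q(x)^{\alpha}}\right)^{\rho}$. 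Taking expectations and invoking the identity again gives $\mathbb{E}[\psi(X)^{-\rho}] \ge \frac{c^{\rho}}{1+bc^{\rho}}\,2^{\rho(H_{\alpha}(P)+I_{\alpha}(P,Q))}$; since $1 + bc^{\rho} \ge 1 > 0$ the logarithm is well defined, and rewriting $c^{\rho}/(1+bc^{\rho}) = 2^{\rho\left(\log c - \rho^{-1}\log(1+bc^{\rho})\right)}$ produces the stated lower bound.

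The calculations are otherwise routine; the step requiring care is the inequality $b + c^{-\rho}u(x) \le c^{-\rho}(1+bc^{\rho})u(x)$ in part (ii), where each inequality direction must be tracked against the sign of $\rho$ and the bound $u(x)\ge1$ used — this, together with first pinning down the closed form of $\sum_x P(x)(Z_{Q,\alpha}/Q(x)^{\alpha})^{\rho}$, is essentially the entire content of the argument. I expect that bookkeeping, minor as it is, to be the main pitfall. As a consistency check, setting $b=0$ and $c=1$ collapses both bounds to $\mathbb{E}[\psi(X)^{-\rho}] = 2^{\rho(H_{\alpha}(P)+I_{\alpha}(P,Q))}$, attained at $\psi(x)=Q(x)^{\alpha}/Z_{Q,\alpha}$, in agreement with Theorem~\ref{thm:lower_bound_unified}.
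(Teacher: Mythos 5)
Your proposal is correct and follows essentially the same route as the paper's own proof: a pointwise power of the hypothesis, expectation under $P$, and the identity $\sum_x P(x)\bigl(Z_{Q,\alpha}/Q(x)^{\alpha}\bigr)^{\rho} = 2^{\rho(H_{\alpha}(P)+I_{\alpha}(P,Q))}$ (the paper's Eqn.~\eqref{eqn:Renyi_plus_Sundar}), with the same absorption step $b + c^{-\rho}u \le (b+c^{-\rho})u$ using $u \ge 1$ in the $\rho<0$ case. No gaps.
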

\begin{proof}

\noindent
\textbf{Proof of upper bound for the case $\boldsymbol{\rho > 0}$:} We have
$$\psi(x)^{-\rho} \leq b + \left(\frac{c\cdot Z_{Q,\alpha}}{Q(x)^{\alpha}} \right)^{\rho}.$$
Hence 
\begin{align*}
\mathbb{E}[\psi(X)^{-\rho}] = \sum \nolimits_{x \in \mathcal{X}} P(x) \psi(x)^{-\rho}  \leq b +  c^{\rho} Z_{Q,\alpha}^{\rho} \sum \nolimits_{x \in \mathcal{X}} P(x) Q(x)^{-\alpha\rho}.
\end{align*}
\noindent
But
\begin{equation}
    \label{eqn:Renyi_plus_Sundar}
    Z_{Q,\alpha}^{\rho}\cdot \sum \nolimits_{x \in \mathcal{X}} P(x) Q(x)^{\alpha-1} =  2^{\rho(H_{\alpha}(P) + I_{\alpha}(P,Q))}.
\end{equation}
Hence (1) follows.
\medskip

\noindent
\textbf{Proof of lower bound for the case $\boldsymbol{\rho < 0}$:}
In this case, we have
\begin{align*}
\psi(x)^{-\rho} &\geq \left(b + \left(\frac{c\cdot Z_{Q,\alpha}}{Q(x)^{\alpha}} \right)^{-\rho} \right)^{-1} \overset{(a)}{\geq} \left((b + c^{-\rho})\left(\frac{Z_{Q,\alpha}}{Q(x)^{\alpha}} \right)^{-\rho} \right)^{-1} \\
&= (b + c^{-\rho})^{-1} \left(\frac{Z_{Q,\alpha}}{Q(x)^{\alpha}} \right)^{\rho} = \left(1 + b c^{\rho} \right)^{-1} \left(\frac{c\cdot Z_{Q,\alpha}}{Q(x)^{\alpha}} \right)^{\rho},
\end{align*}
where inequality~(a) follows because $\frac{Z_{Q,\alpha}}{Q(x)^{\alpha}} \geq 1$ and $-\rho > 0$. Thus, we have
\begin{align*}
\mathbb{E}[\psi(X)^{-\rho}] = \sum \nolimits_{x \in \mathcal{X}} P(x) \psi(x)^{-\rho}  \geq \left(1 + b c^{\rho} \right)^{-1}  c^{\rho} Z_{Q,\alpha}^{\rho} \sum \nolimits_{x \in \mathcal{X}} P(x) Q(x)^{-\alpha\rho} 
\end{align*}
\noindent
Hence, in view of \eqref{eqn:Renyi_plus_Sundar}, (2) follows.
\end{proof}

\begin{theorem}
\label{thm:bound2_unified_mismatch}
	Let $\rho \in (-1,0) \cup (0,\infty)$ and $Q$ be a probability distribution on $\mathcal{X}$. If $\psi: \mathcal{X} \to [0,\infty)$ is such that $$\frac{a Z_{Q,\alpha}}{Q(x)^{\alpha}}  \leq \psi(x)^{-1}$$
	for some $a > 0$, then 
	$$\text{sgn}(\rho) \cdot \mathbb{E}[\psi(X)^{-\rho}] \geq \text{sgn}(\rho) \cdot 2^{\rho(H_{\alpha}(P) + I_{\alpha}(P,Q) + \log a)},$$
	where the expectation is with respect to $P$.
\end{theorem}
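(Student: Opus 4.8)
The plan is to imitate the proof of Theorem~\ref{thm:bound1_unified_mismatch}: turn the hypothesis into a pointwise bound on $\psi(x)^{-\rho}$, take the expectation with respect to $P$, and recognize the resulting sum through the identity \eqref{eqn:Renyi_plus_Sundar}. The one algebraic fact that makes everything line up is that $\alpha = 1/(1+\rho)$, so $\alpha(1+\rho)=1$ and hence $1-\alpha = \alpha\rho$; in particular $-\alpha\rho = \alpha-1$, so $Q(x)^{-\alpha\rho} = Q(x)^{\alpha-1}$ irrespective of the sign of $\rho$. I would also first note that the hypothesis $a Z_{Q,\alpha}/Q(x)^{\alpha} \le \psi(x)^{-1}$ forces $\psi(x) > 0$ for every $x$ and in fact bounds $\psi(x)$ above by $Q(x)^{\alpha}/(a Z_{Q,\alpha})$, so all the expressions below are well-defined.

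For $\rho > 0$, I would raise both sides of the hypothesis to the power $\rho$ (legitimate since $t\mapsto t^{\rho}$ is increasing on $(0,\infty)$) to get $\psi(x)^{-\rho} \ge a^{\rho} Z_{Q,\alpha}^{\rho} Q(x)^{-\alpha\rho} = a^{\rho} Z_{Q,\alpha}^{\rho} Q(x)^{\alpha-1}$. Multiplying by $P(x)$, summing over $x \in \mathcal{X}$, and applying \eqref{eqn:Renyi_plus_Sundar} gives $\mathbb{E}[\psi(X)^{-\rho}] \ge a^{\rho}\, 2^{\rho(H_{\alpha}(P)+I_{\alpha}(P,Q))} = 2^{\rho(H_{\alpha}(P)+I_{\alpha}(P,Q)+\log a)}$, which is the assertion since $\mathrm{sgn}(\rho)=1$.

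For $\rho < 0$, I would instead use the upper bound $\psi(x) \le Q(x)^{\alpha}/(a Z_{Q,\alpha})$ and raise it to the power $|\rho| = -\rho > 0$, obtaining $\psi(x)^{-\rho} = \psi(x)^{|\rho|} \le a^{-|\rho|} Z_{Q,\alpha}^{-|\rho|} Q(x)^{\alpha|\rho|} = a^{\rho} Z_{Q,\alpha}^{\rho} Q(x)^{\alpha-1}$, where I use $-|\rho| = \rho$ and $\alpha|\rho| = -\alpha\rho = \alpha-1$. Taking the $P$-expectation and using \eqref{eqn:Renyi_plus_Sundar} again yields $\mathbb{E}[\psi(X)^{-\rho}] \le 2^{\rho(H_{\alpha}(P)+I_{\alpha}(P,Q)+\log a)}$; multiplying through by $\mathrm{sgn}(\rho) = -1$ reverses the inequality into the stated form, and the two cases merge into the single $\mathrm{sgn}(\rho)$-statement.

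I do not expect a genuine obstacle here: the result is essentially the one-sided, $b=0$, single-term companion of Theorem~\ref{thm:bound1_unified_mismatch}. The only thing that needs care is the sign and monotonicity bookkeeping — namely which direction $t\mapsto t^{\rho}$ moves an inequality, and the fact that multiplying by $\mathrm{sgn}(\rho)$ flips the inequality precisely when $\rho<0$ — together with keeping $\psi(x)>0$ and the identity $\alpha(1+\rho)=1$ in view so that \eqref{eqn:Renyi_plus_Sundar} can be applied verbatim.
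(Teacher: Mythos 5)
Your proof is correct and follows essentially the same route as the paper: raise the pointwise hypothesis to the power $\rho$ (tracking the direction of the inequality via the sign of $\rho$), take the $P$-expectation, and identify the resulting sum with $2^{\rho(H_{\alpha}(P)+I_{\alpha}(P,Q))}$ through the identity \eqref{eqn:Renyi_plus_Sundar} using $-\alpha\rho=\alpha-1$. The paper merely writes the two sign cases in one line by carrying the $\mathrm{sgn}(\rho)$ factor throughout, whereas you separate them explicitly; the content is identical.
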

\begin{proof}
Since $\psi(x)^{-1} \geq \frac{a Z_{Q,\alpha}}{Q(x)^{\alpha}}$, we have
\begin{align*}
\text{sgn}(\rho) \cdot \mathbb{E}[\psi(X)^{-\rho}] &= \text{sgn}(\rho) \cdot \sum \nolimits_{x \in \mathcal{X}} P(x) \psi(x)^{-\rho} \\
& \geq \text{sgn}(\rho) \cdot a^{\rho} Z_{Q,\alpha}^{\rho} \sum \nolimits_{x \in \mathcal{X}} P(x) Q(x)^{-\alpha\rho} \\
& = \text{sgn}(\rho) \cdot  2^{\rho(H_\alpha(P) + I_{\alpha}(P,Q) + \log a)}.
\end{align*}

\end{proof}

\begin{cor}
\label{cor:upper_bound_unified_mismatch}
Let $\rho > -1$ and $Q$ be a probability distribution on $\mathcal{X}$. Suppose that $\psi: \mathcal{X} \to [0,\infty)$ is such that
$$ \psi(x)^{-1} \leq \frac{c\cdot Z_{Q,\alpha}}{Q(x)^{\alpha}}$$ for some $c > 0$. Then
	\begin{enumerate}
	    \item 
	    $  \frac{1}{\rho} \log \mathbb{E}[\psi(X)^{-\rho}] \leq H_{\alpha}(P) + I_{\alpha}(P,Q) + \log c$\quad for $\rho \neq 0,$
	    \item 
	    $ \mathbb{E} \left[ \log \frac{1}{\psi(X)} \right] \leq H(P) + I(P,Q) + \log c$\quad for $\rho = 0,$
	\end{enumerate}
where the expectation is with respect to $P$.
\end{cor}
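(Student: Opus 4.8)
The plan is to obtain part (1) as an immediate specialization of Theorem~\ref{thm:bound1_unified_mismatch} and part (2) by a direct computation mirroring the proof of the upper bound in that theorem. For part (1), observe that the hypothesis $\psi(x)^{-1} \leq c\,Z_{Q,\alpha}/Q(x)^{\alpha}$ is exactly the hypothesis of Theorem~\ref{thm:bound1_unified_mismatch} with $b = 0$. When $\rho > 0$, part (1) of that theorem gives $\mathbb{E}[\psi(X)^{-\rho}] \leq 2^{\rho(H_{\alpha}(P) + I_{\alpha}(P,Q) + \log c)}$; taking logarithms and dividing by $\rho > 0$ yields the claim. When $-1 < \rho < 0$, part (2) of that theorem with $b = 0$ gives $\mathbb{E}[\psi(X)^{-\rho}] \geq 2^{\rho(H_{\alpha}(P) + I_{\alpha}(P,Q) + \log c)}$; now dividing the logarithm by the \emph{negative} number $\rho$ reverses the inequality and again produces $\tfrac{1}{\rho}\log\mathbb{E}[\psi(X)^{-\rho}] \leq H_{\alpha}(P) + I_{\alpha}(P,Q) + \log c$. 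So the two sign regimes merge into a single statement precisely because of the sign flip in the last division.

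For part (2), the idea is to take the $\rho \to 0$ limit, or equivalently to redo the estimate directly with $\log(1/\psi)$ in place of $\psi^{-\rho}$. From $\psi(x)^{-1} \leq c\,Z_{Q,\alpha}/Q(x)^{\alpha}$ we get $\log\frac{1}{\psi(x)} \leq \log c + \log Z_{Q,\alpha} - \alpha \log Q(x)$. Taking expectations under $P$,
\begin{align*}
\mathbb{E}\!\left[\log\frac{1}{\psi(X)}\right]
&\leq \log c + \log Z_{Q,\alpha} - \alpha \sum_{x \in \mathcal{X}} P(x) \log Q(x).
\end{align*}
Here one uses $\alpha = \alpha(\rho) = 1/(1+\rho)$, so the case $\rho = 0$ means $\alpha = 1$, and the right-hand side becomes $\log c + \log Z_{Q,1} - \sum_x P(x)\log Q(x) = \log c + 0 + \sum_x P(x)\log\frac{1}{Q(x)}$, since $Z_{Q,1} = \sum_x Q(x) = 1$. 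Finally, $\sum_x P(x)\log\frac{1}{Q(x)} = \sum_x P(x)\log\frac{1}{P(x)} + \sum_x P(x)\log\frac{P(x)}{Q(x)} = H(P) + I(P,Q)$, which gives (2).

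The main thing to be careful about is the direction of the inequality in the $\rho < 0$ branch of part (1): one must invoke the \emph{lower} bound of Theorem~\ref{thm:bound1_unified_mismatch} (with $b=0$, so the $\rho^{-1}\log(1 + bc^{\rho})$ correction term vanishes), and then track that dividing by $\rho < 0$ flips ``$\geq$'' to ``$\leq$''. A secondary point worth a sentence is checking that $b = 0$ is admissible in Theorem~\ref{thm:bound1_unified_mismatch}, whose hypothesis reads $b \geq 0$, so no issue arises. The rest is bookkeeping with $Z_{Q,1}=1$ and the standard decomposition of cross-entropy into $H(P) + I(P,Q)$.
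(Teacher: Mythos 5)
Your proposal is correct and follows essentially the same route as the paper: part (1) by specializing Theorem~\ref{thm:bound1_unified_mismatch} to $b=0$ and dividing the logarithm by $\rho$ (with the sign flip for $\rho<0$ that you rightly make explicit), and part (2) by the direct cross-entropy computation with $\alpha=1$ and $Z_{Q,1}=1$. No gaps.
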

\begin{proof}
If $\rho \neq 0$, setting $b = 0$ in Theorem~\ref{thm:bound1_unified_mismatch}, taking log, and dividing throughout by $\rho$, the result follows. If $\rho = 0$, we have $\alpha = 1$, $\psi(x)^{-1} \leq c Q(x)^{-1}$, and 
\begin{align*}
    \mathbb{E} \left[ \log \frac{1}{\psi(X)} \right] &= \sum \nolimits_{x \in \mathcal{X}} P(x) \log (\psi(x)^{-1}) \leq \sum \nolimits_{x \in \mathcal{X}} P(x) \log (c Q(x)^{-1}) \\
        &= \log c + \sum \nolimits_{x \in \mathcal{X}} P(x) \log \frac{P(x)}{Q(x)} - \sum \nolimits_{x \in \mathcal{X}} P(x) \log P(x) \\
        &=  H(P) + I(P,Q) + \log c.
\end{align*}
\end{proof}

\begin{cor}
\label{cor:lower_bound_unified_mismatch}
Let $\rho > -1$ and $Q$ be a probability distribution on $\mathcal{X}$. Suppose $\psi: \mathcal{X} \to [0,\infty)$ is such that 
$$\frac{a Z_{Q,\alpha}}{Q(x)^{\alpha}} \leq \psi(x)^{-1}$$
for some $a > 0$. Then,
\begin{enumerate}
    \item 
    $\frac{1}{\rho} \log \mathbb{E}[\psi(X)^{-\rho}] \geq H_{\alpha}(P) + I_{\alpha}(P,Q) + \log a$\quad for $\rho \neq 0,$
    \item
    $\mathbb{E} \left[ \log \frac{1}{\psi(X)} \right] \geq H(P) + I(P,Q) + \log a$\quad for $\rho = 0,$
	\end{enumerate}
where the expectation is with respect to $P$.
\end{cor}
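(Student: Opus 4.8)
\textbf{Proof proposal for Corollary~\ref{cor:lower_bound_unified_mismatch}.}
The plan is to treat the two cases $\rho \neq 0$ and $\rho = 0$ separately, in direct analogy with the proof of Corollary~\ref{cor:upper_bound_unified_mismatch}. For $\rho \in (-1,0)\cup(0,\infty)$, the hypothesis $\frac{a Z_{Q,\alpha}}{Q(x)^{\alpha}} \leq \psi(x)^{-1}$ is exactly the hypothesis of Theorem~\ref{thm:bound2_unified_mismatch}, so I would simply invoke it to get
$$\text{sgn}(\rho)\cdot \mathbb{E}[\psi(X)^{-\rho}] \geq \text{sgn}(\rho)\cdot 2^{\rho(H_{\alpha}(P)+I_{\alpha}(P,Q)+\log a)}.$$
Then I would take logarithms and divide by $\rho$, being careful with the sign: when $\rho>0$ the inequality $\mathbb{E}[\psi(X)^{-\rho}] \geq 2^{\rho(\cdots)}$ is preserved by $\log$ and by division by the positive number $\rho$; when $\rho<0$ the factor $\text{sgn}(\rho)=-1$ turns the statement into $\mathbb{E}[\psi(X)^{-\rho}] \leq 2^{\rho(\cdots)}$, but dividing by the negative number $\rho$ flips it back, so in both cases one lands on $\frac{1}{\rho}\log\mathbb{E}[\psi(X)^{-\rho}] \geq H_{\alpha}(P)+I_{\alpha}(P,Q)+\log a$. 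This gives part~(1).

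For $\rho = 0$ we have $\alpha = 1/(1+\rho) = 1$, hence $Z_{Q,\alpha} = Z_{Q,1} = \sum_{x}Q(x) = 1$, and the hypothesis reduces to $\psi(x)^{-1} \geq a\, Q(x)^{-1}$ for all $x\in\mathcal{X}$. I would then compute directly, mirroring the $\rho=0$ branch of Corollary~\ref{cor:upper_bound_unified_mismatch}:
$$\mathbb{E}\!\left[\log\frac{1}{\psi(X)}\right] = \sum_{x\in\mathcal{X}}P(x)\log\big(\psi(x)^{-1}\big) \geq \sum_{x\in\mathcal{X}}P(x)\log\frac{a}{Q(x)} = \log a + \sum_{x\in\mathcal{X}}P(x)\log\frac{1}{Q(x)},$$
and finally rewrite $\sum_x P(x)\log\frac{1}{Q(x)} = \sum_x P(x)\log\frac{P(x)}{Q(x)} - \sum_x P(x)\log P(x) = I(P,Q) + H(P)$ to obtain part~(2).

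There is essentially no hard step here: the content has already been isolated in Theorem~\ref{thm:bound2_unified_mismatch}, and the only point requiring a moment's care is the sign bookkeeping when $\rho<0$, which is why stating Theorem~\ref{thm:bound2_unified_mismatch} with the $\text{sgn}(\rho)$ factor is convenient. One should also note the monotonicity fact implicitly used in the $\rho=0$ case — that $t\mapsto \log t$ is increasing, so the pointwise inequality $\psi(x)^{-1}\geq a/Q(x)$ passes through the logarithm before taking the $P$-expectation — but this is routine.
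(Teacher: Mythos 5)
Your proposal is correct and follows essentially the same route as the paper's own proof: part (1) is obtained by invoking Theorem~\ref{thm:bound2_unified_mismatch}, taking logarithms, and dividing by $\rho$ (your explicit sign bookkeeping for $\rho<0$ is exactly what the paper's $\text{sgn}(\rho)$ formulation is designed to absorb), and part (2) is the same direct computation with $\alpha=1$ and $Z_{Q,1}=1$. No gaps.
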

\begin{proof}
If $\rho \neq 0$, the result follows from Theorem~\ref{thm:bound2_unified_mismatch}, taking log, and dividing throughout by $\rho$. If $\rho = 0$, we have $\alpha = 1$, $\psi(x)^{-1} \geq a Q(x)^{-1}$, and 
\begin{align*}
    \mathbb{E} \left[ \log \frac{1}{\psi(X)} \right] = \sum \nolimits_{x \in \mathcal{X}} P(x) \log (\psi(x)^{-1}) \geq \sum \nolimits_{x \in \mathcal{X}} P(x) \log (a Q(x)^{-1}) =  H(P) + I(P,Q) + \log a.
\end{align*}
\end{proof}

\begin{cor}
\label{cor:sequence_upper_bound_unified_mismatch}
Let $\rho > -1$ and $Q$ be a probability distribution on $\mathcal{X}$. Let $Q_n$ be the $n$-fold product distribution of $Q$ on $\mathcal{X}^n$. Suppose $\psi_n: \mathcal{X}^n \to [0,\infty)$ is such that
	$$\psi_n(x^n)^{-1} \leq \frac{c_n\cdot Z_{Q,\alpha,n}}{Q_n(x^n)^{\alpha}}$$
	for some $c_n > 0$. Then, 
	\begin{enumerate}
	    \item  $ \limsup_{n \to \infty} \frac{1}{n\rho} \log \mathbb{E}[\psi_n(X^n)^{-\rho}] \leq H_{\alpha}(P) + I_{\alpha}(P,Q) + \limsup_{n \to \infty}\frac{1}{n} \log c_n$\quad for $\rho \neq 0,$
	    \item  $ \limsup_{n \to \infty} \frac{1}{n}  \mathbb{E}\left[\log \frac{1}{\psi_n(X^n)} \right] \leq H(P) + I(P,Q) + \limsup_{n \to \infty}\frac{1}{n} \log c_n$\quad for $\rho =0,$
	\end{enumerate}
	where the expectation is with respect to the product distribution $P_n$ of $P$ on $\mathcal{X}^n$.
\end{cor}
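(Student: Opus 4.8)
The plan is to reduce the statement to the single-shot bound of Corollary~\ref{cor:upper_bound_unified_mismatch} applied to the product distributions, exactly in the spirit of the proof of Corollary~\ref{cor:sequence_bound_lower}. Fix $n \geq 1$ and apply Corollary~\ref{cor:upper_bound_unified_mismatch} with $\mathcal{X}$ replaced by $\mathcal{X}^n$, $P$ by $P_n$, $Q$ by $Q_n$, $\psi$ by $\psi_n$, and $c$ by $c_n$; since $\sum_{x^n \in \mathcal{X}^n} Q_n(x^n)^\alpha = Z_{Q,\alpha,n}$, the assumed bound on $\psi_n(x^n)^{-1}$ is exactly the hypothesis required there (note $\psi_n$ is otherwise arbitrary, so no product structure is needed on it). This gives, for every $n$,
\begin{align*}
\frac{1}{n\rho}\log \mathbb{E}[\psi_n(X^n)^{-\rho}] &\le \frac{1}{n}\bigl(H_\alpha(P_n) + I_\alpha(P_n,Q_n)\bigr) + \frac{1}{n}\log c_n \qquad (\rho \ne 0),\\
\frac{1}{n}\,\mathbb{E}\Bigl[\log \tfrac{1}{\psi_n(X^n)}\Bigr] &\le \frac{1}{n}\bigl(H(P_n) + I(P_n,Q_n)\bigr) + \frac{1}{n}\log c_n \qquad (\rho = 0),
\end{align*}
where the expectation is with respect to $P_n$.

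The second ingredient is additivity of all four quantities over product distributions: $H_\alpha(P_n) = n H_\alpha(P)$ and $H(P_n) = n H(P)$ were already used in Corollary~\ref{cor:sequence_bound_lower}, $I(P_n,Q_n) = n I(P,Q)$ follows from the additivity of the Kullback--Leibler divergence over product distributions, and it remains to show $I_\alpha(P_n,Q_n) = n I_\alpha(P,Q)$. The cleanest way is via identity~\eqref{eqn:Renyi_plus_Sundar}, which reads $2^{\rho(H_\alpha(P)+I_\alpha(P,Q))} = Z_{Q,\alpha}^{\rho}\sum_{x \in \mathcal{X}} P(x)Q(x)^{\alpha-1}$. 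Applying this with $(P_n,Q_n)$ in place of $(P,Q)$ and using the product forms $P_n(x^n)=\prod_i P(x_i)$, $Q_n(x^n)=\prod_i Q(x_i)$ we get $Z_{Q,\alpha,n} = Z_{Q,\alpha}^{\,n}$ and $\sum_{x^n} P_n(x^n)Q_n(x^n)^{\alpha-1} = \bigl(\sum_{x} P(x)Q(x)^{\alpha-1}\bigr)^n$, hence $2^{\rho(H_\alpha(P_n)+I_\alpha(P_n,Q_n))} = 2^{n\rho(H_\alpha(P)+I_\alpha(P,Q))}$; combined with $H_\alpha(P_n)=nH_\alpha(P)$ this yields $I_\alpha(P_n,Q_n)=nI_\alpha(P,Q)$.

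Substituting these identities into the per-$n$ inequalities, the first terms on the right collapse to the $n$-independent quantities $H_\alpha(P)+I_\alpha(P,Q)$ (resp.\ $H(P)+I(P,Q)$), leaving only the $\frac{1}{n}\log c_n$ term depending on $n$. Taking $\limsup_{n\to\infty}$ on both sides of each inequality and using $\limsup_n(a + b_n) = a + \limsup_n b_n$ for a constant $a$ gives precisely the two claimed inequalities. I do not anticipate any real difficulty here; the only step that warrants care is the additivity of Sundaresan's divergence under products, and that reduces, via \eqref{eqn:Renyi_plus_Sundar}, to the elementary multiplicativity of $Z_{Q,\alpha}$ and of $\sum_x P(x)Q(x)^{\alpha-1}$ over product distributions.
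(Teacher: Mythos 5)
Your proposal is correct and follows essentially the same route as the paper: apply Corollary~\ref{cor:upper_bound_unified_mismatch} to $(\mathcal{X}^n, P_n, Q_n, \psi_n, c_n)$, invoke $H_\alpha(P_n)=nH_\alpha(P)$ and $I_\alpha(P_n,Q_n)=nI_\alpha(P,Q)$, divide by $n$, and take $\limsup$. The only difference is that you explicitly verify the additivity of $I_\alpha$ over products via \eqref{eqn:Renyi_plus_Sundar}, a fact the paper simply asserts.
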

\begin{proof}
Using the fact that, for all $\alpha \geq 0$, $H_{\alpha}(P_n) = n H_{\alpha}(P)$ and $I_{\alpha}(P_n,Q_n) = n I_{\alpha}(P,Q)$ after an application of Corollary~\ref{cor:upper_bound_unified_mismatch}, then dividing throughout by $n$ and taking limsup on both sides of the equation, the results follow.
\end{proof}

\begin{cor}
\label{cor:sequence_lower_bound_unified_mismatch}
	Let $\rho > -1$ and $Q$ be a probability distribution on $\mathcal{X}$. Let $Q_n$ be the $n$-fold product distribution of $Q$ on $\mathcal{X}^n$. Suppose $\psi_n: \mathcal{X}^n \to [0,\infty)$ is such that
	$$\frac{a_n Z_{Q,\alpha,n}}{Q_n(x^n)^{\alpha}} \leq \psi_n(x^n)^{-1},$$
	for some $a_n > 0, n \geq 1$. Then, 
	\begin{enumerate}
	    \item $ \liminf_{n \to \infty} \frac{1}{n\rho} \log \mathbb{E}[\psi(X^n)^{-\rho}] \geq H_{\alpha}(P) + I_{\alpha}(P,Q) + \liminf_{n \to \infty}\frac{1}{n} \log a_n$\quad if $\rho \neq 0,$
	    
	    \item $ \liminf_{n \to \infty} \frac{1}{n}  \mathbb{E}\left[\log \frac{1}{\psi(X^n)} \right] \geq H(P) + I(P,Q) + \liminf_{n \to \infty}\frac{1}{n} \log a_n$\quad if $\rho = 0,$
	\end{enumerate}
	where the expectation is with respect to the product distribution $P_n$ of $P$ over the set $\mathcal{X}^n$.
\end{cor}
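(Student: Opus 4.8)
The plan is to mirror, almost verbatim, the proof of Corollary~\ref{cor:sequence_upper_bound_unified_mismatch}, but invoking the lower-bound Corollary~\ref{cor:lower_bound_unified_mismatch} in place of the upper-bound one. First I would fix $n \geq 1$ and regard $\mathcal{X}^n$ as a new (finite) alphabet, $P_n$ as the source distribution on it, $Q_n$ as the mismatched distribution, and $\psi_n$ as the function of interest. Since $Q_n(x^n) = \prod_{i=1}^n Q(x_i)$, the sum $Z_{Q,\alpha,n} = \sum_{x^n \in \mathcal{X}^n} Q_n(x^n)^{\alpha}$ factors as $Z_{Q,\alpha}^{\,n}$, so the hypothesis
$$\frac{a_n Z_{Q,\alpha,n}}{Q_n(x^n)^{\alpha}} \leq \psi_n(x^n)^{-1}$$
is precisely the hypothesis of Corollary~\ref{cor:lower_bound_unified_mismatch} applied on $\mathcal{X}^n$ with $P \mapsto P_n$, $Q \mapsto Q_n$, $\psi \mapsto \psi_n$, and $a \mapsto a_n$.

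Applying Corollary~\ref{cor:lower_bound_unified_mismatch} then gives, for $\rho \neq 0$,
$$\frac{1}{\rho}\log \mathbb{E}\big[\psi_n(X^n)^{-\rho}\big] \geq H_{\alpha}(P_n) + I_{\alpha}(P_n,Q_n) + \log a_n,$$
and, for $\rho = 0$,
$$\mathbb{E}\Big[\log \frac{1}{\psi_n(X^n)}\Big] \geq H(P_n) + I(P_n,Q_n) + \log a_n,$$
the expectations being with respect to $P_n$. The next step is to invoke additivity over product distributions, namely $H_{\alpha}(P_n) = n H_{\alpha}(P)$, $H(P_n) = n H(P)$, $I_{\alpha}(P_n,Q_n) = n I_{\alpha}(P,Q)$, and $I(P_n,Q_n) = n I(P,Q)$ --- the very same facts already used in the proof of Corollary~\ref{cor:sequence_upper_bound_unified_mismatch}. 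Dividing through by $n$ turns the two displays above into
$$\frac{1}{n\rho}\log \mathbb{E}\big[\psi_n(X^n)^{-\rho}\big] \geq H_{\alpha}(P) + I_{\alpha}(P,Q) + \frac{1}{n}\log a_n \qquad (\rho \neq 0),$$
and the analogue with $H, I$ for $\rho = 0$. Finally, taking $\liminf_{n \to \infty}$ of both sides and using that the $\liminf$ of a sum dominates the sum of the $\liminf$s (here the first two summands on the right are constant in $n$) yields exactly the two claimed inequalities.

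The only points that require any care --- and hence the ``main obstacle'', though a mild one --- are (i) the identity $Z_{Q,\alpha,n} = Z_{Q,\alpha}^{\,n}$, which is immediate from the product form of $Q_n$ and factoring the sum, and (ii) the additivity $I_{\alpha}(P_n,Q_n) = n I_{\alpha}(P,Q)$, which follows by substituting the product forms of $P_n$ and $Q_n$ into the defining expression~\eqref{eqn:sundaresan_divergence} and factoring the nested sums (the $\alpha = 1$ case reducing to the familiar additivity of the Kullback--Leibler divergence). No idea beyond those already present in the excerpt is needed.
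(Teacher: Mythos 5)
Your proposal is correct and matches the paper's own proof: apply Corollary~\ref{cor:lower_bound_unified_mismatch} on the product alphabet $\mathcal{X}^n$ with $P_n$, $Q_n$, $\psi_n$, $a_n$, use the additivity $H_{\alpha}(P_n)=nH_{\alpha}(P)$ and $I_{\alpha}(P_n,Q_n)=nI_{\alpha}(P,Q)$, divide by $n$, and take $\liminf$. The extra details you supply (e.g.\ $Z_{Q,\alpha,n}=Z_{Q,\alpha}^{\,n}$) are exactly the facts the paper implicitly relies on.
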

\begin{proof}
Using the fact that, for all $\alpha > 0$, $H_{\alpha}(P_n) = n H_{\alpha}(P)$ and $I_{\alpha}(P_n,Q_n) = n I_{\alpha}(P,Q)$, after an application of Corollary~\ref{cor:lower_bound_unified_mismatch}, then dividing throughout by $n$ and taking liminf on both sides of the equation, the results follow.
\end{proof}

\section{Proof of the Results by the Unified Approach} \label{sec:proofs_unified}

\subsection{Source Coding Problem} \label{subsec:source_coding}

\medskip
\noindent
\textbf{Proof of Theorem~\ref{thm:source_coding_thm1}}:
Choose $\psi(x) = 2^{-L(C(x))}$, where $L(C(x))$ is the length of codeword $C(x)$ assigned to alphabet $x$. Since $C$ is uniquely decodable, from Proposition~\ref{prop:KraftMcmilan}, we have $\sum_{x \in \mathcal{X}} \psi(x) \leq 1$. Now, an application of Theorem~\ref{thm:lower_bound_logfunc} with $b=1$ tells us that we must have
$\mathbb{E}[L(C(X))] \geq  H(P)$. \hfill $\qed$
\medskip

\noindent
\textbf{Proof of Theorem~\ref{thm:source_coding_thm2}}: Choose $\psi_n(x^n) = 2^{-L_n(x^n)}$, where $L_n(x^n) := L(C(x^n))$ is the length of codeword $C(x^n)$ assigned to sequence $x^n$. Then, an application of Corollary~\ref{cor:sequence_bound_lower} gives us
$$\liminf_{n \to \infty}\mathbb{E}[L_n(C(X))]/n  \geq  H(P).$$
Further, we also have
\begin{align*}
\psi_n(x^n)^{-1} = 2^{L(C(x^n))} = 2^{\lceil - \log P_n(x^n) \rceil}  \leq 2^{- \log P_n(x^n) } = 1/P_n(x^n).   
\end{align*}

Thus, an application of Corollary~\ref{cor:sequence_upper_bound_unified_mismatch} with $c_n = 1$ and $Q = P$ gives us $$\limsup_{n \to \infty}\mathbb{E}[L_n(C(X))]/n  \leq  H(P).$$
\hfill $\qed$

\subsection{Campbell Coding Problem} \label{subsec:campbell_coding}

\medskip
\noindent
\textbf{Proof of Theorem~\ref{thm:campbell_coding_thm1}}: Apply Theorem~\ref{thm:lower_bound_unified} with $\psi(x) = 2^{-L(C(x))}$ and $b=1$. \qed

\medskip
\noindent
\textbf{Proof of Theorem~\ref{thm:campbell_coding_thm2}}: Take $\psi_n(x^n) = 2^{-L(C_n(x^n))}$. Then from \eqref{eqn:optimal_code_length_campbell} we have
\begin{equation*}
	\frac{Z_{P,\alpha,n}}{P_n(x^n)^{\alpha}} \le \psi_n(x^n)^{-1} < 2\cdot \frac{Z_{P,\alpha,n}}{P_n(x^n)^\alpha}\cdot
\end{equation*}
Hence, \eqref{eqn:limit_result_campbell} follows by applying Corollary \ref{cor:sequence_upper_bound_unified_mismatch} with $c_n = 2$ and Corollary \ref{cor:sequence_lower_bound_unified_mismatch} with $a_n = 1$ and $Q_n = P_n$.
\hfill $\qed$

\medskip
\noindent
\textbf{Proof of Theorem~\ref{thm:campbell_coding_mismatch}}: Since length function $K$ satisfies \eqref{eqn:KraftMcmilan}, an application of Theorem~\ref{thm:lower_bound_unified} with $\psi(x) = 2^{-K(x)}$ gives us $\frac{1}{\rho} \log \mathbb{E}[2^{\rho K(X)}]  \geq H_{\alpha}(P)$. Since $K(x) = \lceil \log ({Z_{P,\alpha}}/{P(x)^\alpha)\rceil}$ satisfies \eqref{eqn:KraftMcmilan} and $2^{K(x)} < 2\cdot {Z_{P,\alpha}}/{P(x)^\alpha}$, from Corollary \ref{cor:upper_bound_unified_mismatch}, we know that $\frac{1}{\rho} \log \mathbb{E}[2^{\rho K(X)}]  \leq H_{\alpha}(P) + 1$, that is, the minimum in \eqref{eqn:difference_cumlant} is between $H_\alpha(P)$ and $H_\alpha(P) + 1$. Hence, 
\begin{align}
\frac{1}{\rho} \log \mathbb{E}[2^{\rho L(X)}] - H_{\alpha}(P) -1 \le R_c(P,L,\rho) \le \frac{1}{\rho}  \log  \mathbb{E}[2^{\rho L(X)}] - H_\alpha(P).
    \label{eqn:upper_lower_bound_difference_cumulant}
\end{align}
Let us now define a probability distribution $Q_L$ as 
\begin{equation*}
	Q_L(x) = \frac{2^{-(1+\rho)L(x)}}{\sum_{x'}2^{-(1+\rho)L(x')}}\cdot
	\end{equation*}
Then,
\begin{equation*}
2^{L(x)} = Q_L(x)^{-\alpha} Z_{Q_L,\alpha}\cdot \frac{1}{\sum_{x'}2^{-L(x')}} = Q_L(x)^{-\alpha} Z_{Q_L,\alpha}\cdot \frac{1}{\eta},
\end{equation*}
where $\eta = \sum_{x'}2^{-L(x')}$. Applying Corollaries \ref{cor:upper_bound_unified_mismatch} and \ref{cor:lower_bound_unified_mismatch} with $\psi(x) = 2^{-L(x)}$, we get
\begin{equation}
\label{eqn:upper_lower_bound_mismatch_campbell}
\frac{1}{\rho} \log \mathbb{E}[2^{\rho L(X)}] = H_\alpha(P) + I_\alpha(P,Q_L) - \log \eta.
\end{equation}
Combining (\ref{eqn:upper_lower_bound_difference_cumulant}) and (\ref{eqn:upper_lower_bound_mismatch_campbell}), we get the desired result. \qed


\subsection{Arikan's Guessing Problem} \label{subsec:massey_gessing}
\medskip
\noindent
\textbf{Proof of Theorem~\ref{thm:massey_guessing_thm1}}: Let $G$ be any guessing function. Choose $\psi(x) = 1/G(x)$. Then, we have $\sum_{x \in \mathcal{X}} \psi(x) = \sum_{x \in \mathcal{X}} 1/G(x) = \sum^M_{i=1} 1/i = h_M$ (due to the bijective nature of function $G$). Now, an application of Theorem~\ref{thm:lower_bound_unified} with $b = h_M$, gives us 
$$\frac{1}{\rho} \log \mathbb{E}[G(X)^{\rho}] = \frac{1}{\rho} \log \mathbb{E}[\psi(X)^{-\rho}] \geq H_{\alpha}(P) - \log h_M.$$
\hfill $\qed$

\medskip
\noindent
\textbf{Proof of Theorem~\ref{thm:massey_guessing_thm2}}: Let us rearrange the probabilities $\{P(x) , x \in \mathcal{X}\}$ in non-increasing order, say
$$p_1 \geq p_2 \geq \cdots \geq p_M.$$
Then, the optimal guessing function $G^{*}$ is given by $G^{*}(x) = i$ if $P(x) = p_i$. Let us index the elements in the set $\mathcal{X}$ as $\{x_1, x_2, \ldots, x_M \}$, according to the decreasing order of their probabilities. Then, for $i\in \{1,\dots, M\}$, we have
$$\frac{Z_{P,\alpha}}{P(x_i)^{\alpha}} = \frac{\sum^{M}_{j=1} p^{\alpha}_j}{p^{\alpha}_i} \geq i =  G^{*}(x_i).$$
That is, $G^{*}(x) \leq \frac{Z_{P,\alpha}}{P(x)^{\alpha}} \, \,  \text{fpr } x \in \mathcal{X}$. Now, an application of Corollary~\ref{cor:upper_bound_unified_mismatch} with $Q = P$, $\psi(x) = 1/G^{*}(x)$, and $c = 1$, we get 
\begin{align*}\frac{1}{\rho} \log \mathbb{E}[G^{*}(X)^{\rho}] =  \frac{1}{\rho} \log \mathbb{E}[\psi(X)^{-\rho}] \leq H_{\alpha}(P) + I_{\alpha}(P,Q) + \log 1 = H_{\alpha}(P).
\end{align*}
\hfill $\qed$

\medskip
\noindent
\textbf{Proof of Theorem~\ref{thm:massey_guessing_thm3}}: Let $G_n^{*}$ be the optimal guessing function from the set of $n$-length sequences $\mathcal{X}^{n}$ to the set of natural numbers $\{1,2,\ldots, M^n\}$. An application of Corollary~\ref{cor:sequence_bound_lower} with $\psi(x^n) = 1/G_n^{*}(x_n)$ and $b_n = h_{M^n}$ gives us
$$\liminf_{n \to \infty} \frac{1}{n\rho} \log \mathbb{E}[G_n^*(X^n)^{\rho}] \geq H_{\alpha}(P) - \limsup_{n \to \infty} \frac{\log h_{M^n}}{n}.$$
We note that 
$$\limsup_{n \to \infty} \frac{\log h_{M^n}}{n} \leq  \limsup_{n \to \infty} \frac{\log (2n \log M)}{n} = 0.$$
Hence,
\begin{align} \label{eq:massey_thm3_eq1}
\liminf_{n \to \infty} \frac{1}{n\rho} \log \mathbb{E}[G_n^*(X^n)^{\rho}] \geq H_{\alpha}(P)    
\end{align}

From the proof of the previous theorem, we know that $G^{*}(x^n) \leq \frac{Z_{P,\alpha,n}}{P_n(x^n)^{\alpha}} \, \,  \forall x^n \in \mathcal{X}^n$. Now, an application of Corollary~\ref{cor:sequence_upper_bound_unified_mismatch} with $Q_n = P_n$, $\psi_n(x) = 1/G_n^{*}(x)$, and $c_n = 1$, gives us
\begin{align} \label{eq:massey_thm3_eq2}
\limsup_{n \to \infty} \frac{1}{n\rho} \log \mathbb{E}[G_n^{*}(X^n)^{\rho}] \leq H_{\alpha}(P).
\end{align}
Combining \eqref{eq:massey_thm3_eq1} and \eqref{eq:massey_thm3_eq2}, we get the desired result. \hfill $\qed$ \\

\medskip
\noindent
\textbf{Proof of Theorem~\ref{thm:guessing_mismatch_thm1}}: Let $G$ be a guessing function. Define a probability distribution $Q_G = \{Q_G(x),~x \in \mathcal{X}\}$ as
$$Q_G(x) = \frac{G(x)^{-(1+\rho)}}{\sum_{x' \in \mathcal{X}} G(x')^{-(1+\rho)}}\cdot$$
Then, we have
$$\frac{Z_{Q_G,\alpha}}{Q_G(x)^{\alpha}} = G(x) \sum \nolimits_{x' \in \mathcal{X}} \frac{1}{G(x')} = G(x)\cdot h_M\cdot$$
That is, $G(x) = \frac{Z_{Q_G,\alpha}}{h_M\cdot Q_G(x)^{\alpha}}$. Now, an application of Corollaries~\ref{cor:upper_bound_unified_mismatch} and \ref{cor:lower_bound_unified_mismatch} with $\psi(x) = 1/G(x)$, $c = 1/h_M$ and $a = 1/h_M$  yields the desired result.
\hfill \qed \\

\medskip
\noindent
\textbf{Proof of Theorem~\ref{thm:guessing_mismatch_thm2}}:  Let us rearrange the probabilities $\{Q(x) , x \in \mathcal{X}\}$ in non-increasing order, say
$$q_1 \geq q_2 \geq \cdots \geq q_M.$$

Let $G_Q$ be the guessing strategy that guesses according to the decreasing order of $Q$-probabilities, that is, $G_Q(x) = i$ if $Q(x) = q_i$. Then, as in the proof of Theorem~\ref{thm:massey_guessing_thm2}, we have $G_Q(x) \leq {Z_{Q,\alpha}}/{Q(x)^{\alpha}} \quad  \text{for } x \in \mathcal{X}$. Hence, an application of Corollary~\ref{cor:upper_bound_unified_mismatch} with $\psi(x) = 1/G_Q(x)$ and $c = 1$ proves the result. \hfill $\qed$

\begin{remark}
We remark here that the above proof also covers the case when $\rho \in (-1,0)$, whereas the original proof of Arikan and Sundaresan dealt only the case $\rho > 0$.
\end{remark}

\medskip
\noindent
\textbf{Proof of Theorem~\ref{thm:guessing_mismatch_thm3}}: Combining Theorems~\ref{thm:massey_guessing_thm2} and \ref{thm:guessing_mismatch_thm1} we get $R_{g}(P,G,\rho) \geq I_\alpha(P,Q_G) - \log h_M$. From Theorem~\ref{thm:massey_guessing_thm1}, we know that $\frac{1}{\rho} \log \mathbb{E}[G_P(X)^{\rho}] \geq H_{\alpha}(P) - \log h_M$. This along with Theorem~\ref{thm:guessing_mismatch_thm1} tell us that  $R_{g}(P,G,\rho) \leq I_{\alpha}(P,Q_G)$. 
\hfill \qed

\subsection{Memoryless Guessing} \label{subsec:memoryless_guessing}

\noindent
\textbf{Proof of Theorem~\ref{thm:memoryless_guessing_thm1}}: Consider
	\begin{align*}
	\mathbb{E}[V_{\rho}(X,\widehat{X}^{\infty}_1)]=\sum \nolimits_{m=1}^{\infty}
	P\{G(X,\widehat{X}^{\infty}_{1})=m\}V_{\rho}(X,\widehat{X}^{\infty}_1).
	\end{align*}
	Since
	\begin{align*}
	P\{G(X,\widehat{X}^{\infty}_{1})=m\} =\sum \nolimits_{x\in\mathcal{X}} P(x)
	\widehat{P}(x)(1-\widehat{P}(x))^{m-1},
	\end{align*}
	we obtain
    \begin{align}
	\mathbb{E}[V_{\rho}(X,\widehat{X}^{\infty}_1)]  &=  \sum_{m=1}^{\infty}\sum_{x\in\mathcal{X}} P(x) \widehat{P}(x)(1-\widehat{P}(x))^{m-1}  \frac{1}{\rho!} \prod \nolimits_{l=0}^{\rho -1}(m+l) \nonumber \\
	&=  \sum \nolimits_x P(x)\widehat{P}(x)^{-\rho} = \mathbb{E}[\widehat{P}(X)^{-\rho}], \label{eqn:factorial_moment}
	\end{align}
	where (\ref{eqn:factorial_moment}) is by \cite[p.~2]{Huieihel}. Now, the result follows from Theorem~\ref{thm:lower_bound_unified} with $\psi(x) = \widehat{P}(x) $ and $b = 1$. Since $\widehat{P}$ is a probability distribution, we have $\sum_{x \in \mathcal{X}} \psi(x) = \sum_{x \in \mathcal{X}} \widehat{P}(x) = 1$. Hence, by Theorem~\ref{thm:lower_bound_unified}, $\frac{1}{\rho} \log \mathbb{E}[V^{*}_{\rho}(X,\widehat{X}^{\infty}_1)] = H_{\alpha}(P)$,
	attained by $\widehat{P}^{*}(x) = P(x)^{\alpha} / Z_{P,\alpha}$. 
\hfill \qed
\medskip

\noindent
\textbf{Proof of Theorem~\ref{thm:memoryless_mismatch}}: The result follows easily by taking $\psi = \widehat{Q}^*$, $c=1, a=1$ in Corollaries \ref{cor:upper_bound_unified_mismatch} and \ref{cor:lower_bound_unified_mismatch}. \hfill \qed

\subsection{Task Partitioning Problem} \label{subsec:tasks_partition}

In this section, we establish all the results concerning the tasks partitioning problem discussed in Section~\ref{sec:probelm_statements} using the unified approach. In the following, Propositions \ref{prop:kraft_mcmillan_tasks1} and \ref{prop:kraft_mcmillan_tasks2} are the analogue of Kraft-McMillan theorem for partition functions.

\begin{proposition}\cite[Prop.~III 1]{Bunte}:
\label{prop:kraft_mcmillan_tasks1}
 Let $A$ be the partition function associated with a partition of $\mathcal{X}$ of size $N$. Then,
\begin{equation}
\label{eqn:partition_identity}
\sum \nolimits_{x\in\mathcal{X}}\frac{1}{A(x)} = N.
\end{equation}
\end{proposition}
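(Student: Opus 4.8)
\textbf{Proof proposal for Proposition~\ref{prop:kraft_mcmillan_tasks1}.}

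The plan is to exploit the fact that $\mathcal{A} = \{\mathcal{A}_1,\ldots,\mathcal{A}_N\}$ is a genuine partition of $\mathcal{X}$, so the blocks are disjoint and cover $\mathcal{X}$. The key observation is that $A(x)$, being the cardinality of the unique block containing $x$, is constant on each block: if $x \in \mathcal{A}_j$ then $A(x) = |\mathcal{A}_j|$. First I would split the sum $\sum_{x \in \mathcal{X}} 1/A(x)$ over the blocks of the partition, writing it as $\sum_{j=1}^{N} \sum_{x \in \mathcal{A}_j} 1/A(x)$. Since each inner summand equals $1/|\mathcal{A}_j|$ and there are exactly $|\mathcal{A}_j|$ terms in the inner sum, the inner sum collapses to $|\mathcal{A}_j| \cdot (1/|\mathcal{A}_j|) = 1$. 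Summing over $j$ then gives $\sum_{j=1}^{N} 1 = N$, which is \eqref{eqn:partition_identity}.

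One minor bookkeeping point to address is whether some blocks $\mathcal{A}_j$ are empty. If the convention is that a partition of size $N$ allows empty blocks, then an empty block contributes nothing (its inner sum is over the empty index set) and the identity would read $\sum_{x} 1/A(x) = N - (\text{number of empty blocks})$; so one should note that the statement implicitly assumes all $N$ blocks are nonempty, which is the natural convention here since empty keys serve no purpose. I would state this assumption explicitly or invoke whatever convention Bunte and Lapidoth fix, so that every block contributes exactly $1$.

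I do not anticipate any real obstacle: the argument is a one-line reindexing of the sum by blocks, using only that $A$ is constant on blocks and that the block sizes sum (with multiplicity) to $|\mathcal{X}|$ while the number of blocks is $N$. The only thing to be careful about is the empty-block convention mentioned above; everything else is immediate.
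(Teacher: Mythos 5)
Your argument is correct and is exactly the standard proof: group the sum by blocks, note $A$ is constant on each block so each nonempty block contributes $|\mathcal{A}_j|\cdot(1/|\mathcal{A}_j|)=1$, and sum over the $N$ blocks. The paper itself gives no proof (it cites Bunte and Lapidoth), and your empty-block caveat is handled by the usual convention that the blocks of a partition are nonempty, so nothing further is needed.
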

\medskip

\begin{proposition} \label{prop:kraft_mcmillan_tasks2}
{\rm \cite[Prop.~III.2]{Bunte}} If $\lambda : \mathcal{X} \to \mathbb{N} \cup \{+\infty \}$ and $\sum_{x \in \mathcal{X}} 1/\lambda(x) = \mu$, then there exists a partition of $\mathcal{X}$ of size at most
$$\min_{\alpha > 1} \lfloor \alpha \mu + \log_{\alpha} |\mathcal{X}| + 2 \rfloor $$
such that its partition function $A$ satisfies $A(x) \leq \min \{\lambda(x), |\mathcal{X}| \}$ for $x \in \mathcal{X}$.
\end{proposition}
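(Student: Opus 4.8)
The plan is to build the partition explicitly by a geometric bucketing of $\mathcal{X}$ at a scale $\alpha>1$, grouping together tasks that share roughly the same value of $\lambda$. The first step is to isolate the tasks that impose no genuine constraint: set $\mathcal{X}_u := \{x\in\mathcal{X} : \lambda(x)\ge |\mathcal{X}|\}$ (with the convention $+\infty\ge|\mathcal{X}|$) and make $\mathcal{X}_u$ a single block. Since every block of a partition of $\mathcal{X}$ has at most $|\mathcal{X}|$ elements, this block already satisfies $A(x)\le|\mathcal{X}|\le\lambda(x)$ for all $x\in\mathcal{X}_u$, at a cost of one block. It then remains to partition $\mathcal{X}_c := \mathcal{X}\setminus\mathcal{X}_u$, on which $1\le\lambda(x)<|\mathcal{X}|$ (recall $\lambda$ takes values in $\{1,2,\dots\}\cup\{+\infty\}$, so each $1/\lambda(x)\le 1$).

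Next, fix $\alpha>1$ and for $k\ge 1$ put $B_k := \{x\in\mathcal{X}_c : \alpha^{k-1}\le\lambda(x)<\alpha^{k}\}$. Since $[1,\infty)=\bigcup_{k\ge1}[\alpha^{k-1},\alpha^{k})$, the sets $B_k$ partition $\mathcal{X}_c$; moreover $B_k\ne\emptyset$ forces $\alpha^{k-1}<|\mathcal{X}|$, so the number of nonempty buckets is at most $\log_\alpha|\mathcal{X}|+1$. Every $x\in B_k$ has $\lambda(x)\ge\alpha^{k-1}$ and, being an integer, $\lambda(x)\ge m_k := \lceil\alpha^{k-1}\rceil$. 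I would then split each nonempty $B_k$ into $\lceil|B_k|/m_k\rceil$ blocks, each of size at most $m_k$. Any such block $G$ satisfies $|G|\le m_k\le\lambda(x)$ for all $x\in G$, and $|G|\le|\mathcal{X}|$ trivially; hence the partition of $\mathcal{X}$ obtained from all these blocks together with the block $\mathcal{X}_u$ has a partition function with $A(x)\le\min\{\lambda(x),|\mathcal{X}|\}$ everywhere.

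It remains to bound the number of blocks. Because $\lambda(x)<\alpha^{k}$ on $B_k$, we have $|B_k|<\alpha^{k}\sum_{x\in B_k}1/\lambda(x)$, hence $\lceil|B_k|/m_k\rceil\le |B_k|/\alpha^{k-1}+1<\alpha\sum_{x\in B_k}1/\lambda(x)+1$. Summing over the nonempty buckets and adding the block $\mathcal{X}_u$,
\begin{align*}
N \;\le\; \sum_{k\ge1}\Big\lceil\frac{|B_k|}{m_k}\Big\rceil + 1 \;<\; \alpha\sum_{x\in\mathcal{X}_c}\frac{1}{\lambda(x)} + \#\{k:B_k\ne\emptyset\} + 1 \;\le\; \alpha\mu + \log_\alpha|\mathcal{X}| + 2,
\end{align*}
using $\sum_{x\in\mathcal{X}_c}1/\lambda(x)\le\mu$. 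Since $N$ is an integer, $N\le\lfloor\alpha\mu+\log_\alpha|\mathcal{X}|+2\rfloor$; and as $\alpha>1$ was arbitrary, the best choice of $\alpha$ yields a partition having both the required bound on $A$ and size at most $\min_{\alpha>1}\lfloor\alpha\mu+\log_\alpha|\mathcal{X}|+2\rfloor$, which is the claim.

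I expect the only delicate point to be the accounting of the additive constant $2$. One must peel off the tasks with $\lambda(x)\ge|\mathcal{X}|$ (in particular those with $\lambda(x)=+\infty$) into their own block rather than folding them into the geometric buckets: absorbing them could violate a block's size constraint, and it would make the bucket index $k$, hence the bucket count, unbounded. This peeling costs one unit; the second unit is the combined effect of rounding up $|B_k|/m_k$ in each bucket and rounding the count of relevant scales up to $\log_\alpha|\mathcal{X}|+1$. Everything else --- disjointness of the $B_k$, the elementary inequalities $\lceil t\rceil\le t+1$ and $|B_k|<\alpha^k\sum_{x\in B_k}1/\lambda(x)$, and the verification that $A(x)\le\min\{\lambda(x),|\mathcal{X}|\}$ --- is routine.
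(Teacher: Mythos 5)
Your proof is correct, and its geometric bucketing by scales $[\alpha^{k-1},\alpha^{k})$, with the tasks having $\lambda(x)\ge|\mathcal{X}|$ peeled off into a single extra block, is essentially the argument given for this proposition in the cited source (Bunte and Lapidoth); the present paper does not reprove it but only quotes it. No gaps: the size bound on each block, the count of nonempty buckets by $\log_\alpha|\mathcal{X}|+1$, and the integrality step giving the floor are all handled correctly.
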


\begin{cor} \label{cor:bunte_cor1}
Let $N$ be such that $N > \log|\mathcal{X}| + 2$. If $\lambda : \mathcal{X} \to \mathbb{N} \cup \{+\infty \}$ is such that $\sum_{x \in \mathcal{X}} 1/\lambda(x) = (N - \log|\mathcal{X}| - 2)/2$, then there exists a partition of $\mathcal{X}$ of size at most $N$ with partition function $A$ such that $A(x) \leq \min \{\lambda(x), |\mathcal{X}| \}$ for $x \in \mathcal{X}$.
\end{cor}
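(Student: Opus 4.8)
The plan is to apply Proposition~\ref{prop:kraft_mcmillan_tasks2} directly and then specialize the free parameter $\alpha$ in the bound on the partition size. Set $\mu := (N - \log|\mathcal{X}| - 2)/2$, which is exactly $\sum_{x\in\mathcal{X}} 1/\lambda(x)$ by hypothesis, and note that $\mu > 0$ since $N > \log|\mathcal{X}| + 2$, so the hypotheses of Proposition~\ref{prop:kraft_mcmillan_tasks2} are met. That proposition yields a partition of $\mathcal{X}$ whose partition function $A$ satisfies $A(x) \le \min\{\lambda(x), |\mathcal{X}|\}$ for all $x$, and whose size is at most $\min_{\alpha>1} \lfloor \alpha\mu + \log_\alpha|\mathcal{X}| + 2\rfloor$.

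The only remaining step is to bound that minimum. Since it is a minimum over $\alpha > 1$, it is at most the value of the expression at the particular choice $\alpha = 2$; recalling that all logarithms in the paper are base $2$, this value is
\[
\big\lfloor 2\mu + \log|\mathcal{X}| + 2\big\rfloor
= \Big\lfloor \big(N - \log|\mathcal{X}| - 2\big) + \log|\mathcal{X}| + 2 \Big\rfloor
= \lfloor N\rfloor \le N .
\]
Hence the partition produced has size at most $N$, and the stated inequality $A(x)\le\min\{\lambda(x),|\mathcal{X}|\}$ is inherited verbatim from Proposition~\ref{prop:kraft_mcmillan_tasks2}. This completes the argument.

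There is no real obstacle here: the corollary is a one-line specialization of Bunte and Lapidoth's result, and the substitution $\alpha = 2$ is what makes the $\log|\mathcal{X}|$ and the constant $2$ cancel, leaving exactly the budget $N$. The only points worth a sentence of care are (i) checking $\mu>0$ so the proposition applies, and (ii) observing that the partition size is an integer, so ``$\le \lfloor N\rfloor$'' immediately gives ``$\le N$'' regardless of whether $N$ is an integer.
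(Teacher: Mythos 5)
Your proposal is correct and follows exactly the paper's own argument: apply Proposition~\ref{prop:kraft_mcmillan_tasks2} with $\mu = (N - \log|\mathcal{X}| - 2)/2$ and bound the minimum over $\alpha>1$ by its value at $\alpha = 2$, where the terms telescope to $\lfloor N\rfloor \le N$. Your added remarks on $\mu>0$ and integrality are minor but harmless refinements of the same proof.
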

\begin{proof}
Let $|\mathcal{A}|$ denote the size of partition $\mathcal{A}$. Applying Proposition \ref{prop:kraft_mcmillan_tasks2} with $\mu = (N - \log|\mathcal{X}| - 2)/2$, we see that there exists a partition $\mathcal{A}$ of $\mathcal{X}$ with partition function $A$ such that
$$
|\mathcal{A}| \leq  \min_{\alpha > 1} \lfloor \alpha \mu + \log_{\alpha} |\mathcal{X}| + 2 \rfloor \leq \lfloor 2 \mu + \log |\mathcal{X}| + 2 \rfloor \leq N
$$
and $A(x) \leq \min \{\lambda(x), |\mathcal{X}| \} \text{ for } x \in \mathcal{X}$.
\end{proof}
\medskip

\noindent
\textbf{Proof of Theorem~\ref{thm:task_partition_thm1}(a)}: Follows from an application of Theorem~\ref{thm:lower_bound_unified} with $\psi(x) = 1/A(x)$ and $b = N$, and by Proposition \ref{prop:kraft_mcmillan_tasks1}.
\hfill \qed
\medskip

\noindent
\textbf{Proof of Theorem~\ref{thm:task_partition_thm1}(b)}:
Let
\begin{align*}
\lambda(x) = \lceil \beta\cdot P(x)^{-\alpha}\rceil,\quad\text{ with }\beta = \frac{2\cdot Z_{P,\alpha}}{N-\log |\mathcal{X}|-2}\cdot
\end{align*}
Then,
\begin{equation*}
	 \mu = \sum_{x \in \mathcal{X}} \frac{1}{\lambda(x)} \le \sum_x \frac{1}{\beta\cdot P(x)^{-\alpha}} \le \frac{N-\log |\mathcal{X}|-2}{2}\cdot
\end{equation*}

Now an application of Corollary~\ref{cor:bunte_cor1} tells us that there exists a partition of $\mathcal{X}$ of size at most $N$ with partition function $A$ such that $A(x) \leq \min \{\lambda(x), |\mathcal{X}| \}$ for $x \in \mathcal{X}$. Further, for any $\rho \neq 0$, we also have
\begin{align*}
A(x) \leq \lambda(x) =  \left( \lceil \beta\cdot P(x)^{-\alpha}\rceil^{|\rho|} \right)^{\frac{1}{|\rho|}} \leq  \left( 1+ 2^{|\rho|}(\beta\cdot P(x)^{-\alpha})^{|\rho|} \right)^{\frac{1}{|\rho|}} = \left(1 + \left( \frac{Z_{P,\alpha}}{\tilde{N}  P(x)^{\alpha}} \right)^{|\rho|} \right)^{\frac{1}{|\rho|}},
\end{align*}
where $\tilde{N} = \frac{N-\log |\mathcal{X}|-2}{4}$. Now, an application of Theorem~\ref{thm:bound1_unified_mismatch} with $\psi(x) = 1/A(x)$, $b=1$, $c = 1/\tilde{N}$, and $Q=P$ gives us 
$$\mathbb{E}[A(X)^{\rho}] \leq 1 + 2^{\rho(H_{\alpha}(P)- \log \tilde{N})} \qquad \textrm{if } \rho > 0$$
and if $\rho < 0$, we have
\begin{align*}
\mathbb{E}[A(X)^{\rho}]  \geq 2^{\rho(H_{\alpha}(P) - \log \tilde{N} - \rho^{-1} \log (1+ \tilde{N}^{-\rho}) )} = \left(\frac{\tilde{N}^{-\rho}}{1+\tilde{N}^{-\rho}} \right) 2^{\rho H_{\alpha}(P)} \geq \frac{1}{2} \cdot 2^{\rho H_{\alpha}(P)}.
\end{align*}
Also, if $\rho <0$, since $A(X) \ge 1$, we have $A(X)^{\rho} \le 1$. Hence $\mathbb{E}[A(X)^{\rho}] \le 1$. This completes the proof.
\hfill \qed


\medskip
\noindent
\textbf{Proof of Theorem~\ref{thm:task_partition_thm2}}:
\medskip

\noindent
\textbf{Part (i)} Proceeding as in the proof of Theorem~\ref{thm:task_partition_thm1}(b), we can show that for any $n \geq 1$, there exists a partition $\mathcal{A}_n$ of $\mathcal{X}^n$ of size $N^n$ such that the associated partition function $A_n(\cdot)$ satisfies 
$$\mathbb{E}[A_n(X^n)^{\rho}] \leq 1 + 2^{\rho(H_{\alpha}(P_n)- \log \tilde{N}_n)},$$
where $\tilde{N}_n = (N^n - n \log |\mathcal{X}|  - 2)/2$. We note that $P_n$ is the $n$-fold product distribution of $P$ on $\mathcal{X}^n$. Thus, we have $H_{\alpha}(P_n) = n H_{\alpha}(P)$. Now, if $\log N > H_{\alpha}(P)$, there exists $\delta > 0$ such that $\log N > H_{\alpha}(P) + \delta$. Since $\lim_{n \to \infty} \frac{\log \tilde{N}_n}{n} = \log N$, there exists a positive integer $n_{\delta}$ such that $\log N - \delta < \frac{\log \tilde{N}_n}{n}$ for $n \geq n_{\delta}$. Thus,  we have
$$\mathbb{E}[A_n(X^n)^{\rho}] < 1 + 2^{n \rho(H_{\alpha}(P)- \log N + \delta)} \qquad \text{for } n \geq n_{\delta}.$$

Since $\log N > H_{\alpha}(P) + \delta$, we have $\rho(H_{\alpha}(P)- \log N + \delta) < 0$. Consequently, $\limsup_{n \to \infty} \mathbb{E}[A_n(X^n)^{\rho}] \leq 1 $. Since $A_n(x^n) \geq 1$ for all $x^n \in \mathcal{X}^n$, we have $\liminf_{n \to \infty} \mathbb{E}[A_n(X^n)^{\rho}] \geq 1 $. Thus, we conclude that $\lim_{n \to \infty} \mathbb{E}[A_n(X^n)^{\rho}] = 1$.
\medskip

\noindent
\textbf{Part (ii)} For each $n\geq 1$, let $\mathcal{A}_n$ be a partition of $\mathcal{X}^n$ of size $N^n$ with partition function $A_n$. An application of Theorem~\ref{thm:lower_bound_unified} with $\psi(\cdot) = 1/A_n(\cdot)$ and $b=N^n$ gives us 
$$\mathbb{E}[A_n(X^n)^{\rho}] \geq 2^{\rho(H_{\alpha}(P_n)- \log N^n)} =  2^{n\rho(H_{\alpha}(P)- \log N)}.$$
Now, since $\log N < H_{\alpha}(P)$ and $\rho > 0$, we have $\lim_{n \to \infty} \mathbb{E}[A_n(X^n)^{\rho}] = \infty$.
\hfill \qed 

\medskip
\noindent
\textbf{Proof of Theorem~\ref{thm:task_partition_mismatch_lb}}: Define a probability distribution $Q_A = \{Q_A(x),~x \in \mathcal{X}\}$ as
$$Q_A(x) = \frac{A(x)^{-(1+\rho)}}{\sum_{x' \in \mathcal{X}} A(x')^{-(1+\rho)}}\cdot$$
Then, we have
$$\frac{Z_{Q_A,\alpha}}{Q_A(x)^{\alpha}} = A(x) \sum \nolimits_{x' \in \mathcal{X}} \frac{1}{A(x')} = A(x)\cdot N,$$
where the last equality follows due to Proposition~\ref{prop:kraft_mcmillan_tasks1}. Rearranging the terms, we have $A(x)=\frac{Z_{Q_A,\alpha}}{N \cdot Q_A(x)^{\alpha}}$. Now, an application of Corollaries~\ref{cor:upper_bound_unified_mismatch} and \ref{cor:lower_bound_unified_mismatch} with $\psi(x) = 1/A(x)$, $c = 1/N$ and $a = 1/N$  yields the desired result.
\hfill \qed \\

\noindent
\textbf{Proof of Theorem~\ref{thm:task_partition_mismatch_ub}}: The proof is identical to that of Theorem~\ref{thm:task_partition_thm1}(b), with the exception that we need to invoke Theorem~\ref{thm:bound1_unified_mismatch} with $\psi(x) = 1/A_{Q}(x)$, $b=1$ and $c = 1/\tilde{N}$. \hfill \qed

\section{Summary and Concluding Remarks} \label{sec:summary}
The purpose of this paper was to establish a close relationship between the problems on source coding, guessing, and task partitioning by unifying all the proofs of the results known in these problems. Indeed, we could establish the lower bound of these problems using an optimization problem and apply Campbell's idea to achieve the lower bound in an asymptotic way. Our approach also enabled us to refine some of the known results as well as obtain some new results for the mismatched version of these problems. We were able to formulate a more general mathematical problem where the above mentioned problems are particular instances. 

We feel that this  generalization, in  addition  to help solving new  problems  that  fall  in  this framework, would also provide new insights. For example, Arikan's guessing problem can be interpreted as a particular case of a more general task partitioning problem, where tasks assigned to a particular key also have to be ordered.

The presented unified approach can be explored further in many ways. This includes, (i) Extension to general state-space: It would be interesting to see if the studied problems can be formulated and solved, for example, for countably infinite or continuous support sets. (ii) Applications: Arikan showed an application of the guessing problem in a sequential decoding problem \cite{Arikan}. Humblet showed that
cumulants of code-lengths in Campbell's problem has applications in
minimizing the probability of buffer overflow in source coding problems \cite{Humblet}. We would like to see if other potential applications emerge from this unified study.



\bibliographystyle{IEEEtran}
\bibliography{Unification}

\end{document}